\newtheorem{thm}{Theorem}[section]
\newtheorem*{thm*}{Theorem}
\newtheorem{lem}[thm]{Lemma}
\newtheorem{cor}[thm]{Corollary}
\newtheorem{defi}[thm]{Definition}
\newtheorem{prop}[thm]{Proposition}
\newtheorem{remark}[thm]{Remark}
\newtheorem{question}[thm]{Question}
\begin{document}
\author{Andreas Bluhm}
\email{andreas.bluhm@ma.tum.de}
\address{Zentrum Mathematik, Technische Universit\"at M\"unchen, Boltzmannstrasse 3, 85748 Garching, Germany}

\author{Ion Nechita}
\email{nechita@irsamc.ups-tlse.fr}
\address{Laboratoire de Physique Th\'eorique, Universit\'e de Toulouse, CNRS, UPS, France}
\title[Joint measurability and the matrix diamond]{Joint measurability of quantum effects\\and the matrix diamond}

\begin{abstract}
In this work, we investigate the joint measurability of quantum effects and connect it to the study of free spectrahedra. Free spectrahedra typically arise as matricial relaxations of linear matrix inequalities. An example of a free spectrahedron is the matrix diamond, which is a matricial relaxation of the $\ell_1$-ball. We find that joint measurability of binary POVMs is equivalent to the inclusion of the matrix diamond into the free spectrahedron defined by the effects under study.
This connection allows us to use results about inclusion constants from free spectrahedra to quantify the degree of incompatibility of quantum measurements. In particular, we completely characterize the case in which the dimension is exponential in the number of measurements. Conversely, we use techniques from quantum information theory to obtain new results on spectrahedral inclusion for the matrix diamond.
\end{abstract}

\maketitle

\tableofcontents

\section{Introduction}

One of the defining properties of quantum mechanics is the existence of incompatible observables, i.e.\ measurements that cannot be performed simultaneously \cite{Heisenberg1927, Bohr1928}.  A classic example of this behavior are the observables of position and momentum. One of the central notions to capture this property of quantum mechanics is joint measurability. Observables are jointly measurable if they arise as marginals from a common observable. This has practical implications for quantum information tasks \cite{Brunner2014}, as only incompatible observables can violate Bell inequalities \cite{Fine1982}.

It is well-known that incompatible observables can be made compatible by adding a sufficient amount of noise \cite{busch2013comparing}. Although
 many works study compatibility questions for concrete observables (see \cite{heinosaari2016} for a topical review), there has also been an interest in how much incompatibility there is in quantum mechanics and other generalized probabilistic theories \cite{busch2013comparing, Gudder2013}. In the present work, we continue this line of research by studying the degree of incompatibility in quantum mechanics in more detail. We will be interested in the compatibility regions for a fixed number of binary measurements in fixed dimension and for different types of noise.

For this, we will use tools from the study of free spectrahedra (see \cite{Helton2013} for a general introduction). Concretely, we are interested in the problem of (free) spectrahedral inclusion \cite{helton_matricial_2013}. Originally, the inclusion of free spectrahedra has been introduced as a relaxation to study the inclusion of ordinary spectrahedra \cite{ben-tal2002tractable,helton2014dilations}. In contrast to that, we will be interested in the inclusion constants for their own sake. Often, results on free spectrahedral inclusion work for large classes of spectrahedra, e.g.\ spectrahedra with symmetries \cite{helton2014dilations, davidson2016dilations}. Recently, results have been found which study maximal and minimal free spectrahedra for the $p$-norm unit balls \cite{passer2018minimal}. It is especially the latter work which is most useful to us. We would also like to mention that another point of contact between quantum information theory and free analysis is the extension (or interpolation) problem for completely positive maps, see \cite{Heinosaari2012,ambrozie2015interpolation}.

In this work, we establish the connection between free spectrahedral inclusion and joint measurability. The matricial relaxation of the $\ell_1$-ball is known as the \emph{matrix diamond} and plays a central role in our setting. We can then use results on inclusion constants for this free spectrahedron to characterize the degree of incompatibility of quantum effects in different settings. Conversely, we translate techniques to prove upper and lower bounds on quantum incompatibility to study spectrahedral inclusion. Let us note that since the problems of joint measurability and quantum steering are closely related \cite{uola2015one}, many of our results can be translated to the steering framework.

\section{Main results}

In this section, we will briefly outline the main findings of our work. Its main contribution is to connect the following two seemingly unrelated problems.

One is the problem of \emph{joint measurability of binary quantum observables}. Given a $g$-tuple of quantum effects $E_1, \ldots E_g$, we can ask the question of how much noise we have to add to the corresponding measurements fo make them jointly measurable. Joint measurability means that there exists a joint POVM $\Set{R_{i_1, \ldots, i_g}}$ from which the binary POVMs we are interested in arise as marginals. Noise can be added in different ways to a measurement. We will mainly consider the case in which we take convex combinations of a quantum measurement with a fair coin, i.e.\
\begin{equation*}
E^\prime:= s E + (1-s) I/2
\end{equation*}
for $s \in [0,1]$. The set of $g$-tuples $s \in [0,1]^g$ which make \emph{any} $g$ binary POVMs of dimension $d$ compatible will be denoted as $\Gamma(g,d)$.

The other problem comes from the field of \emph{free spectrahedra}. A free spectrahedron is a matricial relaxation of an ordinary spectrahedron. The free spectrahedron $\mathcal D_A$ is then the set of self-adjoint matrix $g$-tuples $X$ of arbitrary dimension which fulfill a given linear matrix inequality
\begin{equation*}
\sum_{i=1}^g A_i \otimes X_i \leq I.
\end{equation*}
If we only consider the scalar elements $\mathcal D_A(1)$ of this set, this is just the ordinary spectrahedron defined by the matrix tuple $A$. The inclusion problem for free spectrahedra is to find the scaling factors $s \in \mathbb R_+^g$ such that the implication
\begin{equation}\label{eq:implication}
\mathcal D_A(1) \subseteq\mathcal D_B(1) \Rightarrow s\cdot\mathcal D_A \subseteq\mathcal D_B
\end{equation}
is true. We will be interested in the case in which $\mathcal D_A$ is the matrix diamond, i.e.\ the set of matrices $X$ such that 
\begin{equation*}
\sum_{i = 1}^g \epsilon_i X_i \leq I \qquad \forall \epsilon \in \Set{-1,1}^g.
\end{equation*}
The set of all such $s$ which make the implication in Equation \eqref{eq:implication} true for any $B \in (\mathcal M_d^{sa})^g$ in this case will be written as $\Delta(g,d)$.

The main contribution of our work is to relate these two problems and use this connection to characterize $\Gamma(g,d)$. In Theorem \ref{thm:matrix-diamond-vs-effects}, we find the following:
\begin{thm*}
	Let $E \in \left(\mathcal{M}^{sa}_d\right)^g$ and let $2E -I := (2 E_1 - I_d, \ldots, 2 E_g - I_d)$. We have
	\begin{enumerate}
		\item $\mathcal{D}_{\diamond, g}(1) \subseteq\mathcal{D}_{2 E-I}(1)$ if and only if $E_1, \ldots, E_g$ are quantum effects.
		\item 	$\mathcal{D}_{\diamond, g} \subseteq\mathcal{D}_{2 E-I}$ if and only if $E_1, \ldots, E_g$ are jointly measurable quantum effects.
		\item $\mathcal{D}_{\diamond, g}(k) \subseteq\mathcal{D}_{2 E-I}(k)$ for $k \in [d]$ if an only if for any isometry $V: \mathbb C^k \hookrightarrow \mathbb C^d$, the induced compressions $V^\ast E_1 V, \ldots V^\ast E_g V$ are jointly measurable quantum effects. 
	\end{enumerate}
\end{thm*}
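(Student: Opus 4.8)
The plan is to prove (1) by a direct extreme‑point argument, to reduce the nontrivial direction of (2) to a separation/SDP‑duality argument, and then to deduce (3) from (2) by compression and dilation. For (1): $\mathcal D_{\diamond, g}(1)$ is exactly the $\ell_1$‑ball, since $\max_{\epsilon\in\{-1,1\}^g}\sum_i\epsilon_i x_i=\|x\|_1$, while $\mathcal D_{2E-I}(1)$ is an ordinary (convex) spectrahedron. Hence the inclusion holds iff every extreme point $\pm e_j$ of the $\ell_1$‑ball lies in $\mathcal D_{2E-I}(1)$, i.e.\ iff $\pm(2E_j-I_d)\leq I_d$ for all $j$, which is precisely $0\leq E_j\leq I_d$. (Equivalently: if the $E_j$ are effects then $\|2E_j-I_d\|\leq 1$, so $\|\sum_j x_j(2E_j-I_d)\|\leq\|x\|_1\leq 1$ on the $\ell_1$‑ball; conversely substitute $x=\pm e_j$.)

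For (2), I would first use the ``$\pm1$'' form of joint measurability: $E_1,\dots,E_g$ are jointly measurable iff there are $R_\epsilon\geq 0$ ($\epsilon\in\{-1,1\}^g$) with $\sum_\epsilon R_\epsilon=I_d$ and $\sum_\epsilon\epsilon_i R_\epsilon=2E_i-I_d$ for each $i$. Given such $R_\epsilon$, for any $X\in\mathcal D_{\diamond, g}(n)$,
\[
\sum_{i=1}^g (2E_i-I_d)\otimes X_i=\sum_\epsilon R_\epsilon\otimes\Bigl(\sum_{i=1}^g\epsilon_i X_i\Bigr)\leq\sum_\epsilon R_\epsilon\otimes I_n=I_d\otimes I_n ,
\]
so the inclusion holds. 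For the converse, the set of jointly measurable $g$‑tuples of effects on $\mathbb C^d$ is closed and convex (a continuous image of the compact set of joint POVMs), so if $E$ fails to be jointly measurable, Hahn--Banach plus SDP duality for the value $\sup\{\sum_i\operatorname{Tr}[T_iF_i]:F\text{ jointly measurable}\}$ (strong duality holds, $R_\epsilon=I_d/2^g$ being a Slater point) produces, after rearrangement, self‑adjoint $Z_0,Z_1,\dots,Z_g$ on $\mathbb C^d$ with $Z_0+\sum_i\epsilon_i Z_i\geq 0$ for all $\epsilon$ (hence $Z_0\geq 0$, and after an arbitrarily small perturbation $Z_0>0$) and $\operatorname{Tr}[Z_0]+\sum_i\operatorname{Tr}[Z_i(2E_i-I_d)]<0$. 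Setting $X_i:=Z_0^{-1/2}Z_iZ_0^{-1/2}$, the positivity constraint becomes $-I_d\leq\sum_i\epsilon_i X_i\leq I_d$ for all $\epsilon$, so both $\overline X$ and $-\overline X$ lie in $\mathcal D_{\diamond, g}(d)$ (the matrix diamond is invariant under $X\mapsto -X$ and under entrywise conjugation). If $\mathcal D_{\diamond, g}\subseteq\mathcal D_{2E-I}$ held, this would force $-I_d\otimes I_d\leq\sum_i(2E_i-I_d)\otimes\overline{X_i}\leq I_d\otimes I_d$; pairing with the positive operator $|\omega\rangle\langle\omega|$ for $|\omega\rangle=(Z_0^{1/2}\otimes I_d)\sum_{a}|a\rangle\otimes|a\rangle$ (a Choi‑type vector chosen so the transpose lands on $\overline{X_i}$ and returns $X_i$) gives $\bigl|\sum_i\operatorname{Tr}[Z_i(2E_i-I_d)]\bigr|\leq\operatorname{Tr}[Z_0]$, contradicting the certificate. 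Thus the inclusion implies joint measurability; note in passing that the violating tuple, when it exists, already occurs at level $d$.

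For (3), testing $\mathcal D_{\diamond, g}(k)\subseteq\mathcal D_{2E-I}(k)$ against the scalar tuples $x\cdot I_k$ recovers the level‑$1$ inclusion, so by (1) the $E_i$ — and hence all compressions $V^\ast E_i V$ — are effects; conversely, already taking $V$ a unit vector shows that if all $V^\ast E_i V$ are effects then so are the $E_i$. It remains to match joint measurability with the inclusion, via two facts. \emph{Compression}: for $Y\in\mathcal D_{2E-I}(k)$, conjugating $\sum_i(2E_i-I_d)\otimes Y_i\leq I_d\otimes I_k$ by $V\otimes I_k$ gives $\sum_i(2V^\ast E_i V-I_k)\otimes Y_i\leq I_k\otimes I_k$, so $\mathcal D_{\diamond, g}(k)\subseteq\mathcal D_{2E-I}(k)$ implies $\mathcal D_{\diamond, g}(k)\subseteq\mathcal D_{2V^\ast E V-I_k}(k)$; combined with the observation from the proof of (2) that non‑joint‑measurability of the $k$‑dimensional effects $V^\ast E_i V$ is witnessed already at level $k$, the contrapositive gives the ``only if'' direction. \emph{Dilation}: every unit $\psi\in\mathbb C^d\otimes\mathbb C^k$ has reduced state on $\mathbb C^d$ of rank $\leq k$, hence factors as $\psi=(V\otimes I_k)\psi'$ with $V\colon\mathbb C^k\hookrightarrow\mathbb C^d$ an isometry and $\psi'\in\mathbb C^k\otimes\mathbb C^k$ a unit vector; so if each $V^\ast E V$ is jointly measurable, then $\mathcal D_{\diamond, g}(k)\subseteq\mathcal D_{2V^\ast E V-I_k}(k)$ by (2), and $\langle\psi|\sum_i(2E_i-I_d)\otimes Y_i|\psi\rangle=\langle\psi'|\sum_i(2V^\ast E_i V-I_k)\otimes Y_i|\psi'\rangle\leq 1$ for all $Y\in\mathcal D_{\diamond, g}(k)$, i.e.\ $\mathcal D_{\diamond, g}(k)\subseteq\mathcal D_{2E-I}(k)$.

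I expect the main obstacle to be the converse direction of (2): assembling the duality cleanly — obtaining the dual certificate $Z_0,\dots,Z_g$ in exactly the stated form from the separation, and then choosing the Choi‑type witness with precisely the right conjugation so that this certificate is literally realized as a point of $\mathcal D_{\diamond, g}$ breaking the inclusion. Everything else is a short computation or a reduction to (2).
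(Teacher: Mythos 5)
Your proposal is correct, and parts (1) and (3) essentially coincide with the paper's argument: the extreme-point reduction for (1), and for (3) the compression step $\mathcal D_{2E-I}(k)\subseteq\mathcal D_{2V^\ast EV-I}(k)$ together with the Schmidt-decomposition/dilation argument showing $\mathcal D_{2E-I}(k)=\bigcap_V\mathcal D_{2V^\ast EV-I}(k)$ are exactly the paper's Lemmas \ref{lem:isometrylarger} and \ref{lem:intersection}. Where you genuinely diverge is part (2). The paper routes everything through operator-system machinery: by Lemma \ref{lem:cpspectra} the free inclusion is equivalent to complete positivity of the unital map $v_i\mapsto 2E_i-I$ on $\mathcal{OS}_{\diamond,g}\subseteq\mathbb C^{2^g}$; Arveson's extension theorem plus the fact that positive maps on commutative C$^\ast$-algebras are completely positive then make the joint POVM appear literally as the images $G_\eta=\tilde\Phi(g_\eta)$ of the minimal projections, giving both directions at once. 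You instead prove the easy direction by the one-line identity $\sum_i(2E_i-I)\otimes X_i=\sum_\epsilon R_\epsilon\otimes(\sum_i\epsilon_iX_i)$, and the hard direction by separating a non-jointly-measurable tuple from the compact convex set of jointly measurable ones and converting the dual certificate $(Z_0,\ldots,Z_g)$ into an explicit element $Z_0^{-1/2}Z_iZ_0^{-1/2}$ of $\mathcal D_{\diamond,g}(d)$ whose conjugate violates the inclusion against the Choi-type vector $(Z_0^{1/2}\otimes I)\Omega$. Your computations check out ($Z_0\geq 0$ by averaging over $\pm\epsilon$, the perturbation $Z_0+\delta I$ preserves strict negativity, and the pairing returns $\sum_i\tr[Z_i(2E_i-I)]\leq\tr[Z_0]$). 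What your route buys: it avoids Arveson entirely, makes the incompatibility-witness duality explicit, and produces a violating tuple already at level $d$ — thereby recovering, for this instance, the content of the paper's Corollary \ref{cor:disenough} without invoking the $k$-positivity-implies-CP argument. What it costs is that the Farkas-type alternative for the joint-measurability SDP (closedness of the relevant cone, or strong duality via the Slater point $R_\epsilon=I/2^g$) must be set up carefully; you correctly flag this as the remaining work, and it is routine rather than a conceptual obstacle.
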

This shows that there is a one-to-one correspondence between different levels of the spectrahedral inclusion problem and different degrees of compatibility. Furthermore, we show in Theorem \ref{thm:jm=inclusion} that finding spectral inclusion constants corresponds to making POVMs compatible through adding noise:
\begin{thm*}
It holds that $\Gamma(g,d) = \Delta(g,d)$.
\end{thm*}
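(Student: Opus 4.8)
The plan is to read off $\Gamma(g,d) = \Delta(g,d)$ from Theorem~\ref{thm:matrix-diamond-vs-effects} by means of two elementary reductions: a rescaling identity relating fair-coin noise to a scaling of the defining linear matrix inequality, and a reformulation of the scaled spectrahedral inclusion $s\cdot\mathcal D_A \subseteq \mathcal D_B$.

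First I would record the rescaling identity. For a tuple of effects $E_1,\dots,E_g \in (\mathcal M^{sa}_d)^g$ and $s\in[0,1]^g$, writing $E_i' := s_i E_i + (1-s_i) I_d/2$, one computes
\begin{equation*}
2E_i' - I_d = 2 s_i E_i + (1-s_i) I_d - I_d = s_i(2 E_i - I_d),
\end{equation*}
so the tuple $2E'-I$ equals $s\cdot(2E-I)$ entrywise. Next, since each $s_i$ is a scalar, $\sum_i B_i \otimes (s_i X_i) = \sum_i (s_i B_i)\otimes X_i$, and hence for any tuple $B$ and any $s\in\mathbb R_+^g$ one has $s\cdot\mathcal D_A \subseteq \mathcal D_B$ if and only if $\mathcal D_A \subseteq \mathcal D_{sB}$, where $sB := (s_1 B_1,\dots,s_g B_g)$. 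Taking $B = 2E-I$ and combining the two observations gives
\begin{equation*}
s\cdot\mathcal D_{\diamond,g} \subseteq \mathcal D_{2E-I} \quad\Longleftrightarrow\quad \mathcal D_{\diamond,g} \subseteq \mathcal D_{2E'-I},
\end{equation*}
which by part (2) of Theorem~\ref{thm:matrix-diamond-vs-effects} holds if and only if $E_1',\dots,E_g'$ are jointly measurable quantum effects.

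Both inclusions then follow, using part (1) of the theorem to identify the hypothesis $\mathcal D_{\diamond,g}(1)\subseteq\mathcal D_B(1)$ of the implication~\eqref{eq:implication} with $B/2 + I_d/2$ being a tuple of effects. For $\Gamma(g,d)\subseteq\Delta(g,d)$: given $s\in\Gamma(g,d)$ and $B\in(\mathcal M^{sa}_d)^g$ with $\mathcal D_{\diamond,g}(1)\subseteq\mathcal D_B(1)$, part (1) shows $E:=(B+I_d)/2$ is a tuple of effects, so $E'$ is jointly measurable and the displayed equivalence gives $s\cdot\mathcal D_{\diamond,g}\subseteq\mathcal D_B$. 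For the converse, given $s\in\Delta(g,d)$ and arbitrary effects $E_1,\dots,E_g$, part (1) gives $\mathcal D_{\diamond,g}(1)\subseteq\mathcal D_{2E-I}(1)$, so $s\cdot\mathcal D_{\diamond,g}\subseteq\mathcal D_{2E-I}$, and the displayed equivalence shows the $E_i'$ are jointly measurable. The one bookkeeping point is that $\Delta(g,d)$, a priori defined for $s\in\mathbb R_+^g$, is contained in $[0,1]^g$: if $s_i>1$, take $E_i = I_d$ and $E_j = I_d/2$ for $j\neq i$; then $\mathcal D_{\diamond,g}(1)\subseteq\mathcal D_{2E-I}(1)$ by part (1), but $E_i' = \frac{1+s_i}{2} I_d$ is not an effect, so by part (2) the scaled inclusion fails and $s\notin\Delta(g,d)$.

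I do not expect a genuine obstacle, since essentially all of the content is already packaged into Theorem~\ref{thm:matrix-diamond-vs-effects}; what remains is the dictionary above. If one must name the crux, it is the observation that adding fair-coin noise to $E_i$ with weight $s_i$ is precisely the operation of scaling the self-adjoint matrix $2E_i - I_d$ by $s_i$ — the identity $2E'-I = s\cdot(2E-I)$ — which is exactly what forces the compatibility region and the inclusion region to coincide.
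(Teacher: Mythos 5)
Your proposal is correct and follows essentially the same route as the paper: the key identity $2E'-I = s\cdot(2E-I)$, the reformulation $s\cdot\mathcal D_{\diamond,g}\subseteq\mathcal D_{2E-I}\iff\mathcal D_{\diamond,g}\subseteq\mathcal D_{2E'-I}$, and the bijection $X\mapsto 2X-I$ to translate the level-1 hypothesis, all exactly as in the paper's proof of Theorem~\ref{thm:jm=inclusion}. Your explicit check that $\Delta(g,d)\subseteq[0,1]^g$ is a small bookkeeping point the paper leaves implicit, but it does not change the argument.
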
 
This result allows us to use results on spectrahedral inclusion in order to characterize the set $\Gamma(g,d)$. We find that the higher dimensional generalization of the positive quarter of the unit circle plays an important role in this:
\begin{equation*}
\mathrm{QC}_g := \Set{s \in \mathbb R_+^g: \sum_{i = 1}^g s_i^2 \leq 1}.
\end{equation*}
The adaptation of some results of \cite{passer2018minimal} allows us to show in Theorem \ref{thm:LB-Passer-et-al}:
\begin{thm*}
Let $g$, $d \in \mathbb N$. Then, it holds that $\mathrm{QC}_g \subseteq \Gamma(g,d)$. In other words, for any $g$-tuple $E_1, \ldots, E_g$ of quantum effects and any positive vector $s \in \mathbb R_+^g$ with $\|s\|_2 \leq 1$, the $g$-tuple of noisy effects
$$E'_i = s_i E_i + (1-s_i) \frac{I_d}{2}$$
is jointly measurable.
\end{thm*}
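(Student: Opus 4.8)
The plan is to translate the statement into an inclusion of free spectrahedra and then settle it with two applications of Cauchy--Schwarz. First I would observe that each $E'_i = s_i E_i + (1-s_i) I_d/2$ is again an effect and that $2 E'_i - I_d = s_i(2 E_i - I_d)$; writing $A_i := 2 E_i - I_d$, the hypothesis that the $E_i$ are effects becomes $-I_d \leq A_i \leq I_d$, i.e.\ $\|A_i\| \leq 1$. By part (2) of Theorem~\ref{thm:matrix-diamond-vs-effects} applied to the tuple $E'$, joint measurability of $E'_1, \dots, E'_g$ is equivalent to $\mathcal{D}_{\diamond, g} \subseteq \mathcal{D}_{2E'-I}$; unwinding the defining linear matrix inequality, this reduces the whole problem to the single operator inequality
\[ \sum_{i=1}^g s_i\, A_i \otimes X_i \leq I_d \otimes I_n \quad \text{for all } n \text{ and all } X = (X_1, \dots, X_g) \in \mathcal{D}_{\diamond, g}(n), \]
valid whenever $\|A_i\| \leq 1$ and $\|s\|_2 \leq 1$. (Alternatively one can pass through Theorem~\ref{thm:jm=inclusion} and show $\mathrm{QC}_g \subseteq \Delta(g,d)$ directly; it comes down to the same inequality.)

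The key auxiliary fact I would establish is that membership in the matrix diamond controls the ``second moment'': if $X \in \mathcal{D}_{\diamond, g}(n)$ then $\sum_{i=1}^g X_i^2 \leq I_n$. Indeed, the defining inequalities $\sum_i \epsilon_i X_i \leq I_n$ for all $\epsilon \in \{-1,1\}^g$, applied to $\epsilon$ and to $-\epsilon$, give $(\sum_i \epsilon_i X_i)^2 \leq I_n$; averaging over the uniform distribution on $\{-1,1\}^g$ and using that distinct signs are uncorrelated collapses the left-hand side to $\sum_i X_i^2$.

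Granting this, I would prove the operator inequality by testing against an arbitrary unit vector $\xi \in \mathbb{C}^d \otimes \mathbb{C}^n$: writing $\langle \xi, (A_i \otimes X_i)\xi\rangle = \langle (A_i \otimes I_n)\xi,\ (I_d \otimes X_i)\xi\rangle$ and applying Cauchy--Schwarz together with $\|A_i\| \leq 1$ bounds the $i$-th term by $\langle \xi, (I_d \otimes X_i^2)\xi\rangle^{1/2}$; a second Cauchy--Schwarz, now in $\mathbb{R}^g$, then yields $\sum_i s_i \langle \xi, (A_i \otimes X_i)\xi\rangle \leq \|s\|_2 \cdot \langle \xi, (I_d \otimes \sum_i X_i^2)\xi\rangle^{1/2} \leq 1$, using the second-moment bound and $\|s\|_2 \leq 1$. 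Since this holds for every unit $\xi$, the inequality follows and the proof is complete.

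I expect the genuine content to lie in the reduction to, and the right formulation of, the operator inequality $\sum_i s_i A_i \otimes X_i \leq I$. A direct attempt to write down the joint POVM --- for instance the ansatz $R_\epsilon = 2^{-g}(I_d + \sum_i \epsilon_i s_i A_i)$, which has the correct marginals --- fails because these operators need not be positive (take all $A_i = I_d$ with $\sum_i s_i > 1$). It is the free-spectrahedral reformulation, together with the observation that the matrix diamond bounds $\sum_i X_i^2$, that makes the estimate go through; this is the adaptation of the results of \cite{passer2018minimal} referred to above.
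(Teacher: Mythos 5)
Your proof is correct, and its first half coincides with the paper's: the reduction via Theorem \ref{thm:matrix-diamond-vs-effects} (equivalently, Theorem \ref{thm:jm=inclusion}) together with the second-moment bound $\sum_i X_i^2 \leq I$ for $X$ in the matrix diamond is exactly Lemma \ref{lem:diamond-in-ball}, proved by the same squaring-and-averaging argument. Where you genuinely diverge is in the passage from $\sum_i X_i^2 \leq I$, $\|A_i\|_\infty \leq 1$ and $\|s\|_2 \leq 1$ to the target inequality $\sum_i s_i A_i \otimes X_i \leq I$: you test against a unit vector and apply Cauchy--Schwarz twice (once in the Hilbert space, once in $\mathbb{R}^g$), whereas the paper, following \cite{passer2018minimal}, first establishes $\sum_i s_i |X_i| \leq I$ via a matrix Cauchy--Schwarz inequality, interprets $\{s_i Y_i\} \sqcup \{s_i Z_i\}$ as a partial POVM, invokes the Naimark dilation theorem to produce commuting normal operators $R_i = P_i - Q_i$ with joint spectrum in $\mathcal{D}_{\diamond,g}(1)$, and then integrates against the joint spectral measure using the level-one inclusion. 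Your route is shorter and more elementary --- it avoids dilation theory entirely --- and it proves the same strengthened statement $s \cdot \mathrm{MatBall}_g \subseteq \mathcal{D}_A$, since only $\sum_i X_i^2 \leq I$ and $\|A_i\|_\infty \leq 1$ enter the estimate. What the paper's dilation argument buys in exchange is structural information: it exhibits every point of $s \cdot \mathrm{MatBall}_g$ as a compression of a commuting tuple with joint spectrum in the $\ell_1$-ball, i.e.\ membership in the minimal matrix convex set over $\mathcal{D}_{\diamond,g}(1)$, which is the form in which the result appears in \cite{passer2018minimal}. Your closing observation that the naive ansatz $R_\epsilon = 2^{-g}(I + \sum_i \epsilon_i s_i A_i)$ fails positivity is also accurate, and correctly identifies why some nontrivial input is needed.
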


If the dimension of the effects under study is exponential in the number of measurements, Theorem \ref{thm:UB-anti-commuting-F} provides us with a converse result. Again, this theorem is based on a result of \cite{passer2018minimal}.
\begin{thm*}
	Let $g \geq 2$, $d \geq 2^{\lceil (g-1)/2 \rceil}$. Then, $\Gamma(g,d)\subseteq \mathrm{QC}_g.$
\end{thm*}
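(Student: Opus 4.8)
The plan is to argue by contraposition: assuming $s \in [0,1]^g$ with $\|s\|_2 > 1$, I will exhibit a $g$-tuple of effects in dimension $d$ whose $s$-noisy versions are not jointly measurable. The witness is built from $g$ anti-commuting self-adjoint unitaries $F_1, \ldots, F_g$, which exist precisely on $\mathbb{C}^{d_0}$ with $d_0 = 2^{\lceil (g-1)/2 \rceil}$ (the irreducible representation of the complex Clifford algebra); the hypothesis $d \geq d_0$ is exactly what is needed to make room for them. Concretely, I would fix an isometry $V : \mathbb{C}^{d_0} \hookrightarrow \mathbb{C}^d$ and set $E_i := V \tfrac{I_{d_0} + F_i}{2} V^\ast + \tfrac12 (I_d - V V^\ast)$, which are effects on $\mathbb{C}^d$ with $V^\ast E_i' V = \tfrac{I_{d_0} + s_i F_i}{2}$ for the noisy effects $E_i' = s_i E_i + (1-s_i) I_d/2$. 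Since compressing a joint POVM by $R \mapsto V^\ast R V$ again yields a joint POVM with the correct marginals, this reduces the problem to showing that $\tfrac{I_{d_0} + s_1 F_1}{2}, \ldots, \tfrac{I_{d_0} + s_g F_g}{2}$ on $\mathbb{C}^{d_0}$ are not jointly measurable when $\|s\|_2 > 1$.

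Next, since $2 \cdot \tfrac{I_{d_0} + s_i F_i}{2} - I_{d_0} = s_i F_i$, I would invoke Theorem \ref{thm:matrix-diamond-vs-effects}(2) to restate this as the failure of the free spectrahedral inclusion $\mathcal{D}_{\diamond, g} \not\subseteq \mathcal{D}_{(s_1 F_1, \ldots, s_g F_g)}$ (equivalently, via Theorem \ref{thm:jm=inclusion}, a statement about the inclusion constants $\Delta(g,d)$), which is where the techniques of \cite{passer2018minimal} enter: the key step is to produce an explicit element of the matrix diamond that leaves $\mathcal{D}_{(s_1 F_1, \ldots, s_g F_g)}$. I would take $c := s/\|s\|_2$ (so $\|c\|_2 = 1$), let $\overline{F_i} = F_i^T$ denote the entrywise complex conjugate of $F_i$ — again a tuple of anti-commuting self-adjoint unitaries — and consider $X := (c_1 \overline{F_1}, \ldots, c_g \overline{F_g})$ at matrix level $d_0$. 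The Clifford relations give $\big(\sum_{i} \epsilon_i X_i\big)^2 = \|c\|_2^2\, I_{d_0} = I_{d_0}$ for every $\epsilon \in \Set{-1,1}^g$, whence $X \in \mathcal{D}_{\diamond, g}(d_0)$.

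Finally I would test the operator $M := \sum_{i} (s_i F_i) \otimes X_i = \sum_i s_i c_i\, F_i \otimes \overline{F_i}$ against the unnormalized maximally entangled vector $\Omega = \sum_{k=1}^{d_0} e_k \otimes e_k$, using the identity $(A \otimes B)\Omega = (I \otimes B A^T)\Omega$. Because $F_i$ is self-adjoint we have $F_i^T = \overline{F_i}$, so $\overline{F_i}\,F_i^T = (\overline{F_i})^2 = I$ and hence $(F_i \otimes \overline{F_i})\Omega = \Omega$ for each $i$; this gives $M\Omega = \big(\sum_i s_i c_i\big)\Omega = \|s\|_2\, \Omega$, so $\|s\|_2 > 1$ is an eigenvalue of $M$ and $M \not\leq I$. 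Thus $X \notin \mathcal{D}_{(s_1 F_1, \ldots, s_g F_g)}$, the inclusion fails, and we are done. The step requiring the most care is the choice of witness: the naive candidate $(F_1/\sqrt g, \ldots, F_g/\sqrt g) \in \mathcal{D}_{\diamond, g}$ only certifies $\|s\|_1 > \sqrt g$, which is strictly weaker than $\|s\|_2 > 1$, so it is essential to pass to the $s$-weighted conjugated tuple together with the maximally entangled test vector — this is precisely the refinement of \cite{passer2018minimal} that yields the sharp region $\mathrm{QC}_g$ with unequal scaling parameters $s_i$.
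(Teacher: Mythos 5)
Your proof is correct and follows essentially the same route as the paper's: anti-commuting self-adjoint unitaries $F_i$, the normalized tuple $(c_i\overline{F_i})$ placed in the matrix diamond via the Clifford squaring identity, and the maximally entangled vector detecting the value $\|s\|_2$ in $\sum_i s_i c_i\, F_i \otimes \overline{F_i}$ (the paper phrases this as $\Delta^0(g,d)\subseteq\mathrm{QC}_g$ for the spectrahedron $\mathcal D_{\overline F}$, whereas you phrase it contrapositively on the joint-measurability side). Your explicit isometry embedding handling general $d \geq 2^{\lceil(g-1)/2\rceil}$, where the unitaries live only in dimension $2^{\lceil(g-1)/2\rceil}$, is a welcome extra bit of care that the paper glosses over.
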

Thus, for $g \geq 2$, $d \geq 2^{\lceil (g-1)/2 \rceil}$ we infer that $\Gamma(g,d) = \mathrm{QC}_g$; this equality was known previously only in the case $g=2$. However, this can no longer be the case for many measurements in low dimensions, as we point out in Section \ref{sec:discussion}. For other types of noise added to quantum measurements, we can use similar results to give upper and lower bounds on the compatibility regions. The bounds we obtain with our techniques improve greatly on past results in the quantum information literature. As an example, the best lower bound in the symmetric case came from cloning and was of order $1/g$ for fixed $g$ and large $d$, see Proposition \ref{prop:Gamma-clone-vs-Gamma-and-hat-Gamma} . Our results yield a lower bound of $1/\sqrt g$, which turns out to be exact in the regime $g \ll d$; we refer the reader to Section \ref{sec:discussion} for a detailed comparison of these bounds.

Conversely, we can use techniques from quantum information such as asymmetric cloning (Section \ref{sec:cloning}) to give bounds on spectrahedral inclusion in different settings. In particular, we introduce in this work a generalization of the notion of inclusion constants from \cite{helton2014dilations} in two directions: first, by restricting both the size and the number of the matrices appearing in the spectrahedron, and then by allowing asymmetric scalings of the spectrahedra, see Definition \ref{def:inclusion-constants}. Our contribution to the inclusion theory of free spectrahedra is going beyond the results from \cite{passer2018minimal}, by studying the asymmetric and size-dependent inclusion constants.

\section{Concepts from Quantum Information Theory}
In this section, we will start by reviewing some notions from quantum information theory related to measurements. Subsequently, we will define several versions of incompatibility of quantum measurements and show basic relations between them. For an introduction to the mathematical formalism of quantum mechanics see \cite{heinosaari2011} or \cite{watrous2018theory}, for example.

Before we move to the quantum formalism, let us introduce some basic notation. For brevity, we will write $[n] := \left\{1, \ldots, n\right\}$ for $n \in \mathbb{N}$ and $\mathbb{R}_+^g$ for $\Set{x \in \mathbb{R}^g: x_i \geq 0 ~\forall i \in [g]}$, $g \in \mathbb{N}$. Additionally, $\lceil \cdot \rceil: \mathbb R \to \mathbb Z$ will be the ceiling function. Furthermore, for $n$, $m \in \mathbb{N}$ let $\mathcal{M}_{n,m}$ be the set of complex $n \times m$ matrices. If $m = n$, we will just write $\mathcal{M}_n$. We will write $\mathcal{M}_n^\mathrm{sa}$ for the self-adjoint matrices and $\mathcal U_d$ for the unitary $d \times d$ matrices. $I_n \in \mathcal{M}_n$ will be the identity matrix. We will often drop the index if the dimension is clear from the context. For $A \in (\mathcal M_d^{sa})^g$, let $\mathcal{O}\mathcal{S}_A$ be the \emph{operator system} defined by the $g$-tuple $A$, i.e.\
\begin{equation*}
	\mathcal{O}\mathcal{S}_A := \mathrm{span}\Set{I_d, A_i : i \in [g]}.
\end{equation*}
Moreover, we will often write for such tupels $2A-I := (2A_1 - I_d, \ldots, 2A_g - I_d)$ and $V^\ast A V := (V^\ast A_1 V, \ldots, V^\ast A_g V)$ for $V \in \mathcal M_{d,k}$, $k \in \mathbb N$.

A quantum mechanical system is described by its \emph{state} $\rho \in \mathcal{S}(\mathcal{H})$, where $\mathcal{H}$ is the Hilbert space associated with the system and 
\begin{equation*}
\mathcal{S}(\mathcal{H}) := \Set{\rho \in \mathcal{B}(\mathcal{H}): \rho \geq 0, \tr[\rho] = 1}.
\end{equation*} 
In the present work, all Hilbert spaces will be finite dimensional. To describe transformations between quantum systems, we will use the concept of completely positive maps. Let $\mathcal T: \mathcal{B}(\mathcal{H}) \to \mathcal{B}(\mathcal{K})$ be a linear map with $\mathcal H$, $\mathcal K$ two Hilbert spaces. This map is $k$-positive if for $k \in \mathbb N$, the map $\mathcal T \otimes \mathrm{Id}_k :  \mathcal{B}(\mathcal{H}) \otimes \mathcal M_k \to \mathcal{B}(\mathcal{K}) \otimes \mathcal M_k$ is a positive map. A map is called completely positive if it is $k$-positive for all $k \in \mathbb N$. If this map is additionally trace preserving, it is called a \emph{quantum channel}.

Let $\mathrm{Eff}_d$ be the set of $d$-dimensional \emph{quantum effects}, i.e.\
\begin{equation*}
\mathrm{Eff}_d := \Set{E \in \mathcal{M}_d^\mathrm{sa}: 0 \leq E \leq I}.
\end{equation*}
Effect operators are useful to describe quantum mechanical measurements. In quantum information theory, measurements correspond to \emph{positive operator valued measures} (POVMs). A POVM is a set $\Set{E_i}_{i \in \Sigma}$, $E_i \in \mathrm{Eff}_d$ for all $i \in \Sigma$, such that 
\begin{equation*}
\sum_{i \in \Sigma} E_i = I.
\end{equation*}
Here, $\Sigma$ is the set of measurement outcomes, which we will assume to be finite for simplicity and equal to $[m]$ for some $m \in \mathbb{N}$. For the case of binary POVMs ($m = 2$), we will identify the POVM $\Set{E, I - E}$ with its effect operator $E \in \mathrm{Eff}_d$.

If a collection of POVMs can be written as marginals of a common POVM with more outcomes, we will say that they are jointly measurable (see \cite{heinosaari2016} for an introduction to the topic).
\begin{defi}[Jointly measurable POVMs]
We consider a collection of $d$-dimensional POVMs $\Set{E^{(i)}_j}_{j \in [m_i]}$ where $m_i \in \mathbb{N}$ for all $i \in [g]$, $g \in \mathbb{N}$. These POVMs are \emph{jointly measurable} or \emph{compatible} if there is a $d$-dimensional POVM $\Set{R_{j_1, \ldots, j_g}}$ with $j_i \in [m_i]$ such that for all $u \in [g]$ and $v \in [m_u]$,
\begin{equation*}
E^{(u)}_v =  \sum_{\substack{j_i \in [m_i] \\i \in [g]\setminus \Set{u}}} R_{j_1, \ldots ,j_{u-1}, v, j_{u+1}, \ldots, j_g}.
\end{equation*}
\end{defi}
It is well-known \cite{heinosaari2016} that not all quantum mechanical measurements are compatible. In concrete situations, the joint measurability of POVMs can be checked using a semidefinite program (SDP) (see e.g.\ \cite{Wolf2009measurements}).  Note that the SDP for $g$ binary POVMs has $2^g$ variables, so when the number of effects $g$ is large, it becomes computationally costly to decide compatibility. However, incompatible measurements can be made compatible by adding noise to the respective measurements. A trivial measurement is a POVM in which all effects are proportional to the identity. Adding noise to a measurement then means taking a convex combination of the original POVM and a trivial measurement. In order to quantify incompatibility of measurements, we can define several sets which differ in the type of noise we allow. We will restrict ourselves to binary POVMs in this work. For our first set, we allow different types of noise for every POVM:
\begin{equation*}
\Gamma^{all}(g,d) := \Set{s \in [0,1]^g: \forall E_1, \ldots E_g \in \mathrm{Eff}_d, \exists a \in [0,1]^g \mathrm{~s.t.~} s_i E_i + (1-s_i) a_i I\mathrm{~are~compatible}}.
\end{equation*}
Another possibility is to consider only balanced noise:
\begin{equation*}
\Gamma(g,d) := \Set{s \in [0,1]^g: \mathrm{~} s_i E_i + \frac{1-s_i}{2} I\mathrm{~are~compatible~}\forall E_1, \ldots E_g \in \mathrm{Eff}_d}.
\end{equation*}
Sometimes, it is inconvenient that the map from the original measurements to the ones with added noise is non-linear in the effect operators. To remedy this, we define
\begin{equation*}
\Gamma^{lin}(g,d) := \Set{s \in [0,1]^g: s_i E_i + (1-s_i) \frac{\tr[E_i]}{d} I\mathrm{~are~compatible~}\forall E_1, \ldots E_g \in \mathrm{Eff}_d}.
\end{equation*}
The restriction of this set to equal weights has appeared before in the context of quantum steering \cite{Heinosaari2015, Uola2014}.

Instead of restricting the type of noise allowed, we can also consider less general POVMs and restrict to those which are unbiased:
\begin{equation*}
\Gamma^0(g,d):= \Set{s \in [0,1]^g: s_i E_i + \frac{1-s_i}{2} I\mathrm{~are~compatible~}\forall E_1, \ldots E_g \in \mathrm{Eff}_d \mathrm{~s.t.~}\tr[E_i] = \frac{d}{2}}.
\end{equation*}
Finally, let us introduce a set of parameters related to (asymmetric) cloning of quantum states
\begin{align}
	\label{eq:def-Gamma-clone} \Gamma^{clone}(g,d) := \{s &\in [0,1]^g \, : \, \exists \mathcal T: \mathcal M_d^{\otimes g} \to \mathcal M_d \text{ unital and completely positive s.t. } \\
	\nonumber &\forall X \in \mathcal M_d, \, \forall i \in [g], \,  \mathcal T\left(I^{\otimes (i-1)} \otimes X \otimes I^{\otimes (n-i)}\right) = s_i X + (1-s_i)\frac{\tr[X]}{d} I\}.
\end{align}
All these sets are convex sets, as the next proposition shows.
\begin{prop} \label{prop:Gamma_convex}
$\Gamma^\#(g,d)$ is convex for $d$, $g \in \mathbb N$ and $\# \in \Set{all, \emptyset,lin,0,clone}$.
\end{prop}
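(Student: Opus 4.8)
The plan is, for each $\# \in \{all, \emptyset, lin, 0, clone\}$, to take $s, t \in \Gamma^\#(g,d)$ and $\lambda \in [0,1]$ and exhibit a witness showing that $\lambda s + (1-\lambda) t \in \Gamma^\#(g,d)$. The whole argument rests on three elementary facts: the set of $d$-dimensional POVMs with a fixed finite outcome set is convex; forming the marginals of a joint POVM is a linear operation on the joint POVM; and, with the underlying effects $E_1,\dots,E_g$ held fixed, the noisy effect operators appearing in every one of these definitions depend \emph{affinely} on the noise parameter. The last point is the computational core: for any fixed self-adjoint operators $N_1,\dots,N_g$ one has
\[
  (\lambda s_i + (1-\lambda)t_i)\, E_i + \bigl(1 - \lambda s_i - (1-\lambda)t_i\bigr) N_i = \lambda\bigl(s_i E_i + (1-s_i) N_i\bigr) + (1-\lambda)\bigl(t_i E_i + (1-t_i) N_i\bigr).
\]

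First I would handle $\Gamma(g,d)$ together with $\Gamma^0(g,d)$ and $\Gamma^{lin}(g,d)$, which all have the same shape. Fix effects $E_1, \dots, E_g$ (additionally with $\tr[E_i] = d/2$ in the case of $\Gamma^0$). By assumption there are joint POVMs $\{R^s_{j_1,\dots,j_g}\}$ and $\{R^t_{j_1,\dots,j_g}\}$ witnessing the compatibility of the $s$-noisy and the $t$-noisy versions of $E_1, \dots, E_g$. Then $\{\lambda R^s_{j_1,\dots,j_g} + (1-\lambda)R^t_{j_1,\dots,j_g}\}$ is again a POVM by convexity of the POVM set, and by linearity of marginalization its $i$-th marginal is the convex combination (with weight $\lambda$) of the $i$-th marginals of $R^s$ and $R^t$, which by the displayed identity with $N_i = I/2$, respectively $N_i = \tfrac{\tr[E_i]}{d} I$, is exactly the $(\lambda s + (1-\lambda)t)$-noisy version of $E_i$. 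Hence $\lambda s + (1-\lambda) t$ lies in the corresponding set.

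The case $\Gamma^{all}(g,d)$ needs slightly more care because of the existential quantifier on the internal noise vector. Given $E_1, \dots, E_g$, pick $a^s, a^t \in [0,1]^g$ witnessing $s$ and $t$ respectively, and average the two associated joint POVMs as before. The $i$-th marginal of the averaged joint POVM equals $(\lambda s_i + (1-\lambda)t_i) E_i + \bigl(\lambda(1-s_i)a^s_i + (1-\lambda)(1-t_i)a^t_i\bigr) I$, and since $1-\lambda s_i - (1-\lambda)t_i = \lambda(1-s_i) + (1-\lambda)(1-t_i)$ this is the $(\lambda s+(1-\lambda)t)$-noisy effect with internal noise level
\[
  a_i := \frac{\lambda(1-s_i)a^s_i + (1-\lambda)(1-t_i)a^t_i}{\lambda(1-s_i) + (1-\lambda)(1-t_i)} \in [0,1],
\]
the membership in $[0,1]$ holding because $a_i$ is a convex combination of $a^s_i$ and $a^t_i$; in the degenerate case $s_i = t_i = 1$ the denominator vanishes, but then the noisy effect is just $E_i$ and any choice of $a_i$ works. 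Finally, $\Gamma^{clone}(g,d)$ is the easiest: if $\mathcal T_s, \mathcal T_t$ are the unital completely positive maps witnessing $s$ and $t$, then $\lambda \mathcal T_s + (1-\lambda)\mathcal T_t$ is again unital and completely positive, and by linearity it realizes the defining marginal identity with parameter $\lambda s + (1-\lambda)t$. I expect the only genuine obstacle to be the bookkeeping in the $\Gamma^{all}$ case — identifying the averaged internal noise $a_i$ correctly as a convex combination and disposing of the $s_i = t_i = 1$ boundary — while everything else is immediate from convexity of the POVM set (respectively of the unital completely positive maps) together with linearity of the marginal map and the affine dependence recorded in the first display.
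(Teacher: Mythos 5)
Your proof is correct and follows essentially the same route as the paper: convex-combining the two witnessing joint POVMs (respectively the two unital completely positive maps) and using the affine dependence of the noisy effects on the noise parameter. The paper only writes out the $\Gamma(g,d)$ case and declares the others ``very similar''; your explicit treatment of the existential quantifier in $\Gamma^{all}$, including the degenerate case $s_i=t_i=1$, fills in exactly the bookkeeping the paper leaves to the reader.
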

\begin{proof}
We only prove the proposition for $\Gamma(g,d)$ here, because the proofs for the other sets are very similar. Let $s$, $t \in \Gamma(g,d)$ and $\lambda \in [0,1]$. Let further $E_1, \ldots, E_g \in \mathrm{Eff}_d$. By the choice of $s$ and $t$, we know that the $s_i E_i + (1-s_i)I/2$ and the $t_i E_i + (1-t_i)I/2$ are each jointly measurable and give rise to joint POVMs $R_{i_1, \ldots, i_g}$ and $R^\prime_{i_1, \ldots, i_g}$, respectively. Then,
\begin{equation*}
\lambda R_{i_1, \ldots, i_g} +(1-\lambda) R^\prime_{i_1, \ldots, i_g}
\end{equation*}
is again a POVM and it can easily be verified that 
\begin{align*}
&\sum_{\substack{i_j \in [2] \\j \in [g]\setminus \Set{u}}} \lambda R_{i_1, \ldots, i_{u-1}, 1,i_{u+1} ,\ldots, i_g} +(1-\lambda) R^\prime_{i_1, \ldots, i_{u-1}, 1,i_{u+1},\ldots , i_g} \\&= [\lambda s_u + (1-\lambda)t_u]E_u +[1- (\lambda s_u + (1-\lambda)t_u)]I/2.
\end{align*}
As the effects were arbitrary, this proves the assertion for $\Gamma(g,d)$.
\end{proof}
\begin{remark}\label{rmk:trivial_point}
Using convexity, it can easily be
 seen that $(1/g, \ldots, 1/g) \in \Gamma^{\#}(g,d)$, where $\# \in \Set{all, \emptyset, lin, 0, clone}$. It can be  seen that the standard basis vector $e_i$ is in each of the sets for $i \in [g]$. The above statement then follows by convexity. See also \cite{heinosaari2016} for an intuitive argument. 
\end{remark}

In the next proposition, we collect some relations between the different sets we have defined;
\begin{prop}\label{prop:inclusions-Gamma}
Let $g$, $d \in \mathbb{N}$. Then the following inclusions are true:
\begin{enumerate}
\item $\Gamma(g,d) \subseteq \Gamma^{all}(g,d)$
\item $\Gamma^{lin}(g,d) \subseteq \Gamma^{all}(g,d)$
\item $\Gamma(g,d)\subseteq \Gamma^0(g,d)$
\item $\Gamma^{lin}(g,d)\subseteq \Gamma^0(g,d)$
\item $\Gamma^{clone}(g,d) \subseteq \Gamma^{lin}(g,d)$
\item $\Gamma^{0}(g,2d) \subseteq \Gamma(g,d)$
\item $F\left(\Gamma^{all}(g,d)\right) \subseteq \Gamma(g,d)$,
where 
$$F:[0,1]^g \to [0,1]^g, \qquad F(s_1,\ldots, s_g) = \left( \frac{s_1}{2-s_1}, \ldots, \frac{s_g}{2-s_g} \right).$$
\end{enumerate}
\end{prop}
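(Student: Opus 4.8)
My plan is to prove each of the seven inclusions essentially by unwinding the definitions and producing, given a joint POVM witnessing membership in the source set, a joint POVM witnessing membership in the target set. Items (1)--(5) are the easy ones. For (1), given $s \in \Gamma(g,d)$, just take $a_i = 1/2$ for all $i$ to satisfy the existential quantifier in $\Gamma^{all}$. For (2), given $s \in \Gamma^{lin}$, take $a_i = \tr[E_i]/d$; since $\tr[E_i]/d \in [0,1]$ whenever $0 \le E_i \le I_d$, this is a valid choice. For (3) and (4), observe that when $\tr[E_i] = d/2$ the noisy effect $s_i E_i + (1-s_i)\tfrac{\tr[E_i]}{d} I$ coincides with $s_i E_i + \tfrac{1-s_i}{2} I$, so restricting the universally quantified family of effects to the unbiased ones only weakens the requirement; hence $\Gamma(g,d)$ and $\Gamma^{lin}(g,d)$ both sit inside $\Gamma^0(g,d)$. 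For (5), given the unital completely positive map $\mathcal T \colon \mathcal M_d^{\otimes g} \to \mathcal M_d$ from the definition of $\Gamma^{clone}$, and given effects $E_1,\dots,E_g$, define
\begin{equation*}
R_{i_1,\dots,i_g} := \mathcal T\!\left( F^{(1)}_{i_1} \otimes \cdots \otimes F^{(g)}_{i_g}\right), \qquad F^{(k)}_1 := E_k, \quad F^{(k)}_2 := I_d - E_k;
\end{equation*}
positivity of the $R_{i_1,\dots,i_g}$ follows from complete positivity of $\mathcal T$ (each tensor factor is positive), summation over all indices gives $\mathcal T(I^{\otimes g}) = I_d$ by unitality so it is a POVM, and the marginal over all indices except the $u$-th telescopes to $\mathcal T(I^{\otimes(u-1)} \otimes E_u \otimes I^{\otimes(g-u)}) = s_u E_u + (1-s_u)\tfrac{\tr[E_u]}{d} I$, which is exactly the noisy effect in $\Gamma^{lin}$.

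The two inclusions that require an actual idea are (6) and (7). For (6), the point is a dimension-doubling trick: given unbiased effects $E_1,\dots,E_g \in \mathrm{Eff}_d$ (so $\tr[E_i] = d/2$), I want to build unbiased effects in dimension $2d$ and exploit $s \in \Gamma^0(g,2d)$. I would like to say it suffices to handle arbitrary effects in dimension $d$, so the natural move is: given arbitrary $E_1,\dots,E_g \in \mathrm{Eff}_d$, form $\tilde E_i := E_i \oplus (I_d - E_i) \in \mathcal M_{2d}$, which satisfies $0 \le \tilde E_i \le I_{2d}$ and $\tr[\tilde E_i] = d = (2d)/2$, i.e.\ they are unbiased. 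Since $s \in \Gamma^0(g,2d)$, the noisy effects $s_i \tilde E_i + \tfrac{1-s_i}{2} I_{2d}$ admit a joint POVM $\{\tilde R_{i_1,\dots,i_g}\}$ in $\mathcal M_{2d}$; compressing each $\tilde R_{i_1,\dots,i_g}$ to the top-left $d\times d$ block (equivalently, conjugating by the isometry $\mathbb C^d \hookrightarrow \mathbb C^{2d}$ onto the first summand) yields operators $R_{i_1,\dots,i_g} \in \mathcal M_d$ that are still positive, still sum to $I_d$, and whose marginals are the compressions of $s_i\tilde E_i + \tfrac{1-s_i}{2}I_{2d}$, which on the first block is exactly $s_i E_i + \tfrac{1-s_i}{2}I_d$. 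For (7), given $s \in \Gamma^{all}(g,d)$ and effects $E_1,\dots,E_g$, I apply the $\Gamma^{all}$ property to obtain $a \in [0,1]^g$ and a joint POVM for $s_i E_i + (1-s_i)a_i I$; the task is to massage this into a joint POVM for the balanced noisy effects $F(s)_i E_i + \tfrac{1-F(s)_i}{2} I$ with $F(s)_i = s_i/(2-s_i)$. I would look for an affine relabeling of outcomes: note $s_i E_i + (1-s_i)a_i I$ and $s_i(I-E_i) + (1-s_i)(1-a_i)I$ are the two effects of the $i$-th noisy binary POVM, and a suitable convex combination of the joint POVM with its "flipped" versions (flipping subsets of the coordinates) produces the symmetric noise level; matching the resulting scalar coefficient of $E_i$ to $s_i/(2-s_i)$ is the computation that pins down the averaging weights, and the factor $2-s_i$ in the denominator is the telltale sign that one averages the given POVM against a single coordinate-flip with equal weight $1/2$ in each coordinate independently.

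The main obstacle I anticipate is item (7): one has to be careful that the coordinate-flips can be applied independently across the $g$ POVMs (so that the joint structure is preserved) and that the resulting operator family is genuinely a POVM with the correct marginals; the arithmetic showing $s_i \mapsto s_i/(2-s_i)$ is exactly what the symmetrization produces is the crux, and getting the bookkeeping of which outcome-strings get averaged right is where an error would most likely creep in. Item (6) is conceptually clean once the direct-sum construction is in hand, but one should double-check that the compression does not break the POVM normalization — it does not, precisely because $\tilde R_{i_1,\dots,i_g}$ summing to $I_{2d}$ forces each diagonal block to sum to the corresponding block of the identity.
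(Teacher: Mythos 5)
Items (1)--(6) of your proposal are correct and coincide with the paper's proof almost verbatim: the choices $a_i = 1/2$ and $a_i = \tr[E_i]/d$ for (1) and (2), the observation that (3) and (4) only weaken the universal quantifier, the tensor-product construction $\mathcal T(F^{(1)}_{i_1}\otimes\cdots\otimes F^{(g)}_{i_g})$ for (5), and the direct-sum-then-compress argument with $E_i \oplus (I_d - E_i)$ for (6) are all exactly what the paper does.

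The gap is in item (7), and it sits precisely where you feared it would. Averaging the joint POVM over coordinate flips can only ever replace the marginal $E_i' = s_iE_i + (1-s_i)a_iI$ by a convex combination $q_iE_i' + (1-q_i)(I-E_i') = (2q_i-1)E_i' + 2(1-q_i)\cdot\frac{I}{2}$, i.e.\ a mixture of $E_i'$ with the \emph{fixed} trivial effect $I/2$. Expanding, the identity part of this mixture is $\left[(2q_i-1)(1-s_i)a_i + (1-q_i)\right]I$, and demanding that the result be balanced, i.e.\ equal to $y_iE_i + \frac{1-y_i}{2}I$ with $y_i = (2q_i-1)s_i$, forces $(2q_i-1)(1-s_i)a_i = \frac{(2q_i-1)(1-s_i)}{2}$, hence $a_i = 1/2$ or $q_i = 1/2$; the latter gives the completely trivial marginal $I/2$, i.e.\ $y_i = 0$. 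Since $a_i$ comes from the existential quantifier in $\Gamma^{all}$ and can be anything in $[0,1]$, flip-averaging cannot produce the factor $s_i/(2-s_i)$ --- your ``equal weight $1/2$'' heuristic actually yields nothing. What is needed, and what the paper does, is to mix $E_i'$ with a \emph{new} trivial effect $b_iI$ whose value is chosen to cancel the unknown bias: writing $E_i'' = x_iE_i' + (1-x_i)b_iI = y_iE_i + (1-y_i)I/2$ gives $y_i = x_is_i$ and $b_i = \frac{1-x_is_i-2x_ia_i(1-s_i)}{2(1-x_i)}$, and requiring $b_i \in [0,1]$ for every possible $a_i \in [0,1]$ yields exactly $y_i \le s_i/(2-s_i)$. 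Compatibility of the resulting tuple then follows because a trivial effect is compatible with anything and convex combinations of compatible tuples remain compatible (Proposition \ref{prop:Gamma_convex}). So the denominator $2-s_i$ records the worst-case constraint on the compensating trivial effect, not a $1/2$-weighted flip.
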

\begin{proof}
The first two assertions are true since we restrict the trivial measurements we are mixing with in both cases. The third assertion follows in the same way, but this time compatibility has to hold for less states. The fourth assertion follows since $\tr[E_i]/d = 1/2$ for the effects considered for $\Gamma^0(g,d)$.
For the fifth claim, let $s \in \Gamma^{clone}(g,d)$ be an arbitrary scaling $g$-tuple and consider quantum effects $E_1, \ldots, E_g \in \mathrm{Eff}_d$. Define, for every bit-string $b$ of length $g$  
	$$F_b := \mathcal T(E_1^{(b_1)} \otimes E_2^{(b_2)} \otimes \cdots \otimes E_g^{(b_g)}),$$
	where we set $E_i^{(1)} = E_i$ and $E_i^{(0)} = I-E_i$, and $\mathcal T$ is a map as in \eqref{eq:def-Gamma-clone}. Since the map $\mathcal T$ is (completely) positive, we have that all the operators $F_b$ are positive semidefinite. Moreover, the marginals can be computed as follows: 
	$$\sum_{b \in \{0,1\}^g, \, b_i = 1} F_b = \mathcal T(I^{\otimes i-1} \otimes E_i \otimes I^{\otimes n-i}) = s_i E_i + (1-s_i) \frac{\tr[E_i]}{d} I =: E_i',$$
	which shows that the mixed effects $E_i'$ are compatible, proving the claim. 

For the sixth assertion, let $s \in \Gamma^{0}(g,2d)$. Then, for a $g$-tuple of arbitrary $d \times d$ quantum effects $E_i$, the quantum effects (of size $2d$)
	$$s_i [E_i \oplus (I_d - E_i)] + (1-s_i) \frac{d}{2d} I_{2d}$$
are unbiased ($\tr[E_i \oplus (I_d - E_i)] = d$) and thus compatible. Truncating the above effects to their upper-left corner proves the claim. 

Let us now prove the seventh and final claim. It is enough to show that, for any effect $E \in \mathrm{Eff}_d$ and any mixture $E' = sE + (1-s)aI$ with some trivial effect $aI$ ($a \in [0,1]$), there is a further mixture $E'' = xE' + (1-x)bI = y E + (1-y) I/2$. Working out the relations between the parameters $s,x,y,a,b$, we find the following two equations
$$y = xs \quad \text{ and } \quad b = \frac{1-xs-2xa(1-s)}{2(1-x)}.$$
Asking that, for all values of $a \in [0,1]$, $b$ is also between $0$ and $1$, we obtain the desired inequality $y \leq s/(2-s)$. Let $(E^\prime_1, \ldots E_g^\prime)$ be the compatible effects corresponding to $s$. Then $E^\prime_1, \ldots, E_{j-1}^\prime, b_j I, E_{j+1}^\prime, \ldots, E_g^\prime$ are compatible as well, since we obtain a joint POVM for the effects without $E_j^\prime$ by summing over the $j$-th index and $b_j I$ is a trivial measurement and as such compatible with all effects. Then, we obtain the element $E^{\prime \prime} = (x_1 E_1^\prime + (1-x_1)b_1 I, \ldots, x_g E_g^\prime + (1-x_g)b_g I)$ from $(E_1^\prime, \ldots,  E_g^\prime)$ by successively taking convex combinations with elements of the form $(E^\prime_1, \ldots, E_{j-1}^\prime, b_j I, E_{j+1}^\prime, \ldots, E_g^\prime)$. As convex combinations of compatible tupels stay compatible (see proof of Proposition \ref{prop:Gamma_convex}), we infer that $E^{\prime \prime }$ is compatible and the assertion follows.
\end{proof}
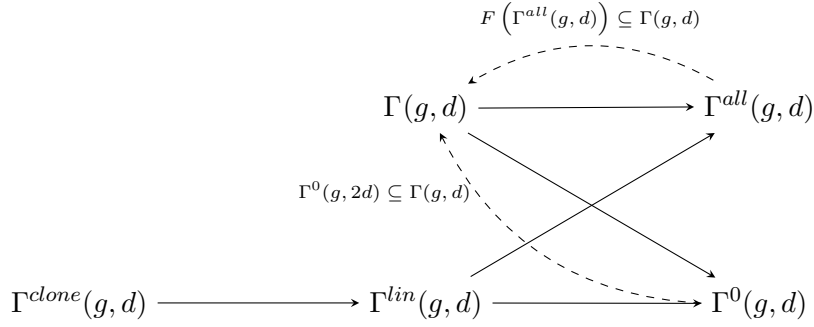
\begin{figure}
\begin{tikzpicture}
  \matrix (m) [matrix of math nodes,row sep=5em,column sep=7em,minimum width=2em]
  {
      & \Gamma^{}(g,d) & \Gamma^{all}(g,d) \\
     \Gamma^{clone}(g,d) & \Gamma^{lin}(g,d) & \Gamma^{0}(g,d) \\};
  \path[-stealth]
        (m-2-1) edge (m-2-2)
        (m-1-2) edge (m-1-3)
        (m-2-2) edge  (m-1-3)
        (m-2-2) edge (m-2-3)
        (m-1-2) edge (m-2-3)
        (m-2-3) edge [dashed, bend left] node [pos=.8,left,inner sep=2pt] {\tiny{$\Gamma^{0}(g,2d) \subseteq \Gamma(g,d)$}} (m-1-2)
        (m-1-3) edge [bend right, dashed] node [above] {\tiny{$F\left(\Gamma^{all}(g,d)\right) \subseteq \Gamma(g,d)$}}  (m-1-2);
            
\end{tikzpicture}
\caption{The different inclusions for the sets $\Gamma^\#$, $\# \in \{\emptyset, lin, all, 0, clone\}$ proven in Proposition \ref{prop:inclusions-Gamma}. Full arrows represent inclusions of sets, while dashed arrows represent special conditions.}
\label{fig:inclusions-Gamma}
\end{figure}
\begin{remark}\label{rem:Gamma-subset-Gamma-lin}
It would be interesting to see if, in general, $\Gamma(g,d') \subseteq \Gamma^{lin}(g,d)$ for some parameter $d'$ which depends on $d$. 
\end{remark}

Now we show that these sets become smaller when we increase the dimension of the effects considered.
\begin{prop}\label{prop:higherdimension}
Let $g$,$d \in \mathbb{N}$ and $\# \in \Set{all, \emptyset,lin,0,clone}$. Then
\begin{equation*}
\Gamma^{\#}(g,d+1) \subseteq\Gamma^{\#}(g,d).
\end{equation*}
\end{prop}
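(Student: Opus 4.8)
The plan is to embed $d$-dimensional effects into $(d+1)$-dimensional ones in a way that preserves joint measurability and the relevant noise structure, so that any scaling vector $s$ that works in dimension $d+1$ automatically works in dimension $d$. Concretely, given effects $E_1,\ldots,E_g \in \mathrm{Eff}_d$, I would define $\widetilde E_i := E_i \oplus (c_i) \in \mathrm{Eff}_{d+1}$ for a suitable scalar $c_i \in [0,1]$ appended in the new diagonal entry; the simplest choice is $c_i = 1/2$, but for the sets $\Gamma^{lin}$, $\Gamma^0$, $\Gamma^{clone}$ one must pick $c_i$ (or a small modification of the construction) so that the trace condition / unbiasedness is respected, e.g. $c_i = \tr[E_i]/d$ for $\Gamma^{lin}$ and $c_i = 1/2$ for $\Gamma^0$ (which keeps $\tr[\widetilde E_i] = (d+1)/2$ automatically when $\tr[E_i] = d/2$). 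The key point is that a direct sum of POVMs is a POVM and direct sums interact well with marginals.

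The main steps, in order, would be: (1) fix $s \in \Gamma^{\#}(g,d+1)$ and arbitrary $d$-dimensional effects $E_i$ (with the appropriate trace constraint if $\# \in \{0\}$); (2) form the $(d+1)$-dimensional effects $\widetilde E_i$ as above, check they lie in $\mathrm{Eff}_{d+1}$ and satisfy whatever side condition defines $\Gamma^{\#}$; (3) invoke $s \in \Gamma^{\#}(g,d+1)$ to get a joint $(d+1)$-dimensional POVM $\{R_{i_1,\ldots,i_g}\}$ whose marginals are the noisy versions $s_i \widetilde E_i + (1-s_i) a_i I_{d+1}$ (with $a_i = 1/2$, or $\tr[\widetilde E_i]/(d+1)$, etc., depending on $\#$); (4) compress everything to the upper-left $d \times d$ block via the isometry $V: \mathbb C^d \hookrightarrow \mathbb C^{d+1}$, i.e. set $R'_{i_1,\ldots,i_g} := V^\ast R_{i_1,\ldots,i_g} V$; (5) verify that $\{R'_{i_1,\ldots,i_g}\}$ is a $d$-dimensional POVM and that its marginals are exactly $s_i E_i + (1-s_i) a_i' I_d$ with the $a_i'$ matching the definition of $\Gamma^{\#}(g,d)$, using that $V^\ast (A \oplus c) V = A$ and $V^\ast I_{d+1} V = I_d$. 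This shows $s \in \Gamma^{\#}(g,d)$. For $\Gamma^{clone}$ one argues slightly differently but in the same spirit: compose the unital CP map $\mathcal T: \mathcal M_{d+1}^{\otimes g} \to \mathcal M_{d+1}$ witnessing $s \in \Gamma^{clone}(g,d+1)$ with the embedding $X \mapsto X \oplus (\tr[X]/d)$ on each tensor factor and the compression $Y \mapsto V^\ast Y V$ on the output, and check unitality, complete positivity, and the required action on $I^{\otimes(i-1)}\otimes X \otimes I^{\otimes(g-i)}$ survive.

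The one genuinely delicate point is making the "padding" scalars compatible with the defining side condition of each $\Gamma^{\#}$ simultaneously with the compression step — in particular for $\Gamma^{lin}$ and $\Gamma^{clone}$ the noise level $\tr[E_i]/d$ in dimension $d$ must be reproduced from the noise level $\tr[\widetilde E_i]/(d+1)$ in dimension $d+1$, which forces the choice $c_i = \tr[E_i]/d$ so that $\tr[\widetilde E_i] = (d+1)\tr[E_i]/d$ and hence $\tr[\widetilde E_i]/(d+1) = \tr[E_i]/d$. Once that bookkeeping is arranged, every other step is routine: direct sums preserve positivity, the identity, and the POVM normalization; compression by an isometry preserves positivity and sends $I_{d+1}$ to $I_d$; and marginalization commutes with both operations. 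I expect writing the argument uniformly for all five values of $\#$ — rather than the obstacle being mathematical — to be the bulk of the work, so I would likely prove it in detail for $\Gamma(g,d)$ and indicate the modifications for the others, exactly as the paper does for Proposition \ref{prop:Gamma_convex}.
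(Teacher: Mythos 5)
Your proposal is correct and follows essentially the same route as the paper: embed the $d$-dimensional effects into dimension $d+1$ by padding with a scalar chosen to respect the side condition of each $\Gamma^{\#}$ (the paper uses $E_i\oplus 0$ for $\Gamma^{all},\Gamma,\Gamma^0$ and the map $X\mapsto X\oplus \tr[X]/d$ for $\Gamma^{lin}$ and $\Gamma^{clone}$, exactly your $c_i=\tr[E_i]/d$ choice), then compress the resulting joint POVM back by the isometry $V$. Your explicit choice $c_i=1/2$ for $\Gamma^0$ is in fact slightly more careful than the paper's ``almost identical'' remark, since padding with $0$ would break the unbiasedness condition $\tr[\widetilde E_i]=(d+1)/2$.
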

\begin{proof}
Let us first show the inclusion for $\Gamma^{all}$; the proofs for $\Gamma$ and $\Gamma^0$ are almost identical. Let $E_i \in \mathrm{Eff}_d$ for all $i \in [g]$. We can embed these effects into $\mathrm{Eff}_{d+1}$ by choosing $E^\prime_i = E_i \oplus 0$. Let $s \in \Gamma^{all}(g,d+1)$. Then there exists an $a \in [0,1]^g$ such that the effects
\begin{equation*}
s_i E^\prime_i +  (1-s_i) a_i I_{d+1}, \qquad i\in [g],
\end{equation*} 
are compatible. Let $V: \mathbb{C}^d \hookrightarrow \mathbb{C}^{d+1}$ be an isometry such that $V V^\ast$ is the projection onto the first $d$ entries. It is easy to check that for a POVM $\Set{R_i}_{i \in [g]}$ with $R_i \in \mathrm{Eff}_{d+1}$ for all $i \in [g]$, the set $\Set{V^\ast R_i V}_{i \in [g]}$ is again a POVM with elements in $\mathrm{Eff}_{d}$. Furthermore,
\begin{equation*}
V^\ast\left[s_i E^\prime_i +  (1-s_i) a_i I_{d+1}\right]V = s_i E_i +  (1-s_i) a_i I_{d},
\end{equation*}
which implies that the $s_i E_i +  (1-s_i) a_i I_{d}$ are compatible and therefore $s \in \Gamma^{all}(g,d)$.

To prove the claim for $\Gamma^{lin}$, we use the same idea as before, but with the following linear embedding of $d \times d$ matrices into $(d+1) \times (d+1)$ matrices
\begin{align*}
	\Psi: \mathcal M_d &\to \mathcal M_{d+1} \\
	X &\mapsto X \oplus \frac{\tr[X]}{d}.
\end{align*}
As $\tr[\Psi(X)] = (d+1)/d \tr[X]$, the claim follows.

Finally, since $\Psi$ can easily be seen to be completely positive and unital, we can use it to together with the embedding $V$ from above to define the cloning map for dimension $d$ through the one for dimension $d+1$. With $ \mathcal T_{d+1}$ and $\mathcal T_d$ the maps appearing in \eqref{eq:def-Gamma-clone} for $\Gamma^{clone}(g,d+1)$ and $\Gamma^{clone}(g,d)$, respectively, we  obtain
\begin{equation*}
\mathcal T_d(X) := V^\ast \mathcal T_{d+1}(\Psi^{\otimes g}(X))V.
\end{equation*}
It can then be verified that the map indeed has the desired properties. 
\end{proof}

\begin{remark}
We would like to point out that the sets $\Gamma(g,d)$ give rise to \emph{compatibility criteria}, i.e.~\emph{sufficient conditions for compatibility}, as follows. Let $s \in \Gamma(g,d)$ be such that $s_i > 0$ for all $i \in [g]$. Then, the following implication holds:
$$\forall i \in [g],\, \frac{1}{s_i}E_i - \frac{1-s_i}{2s_i}I_d \in \mathrm{Eff}_d \quad \implies \quad (E_1, \ldots, E_g) \text{ compatible.}$$
Indeed, using $s \in \Gamma(g,d)$ and the hypothesis, it follows that the effects
$$s_i\left[ \frac{1}{s_i}E_i - \frac{1-s_i}{2s_i}I_d \right] + (1-s_i) \frac{I_d}{2} = E_i$$
are compatible. These criteria become useful if the corresponding SDP is intractable.
\end{remark}

\section{Free spectrahedra}

In this section, we will review some concepts from the study of free spectrahedra which we will need in the rest of the paper. We will start with the definition of free spectrahedra and their inclusion. Then, we will review the link between spectrahedral inclusion and positivity properties of certain maps. All the theory needed in this work can be found in \cite{helton_matricial_2013,helton2014dilations,davidson2016dilations}.

Let $A \in \left(\mathcal{M}_d^{sa}\right)^g$. The \emph{free spectrahedron at level $n$} corresponding to this $g$-tuple of matrices is the set
\begin{equation*}
\mathcal{D}_A(n) := \Set{X \in \left(\mathcal{M}_n^{sa}\right)^g: \sum_{i=1}^g A_i \otimes X_i \leq I_{nd}}.
\end{equation*}
For $n = 1$, this is a usual spectrahedron defined by a \emph{linear matrix inequality}. The \emph{free spectrahedron} is then defined as the (disjoint) union of all these sets, i.e.\
\begin{equation*}
\mathcal{D}_A := \bigcup_{n \in \mathbb{N}} \mathcal{D}_A(n).
\end{equation*}
A free spectrahedron which we will very often encounter is the \emph{matrix diamond of size $g$}. It is defined as 
\begin{equation*}
\mathcal{D}_{\diamond, g}(n) = \Set{X \in \left(\mathcal{M}_n^{sa}\right)^g: \sum_{i=1}^g \epsilon_i X_i \leq I_n~\forall \epsilon \in \Set{-1,+1}^g}.
\end{equation*}
To see that this is a free spectrahedron, we can take the direct sum of all these constraints. The matrices defining this free spectrahedron are thus diagonal. At level 1, the matrix diamond is the unit ball of the $\ell_1$-norm. Therefore, it is obviously bounded.
For free spectrahedra, the inclusion $\mathcal{D}_A \subseteq\mathcal{D}_B$ means $\mathcal{D}_A(n) \subseteq\mathcal{D}_B(n)$ for all $n \in \mathbb{N}$.
Inclusion at the level of spectrahedra ($n = 1$) does not guarantee inclusion of the free spectrahedra. That is, the implication
\begin{equation*}
\mathcal{D}_A(1) \subseteq\mathcal{D}_B(1) \Rightarrow \mathcal{D}_A \subseteq\mathcal{D}_B
\end{equation*}
does not hold in general. However, scaling the set $\mathcal{D}_A$ down, the implication becomes eventually true. 
\begin{defi}\label{def:inclusion-constants}
Let $\mathcal{D}_A$ be the free spectrahedron defined by $A \in \left(\mathcal{M}_d^{sa}\right)^g$. The inclusion set $\Delta_{\mathcal{D}_A}(k)$ is defined as 
\begin{equation*}
\Delta_{\mathcal{D}_A}(k) := \Set{s \in \mathbb{R}_+^g: \forall B \in \left(\mathcal{M}_k^{sa}\right)^g~\mathcal{D}_A(1) \subseteq\mathcal{D}_B(1) \Rightarrow s \cdot \mathcal{D}_A \subseteq\mathcal{D}_B}.
\end{equation*}
For $\mathcal{D}_A = \mathcal{D}_{\diamond,g}$, we will write $\Delta(g,k)$ for brevity. Here, the set
\begin{equation*}
s \cdot \mathcal{D}_A := \Set{(s_1 X_1, \ldots, s_g X_g) :  X \in \mathcal{D}_A}
\end{equation*} is the \emph{(asymetrically) scaled} free spectrahedron.
\end{defi}
Note that the definition above generalizes the one from \cite{helton2014dilations} by restricting the size of the matrices defining the containing spectrahedra and by allowing non-symmetric scaling; one recovers the definition of inclusion constants from \cite{helton2014dilations} by considering the largest constant $s \geq 0$ such that 
$$ (\underbrace{s, \ldots, s}_{g \text{ times}}) \in \Delta_{\mathcal D_A}(k), \qquad \forall k \geq 1.$$

As in the case of POVMs, we are also interested in the inclusion constant set where we restrict to inclusion into free spectrahedra defined by traceless matrices.
\begin{defi}
	Let $\mathcal D_A$ be a spectrahedron defined by $A \in \left(\mathcal{M}_d^{sa}\right)^g$. The \emph{traceless-restricted} inclusion set $\Delta^0_{\mathcal D_A}(d)$ of $\mathcal D_A$ is defined as
\begin{align*} \Delta^0_{\mathcal D_A}(k) := \{s \in \mathbb R_+^g \, : \, &\forall C \in (\mathcal{M}^{sa}_k)^g \text{ s.t. }  \forall i \in [g], \, \tr[C_i] = 0,\\
&\qquad \mathcal D_A(1) \subseteq \mathcal D_C(1) \implies s\cdot \mathcal D_A \subseteq \mathcal D_C \}.
\end{align*}
For $\mathcal{D}_A = \mathcal{D}_{\diamond,g}$, we will again write $\Delta^0(g,k)$ for brevity.
\end{defi}
The next proposition shows that both inclusion sets we have defined are convex.
\begin{prop}
Let $A \in \left(\mathcal{M}_d^{sa}\right)^g$. Both $\Delta_{\mathcal D_A}(k)$ and $\Delta^0_{\mathcal D_A}(k)$ are convex.
\end{prop}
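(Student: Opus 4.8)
The plan is to show convexity directly from the definitions, exploiting that the scaled free spectrahedron $s \cdot \mathcal{D}_A$ depends on $s$ in a way that interacts well with the defining linear matrix inequalities. Fix $k$, fix $B \in (\mathcal{M}_k^{sa})^g$ with $\mathcal{D}_A(1) \subseteq \mathcal{D}_B(1)$, and suppose $s, t \in \Delta_{\mathcal{D}_A}(k)$ and $\lambda \in [0,1]$; I want to show $r := \lambda s + (1-\lambda) t$ satisfies $r \cdot \mathcal{D}_A \subseteq \mathcal{D}_B$. Take any $X \in \mathcal{D}_A(n)$ for arbitrary $n$; the goal is $\sum_i B_i \otimes (r_i X_i) \leq I_{nk}$. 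The key observation is that $(r_i X_i)_i = \lambda (s_i X_i)_i + (1-\lambda)(t_i X_i)_i$, i.e.\ the $r$-scaled point is the convex combination of the $s$-scaled point and the $t$-scaled point built from the \emph{same} $X$. Since $s \in \Delta_{\mathcal{D}_A}(k)$ and $X \in \mathcal{D}_A$, we get $(s_i X_i)_i \in \mathcal{D}_B(n)$, hence $\sum_i B_i \otimes (s_i X_i) \leq I_{nk}$; likewise $\sum_i B_i \otimes (t_i X_i) \leq I_{nk}$. Taking the convex combination of these two operator inequalities with weights $\lambda$ and $1-\lambda$ and using linearity of $\sum_i B_i \otimes (\cdot)$ in the second slot yields $\sum_i B_i \otimes (r_i X_i) \leq I_{nk}$, which is exactly $(r_i X_i)_i \in \mathcal{D}_B(n)$. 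Since $n$ and $X$ were arbitrary, $r \cdot \mathcal{D}_A \subseteq \mathcal{D}_B$, and since $B$ was an arbitrary $k \times k$ tuple with $\mathcal{D}_A(1) \subseteq \mathcal{D}_B(1)$, we conclude $r \in \Delta_{\mathcal{D}_A}(k)$.

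For $\Delta^0_{\mathcal{D}_A}(k)$ the argument is identical verbatim, with the only change that $B$ is replaced by a tuple $C \in (\mathcal{M}_k^{sa})^g$ with $\tr[C_i] = 0$ for all $i$; nothing in the convex-combination step uses or disturbs the tracelessness condition, so the same chain of inequalities goes through. One should also note $r \in \mathbb{R}_+^g$ since $\mathbb{R}_+^g$ is convex, so $r$ is an admissible candidate scaling vector.

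I do not expect any genuine obstacle here: the statement is essentially the observation that an intersection of half-space-like conditions is convex, and the scaling enters linearly. The only mild point worth being careful about is making explicit that $s \cdot \mathcal{D}_A$ is defined levelwise (so $X \in \mathcal{D}_A(n)$ implies $(s_i X_i)_i \in (s \cdot \mathcal{D}_A)(n) = s \cdot \mathcal{D}_A(n)$) and that the containment $s \cdot \mathcal{D}_A \subseteq \mathcal{D}_B$ unpacks to "for all $n$, for all $X \in \mathcal{D}_A(n)$, $\sum_i B_i \otimes (s_i X_i) \leq I_{nk}$"; once this is spelled out, the proof is three lines. I would present it as a single short proof covering both sets simultaneously, writing "we prove the claim for $\Delta_{\mathcal{D}_A}(k)$; the proof for $\Delta^0_{\mathcal{D}_A}(k)$ is the same, restricting to traceless tuples."
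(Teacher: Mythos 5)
Your proof is correct and follows essentially the same route as the paper's: both arguments fix an admissible $B$ and an $X \in \mathcal D_A$, note that the $(\lambda s + (1-\lambda)t)$-scaled tuple is the convex combination of the $s$-scaled and $t$-scaled tuples built from the same $X$, and add the two resulting operator inequalities. You merely spell out the levelwise unpacking more explicitly than the paper does; no substantive difference.
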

\begin{proof}
Let $B \in \left(\mathcal{M}_k^{sa}\right)^g$ and $X \in \mathcal D_A$. Further, let $s$, $t \in \Delta_{\mathcal D_A}(k)$ and $\lambda \in [0,1]$. The assumptions on $s$, $t$ yield
\begin{equation*}
\sum_{i = 1}^g B_i \otimes (\lambda s_i + (1-\lambda)t_i)X = \lambda \sum_{i = 1}^g B_i \otimes s_i X + (1-\lambda) \sum_{j = 1}^g B_j \otimes t_j X \leq I.
\end{equation*}
This proves the first assertion, because $B$ was arbitrary. The second assertion follows in a very similar manner. 
\end{proof}

The inclusion of spectrahedra can be related to positivity properties of a certain map. Let $A \in (\mathcal M_D^{sa})^g$ and $B \in (\mathcal M_d^{sa})^g$ define the free spectrahedra $\mathcal D_A$ and $\mathcal D_B$, respectively. Let $\Phi:\mathcal{O}\mathcal{S}_A \to \mathcal M_d$ be the unital map defined as
\begin{equation*}
\Phi: A_i \mapsto B_i \qquad \forall i \in [g].
\end{equation*}
Then, we can find a one-to-one relation between properties of $\Phi$ and the inclusion of the free spectrahedra at different levels. This has been proven in \cite[Theorem 3.5]{helton_matricial_2013} for real spectrahedra and we include a proof in the complex case for convenience. 
\begin{lem} \label{lem:cpspectra}
Let $A \in (\mathcal M_D^{sa})^g$ and $B \in (\mathcal M_d^{sa})^g$. Further, let $\mathcal D_A(1)$ be bounded. Then $\mathcal D_A(n) \subseteq\mathcal D_B(n)$ holds if and only if $\Phi$ as given above is $n$-positive. In particular, $\mathcal D_A \subseteq\mathcal D_B$ if and only if $\Phi$ is completely positive. 
\end{lem}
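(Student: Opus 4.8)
The plan is to follow the standard dictionary between matrix tuples, spectrahedral inclusion and positivity of the canonical unital map, adapting \cite[Theorem 3.5]{helton_matricial_2013} to the complex setting. Fix the unital map $\Phi\colon\mathcal{OS}_A\to\mathcal M_d$ with $\Phi(A_i)=B_i$; note it is well defined precisely because $\mathcal D_A(1)$ is bounded, which forces $\{I_D,A_1,\dots,A_g\}$ to be linearly independent (otherwise some nontrivial affine combination of the $A_i$ would vanish, and $\mathcal D_A(1)$ would contain a line), so there is no ambiguity in the values we prescribe. The key object is the completely positive extension/dilation picture: I want to show that $n$-positivity of $\Phi$ is equivalent to the containment $\mathcal D_A(n)\subseteq\mathcal D_B(n)$.

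I would prove the two implications separately. For the direction ``$n$-positive $\Rightarrow$ inclusion at level $n$'': take $X\in\mathcal D_A(n)$, so $\sum_i A_i\otimes X_i\le I_{nD}$, equivalently $I_D\otimes I_n-\sum_i A_i\otimes X_i\ge 0$. This element of $\mathcal{OS}_A\otimes\mathcal M_n$ is, by construction, the image under the identity on $\mathcal{OS}_A$ of a positive element; applying $\Phi\otimes\mathrm{Id}_n$ and using $n$-positivity gives $I_d\otimes I_n-\sum_i B_i\otimes X_i\ge 0$, i.e.\ $X\in\mathcal D_B(n)$. The only subtlety is to check that $I_D\otimes I_n-\sum_i A_i\otimes X_i$ really lies in $\mathcal{OS}_A\otimes\mathcal M_n$ and is sent to the corresponding expression in $B$; this is immediate from linearity and unitality of $\Phi$. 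For the converse, ``inclusion at level $n$ $\Rightarrow$ $n$-positive'': a general positive element of $\mathcal{OS}_A\otimes\mathcal M_n$ has the form $P=I_D\otimes T_0+\sum_i A_i\otimes T_i\ge 0$ with $T_0,\dots,T_g\in\mathcal M_n^{sa}$. I would first reduce to the case $T_0=I_n$: since $\mathcal D_A(1)$ is bounded, one can show $T_0\ge 0$ (evaluate $P\ge 0$ against vectors supported where the relevant scalar spectrahedron witnesses boundedness, or use that $T_0$ is the value of $P$ on a state of the form $\omega\otimes\mathrm{id}$ for $\omega$ a suitable positive functional on $\mathcal{OS}_A$), and then, after compressing to the range of $T_0$ and a change of variables $T_i\mapsto T_0^{-1/2}T_iT_0^{-1/2}$, assume $T_0=I_n$. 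Then $P\ge 0$ says exactly that $(-T_1,\dots,-T_g)\in\mathcal D_A(n)$ — here I use the defining inequality of $\mathcal D_A$ and the fact that, because $\mathcal D_A(1)$ is bounded and symmetric-ish, membership is insensitive to the sign convention; more carefully, $P\ge0\iff \sum_i A_i\otimes(-T_i)\le I_{nD}$. Hypothesis $\mathcal D_A(n)\subseteq\mathcal D_B(n)$ then gives $\sum_i B_i\otimes(-T_i)\le I_{nd}$, i.e.\ $(\Phi\otimes\mathrm{Id}_n)(P)=I_d\otimes I_n+\sum_i B_i\otimes T_i\ge 0$, which is $n$-positivity.

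The last sentence of the lemma is then just the case ``for all $n$'': $\Phi$ is completely positive iff it is $n$-positive for every $n$, iff $\mathcal D_A(n)\subseteq\mathcal D_B(n)$ for every $n$, iff $\mathcal D_A\subseteq\mathcal D_B$. The main obstacle I anticipate is the reduction step $T_0\ge 0$ and the passage to $T_0=I_n$ in the converse: this is exactly where boundedness of $\mathcal D_A(1)$ is essential (without it, $\Phi$ need not even be well defined, and positive elements of $\mathcal{OS}_A\otimes\mathcal M_n$ with degenerate $T_0$ cannot be normalized). Handling the kernel of $T_0$ cleanly — by compressing everything to $(\ker T_0)^\perp$ and observing the resulting map is still the canonical one — is the technical heart of the argument; everything else is bookkeeping with tensor products and the definitions.
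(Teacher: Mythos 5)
Your proposal is correct and follows essentially the same route as the paper: the easy direction by applying $\Phi\otimes\mathrm{Id}_n$ to $I_{Dn}-\sum_i A_i\otimes X_i\ge 0$, and the converse by decomposing a positive element of $\mathcal M_n^{sa}(\mathcal{OS}_A)$, using boundedness of $\mathcal D_A(1)$ to force the coefficient of $I_D$ to be positive semidefinite (and, beforehand, to guarantee linear independence so that $\Phi$ is well defined), and then normalizing that coefficient away. The only real difference is in the degenerate case: where you compress to $(\ker T_0)^\perp$ — which additionally requires checking that the $T_i$ vanish on $\ker T_0$, a point you flag but do not carry out — the paper simply replaces $X_0$ by $X_0+\epsilon I_n$ and lets $\epsilon\to 0$, which sidesteps that bookkeeping.
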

\begin{proof}
The ``only if'' direction is true by the unitality and $n$-positivity of $\Phi$. For the ``if'' direction, let $Y \in \mathcal M_n^{sa}(\mathcal O \mathcal S_A)$. Without loss of generality, we can assume $I_D, A_1, \ldots A_g$ to be linearly independent. Then 
\begin{equation*}
Y = I_D \otimes X_0 - \sum_{i = 1}^g A_i \otimes X_i
\end{equation*}
for $(X_0, \ldots X_g) \in \mathcal M_n^g$. We claim that $X_0, \ldots X_g$ are self-adjoint.
Then $(I_D \otimes e_i^\ast) (Y-Y^\ast)( I_D \otimes e_j) = 0$ for all $i$, $j \in [n]$ and an orthonormal basis $\Set{e_i}_{i = 1}^n$ of $\mathbb C^n$ if and only if $\langle e_i, (X_l - X_l^\ast)e_j \rangle$ for all $i$, $j \in [n]$ and for all $l \in [g] \cup \Set{0}$. This proves the claim. If $Y \geq 0$, it holds that $X_0 \geq 0$. Let us assume that this is not the case. Then there exists an $x \in \mathbb C^{n}$ such that $\langle x, X_0 x \rangle < 0$. Positivity of $Y$ yields 
\begin{equation*}
- \sum_{i=1}^g \langle x, X_i x \rangle A_i > 0.
\end{equation*}
Therefore, $\lambda(\langle x, X_1 x \rangle, \ldots, \langle x, X_g x \rangle) \in \mathcal D_A(1)$ for all $\lambda \geq 0$. This contradicts the assumption that $\mathcal D_A(1)$ is bounded. Let us now assume that $Y \geq 0$ and that $X_0 > 0$. Then $(\Phi \otimes Id_n) Y \geq 0$, because $X_0^{-1/2} X X_0^{-1/2} \in \mathcal D_A(n) \subseteq\mathcal D_B(n)$. For $Y \geq 0 $ and $X_0 \geq 0$, positivity of $(\Phi \otimes Id_n) Y$ follows from exchanging $X_0$ by $X_0 + \epsilon I_n$, $\epsilon > 0$ and letting $\epsilon$ go to zero.
\end{proof}
\begin{remark} The complete positivity of $\Phi$ can be checked using an SDP \cite{helton_matricial_2013, Heinosaari2012}. Therefore, the inclusion problem at the level of free spectrahedra is efficiently solvable. This is not necessarily the case for the usual spectrahedra at level 1, because checking the positivity of a linear map is in general a hard problem (the set of positive maps between matrix algebras is dual to the set of separable states, and deciding weak membership into the latter set is known to be NP-hard \cite{gurvits2003classical}). Seeing the free spectrahedral inclusion problem as a \emph{relaxation} of the corresponding problem for (level 1) spectrahedra is a very useful idea in optimization, see \cite{ben-tal2002tractable,helton_matricial_2013}.
\end{remark}
Using the previous lemma, we obtain a useful corollary if we assume that $\mathcal D_A(1)$ is bounded, which is enough for us.  
\begin{cor}\label{cor:disenough}
Let $A \in (\mathcal M_D^{sa})^g$ and $B \in (\mathcal M_d^{sa})^g$. Moreover, let $\mathcal D_A(1)$ be bounded. Then $\mathcal D_A(d) \subseteq\mathcal D_B(d)$ if and only if $\mathcal D_A \subseteq\mathcal D_B$.
\end{cor}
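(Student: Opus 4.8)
The plan is to deduce Corollary \ref{cor:disenough} directly from Lemma \ref{lem:cpspectra} using the standard fact that $d$-positivity of a map into $\mathcal M_d$ already implies complete positivity. More precisely, I would argue as follows. The nontrivial direction is ``only if'': assume $\mathcal D_A(d) \subseteq \mathcal D_B(d)$ and deduce $\mathcal D_A \subseteq \mathcal D_B$. By Lemma \ref{lem:cpspectra} applied at level $n = d$ (which is legitimate since $\mathcal D_A(1)$ is bounded by hypothesis), the inclusion $\mathcal D_A(d) \subseteq \mathcal D_B(d)$ is equivalent to the unital map $\Phi \colon \mathcal{OS}_A \to \mathcal M_d$, $A_i \mapsto B_i$, being $d$-positive. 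Now invoke the classical result that a linear map $\Phi \colon \mathcal S \to \mathcal M_d$ from an operator system into the $d \times d$ matrices is completely positive as soon as it is $d$-positive (this is the operator-system analogue of the Choi-type statement; see e.g. Paulsen's book). Hence $\Phi$ is completely positive, and applying the ``in particular'' clause of Lemma \ref{lem:cpspectra} in the other direction gives $\mathcal D_A \subseteq \mathcal D_B$. The ``if'' direction is immediate since $\mathcal D_A \subseteq \mathcal D_B$ means $\mathcal D_A(n) \subseteq \mathcal D_B(n)$ for all $n$, in particular for $n = d$.

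The key steps, in order: (i) note boundedness of $\mathcal D_A(1)$ lets us use Lemma \ref{lem:cpspectra} freely; (ii) translate $\mathcal D_A(d) \subseteq \mathcal D_B(d)$ into $d$-positivity of $\Phi$; (iii) apply the $d$-positivity $\Rightarrow$ complete positivity fact for maps with $d$-dimensional target; (iv) translate complete positivity of $\Phi$ back into $\mathcal D_A \subseteq \mathcal D_B$; (v) observe the converse is trivial.

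The main obstacle is really just step (iii): one needs the correct statement of the fact that $d$-positivity into $\mathcal M_d$ upgrades to complete positivity, and to be careful that it applies to a (unital) map defined only on the operator system $\mathcal{OS}_A$ rather than on a full matrix algebra. One should either cite this explicitly or give the short argument: if $\Phi$ is $d$-positive and unital, extend or dilate suitably, or argue via the Choi matrix on the minimal operator-system structure; the point is that positivity of $(\Phi \otimes \mathrm{Id}_n)$ for $n \le d$ already forces it for all $n$ because any positive element of $\mathcal M_n(\mathcal{OS}_A)$ with $n > d$ can, after compression by an isometry $\mathbb C^d \hookrightarrow \mathbb C^n$ applied on the $\mathcal M_n$ leg, be reduced to testing against vectors living in a $d$-dimensional subspace — equivalently, positivity of $\langle \xi, (\Phi\otimes\mathrm{Id}_n)(Y)\xi\rangle$ for $\xi \in \mathbb C^d \otimes \mathbb C^n$ only depends on the at-most-$d$-dimensional support of $\xi$ in the second factor. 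Everything else is bookkeeping.

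Alternatively, if one prefers to avoid invoking the external fact, one can mimic the proof of Lemma \ref{lem:cpspectra} directly at a general level $n$: given $Y = I_D \otimes X_0 - \sum_i A_i \otimes X_i \ge 0$ in $\mathcal M_n^{sa}(\mathcal{OS}_A)$ with $X_0 > 0$, one wants $(\Phi \otimes \mathrm{Id}_n)(Y) \ge 0$, i.e. $X_0^{-1/2} X X_0^{-1/2} \in \mathcal D_B(n)$; since $X_0^{-1/2}XX_0^{-1/2} \in \mathcal D_A(n)$, it suffices to know $\mathcal D_A(n) \subseteq \mathcal D_B(n)$, and the reduction from arbitrary $n$ to $n = d$ is exactly the compression argument above. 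I would present the clean citation-based version in the main text and relegate the self-contained argument to a remark if space permits.
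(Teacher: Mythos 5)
Your proposal is correct and follows essentially the same route as the paper: translate $\mathcal D_A(d)\subseteq\mathcal D_B(d)$ into $d$-positivity of $\Phi$ via Lemma \ref{lem:cpspectra}, upgrade to complete positivity using the standard fact for maps from an operator system into $\mathcal M_d$ (the paper cites \cite[Theorem 6.1]{Paulsen2002}, which is stated exactly for operator-system domains, so your worry in step (iii) is already covered by the citation), and translate back. No gaps.
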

\begin{proof}
From Lemma \ref{lem:cpspectra}, $\mathcal D_A(d) \subseteq\mathcal D_B(d)$ is equivalent to $\Phi$ being $d$-positive. Since $\Phi$ maps to $\mathcal M_d$, this is equivalent to complete positivity of the map \cite[Theorem 6.1]{Paulsen2002}. The claim then follows by another application of Lemma \ref{lem:cpspectra}.
\end{proof}
\begin{remark}\label{rmk:bounded}
The result of Corollary \ref{cor:disenough} without the boundedness assumption has appeared before in \cite[Lemma 2.3]{helton2014dilations} for real spectrahedra with a longer proof. Their proof carries over to the complex setting. Therefore, the boundedness assumption is not necessary, but it shortens the proof considerably.
\end{remark}

\section{Spectrahedral inclusion and joint measurability}
In this section, we establish the link between joint measurability of effects and the inclusion of free spectrahedra. The main result of this work is Theorem \ref{thm:matrix-diamond-vs-effects} which we prove at the end of this section. It connects the inclusion of the matrix diamond into a spectrahedron with the joint measurability of the quantum effects defining this spectrahedron. Before we can prove the theorem, we will need two lemmas concerning compressed versions of a (free) spectrahedron. 

\begin{lem} \label{lem:isometrylarger}
Let $k \in \mathbb N$, $1 \leq k \leq d$ and let $V: \mathbb C^k \hookrightarrow \mathbb C^d$ be an isometry. For $A \in \left(\mathcal M^{sa}_d \right)^g$, it holds that $\mathcal D_A \subseteq\mathcal D_{V^\ast A V}$.
\end{lem}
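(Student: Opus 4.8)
The plan is to verify the inclusion level by level: it suffices to show $\mathcal{D}_A(n) \subseteq \mathcal{D}_{V^\ast A V}(n)$ for every $n \in \mathbb{N}$. So fix $n$ and take an arbitrary $X = (X_1,\ldots,X_g) \in \mathcal{D}_A(n)$, i.e.\ $\sum_{i=1}^g A_i \otimes X_i \leq I_{nd}$. The goal is to deduce $\sum_{i=1}^g (V^\ast A_i V) \otimes X_i \leq I_{nk}$, which says exactly $X \in \mathcal{D}_{V^\ast A V}(n)$ (note the $X_i$ already lie in $\mathcal{M}_n^{sa}$, so no dimensional adjustment of the second tensor leg is needed).

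The key observation is that the map $M \mapsto (V \otimes I_n)^\ast M (V \otimes I_n)$ on $\mathcal{M}_{nd}$ transforms the defining matrix inequality of $\mathcal{D}_A$ into that of $\mathcal{D}_{V^\ast A V}$. Indeed, for each $i$ one has the identity
\begin{equation*}
(V \otimes I_n)^\ast (A_i \otimes X_i) (V \otimes I_n) = (V^\ast A_i V) \otimes X_i,
\end{equation*}
so by linearity $(V \otimes I_n)^\ast \left( \sum_{i=1}^g A_i \otimes X_i \right) (V \otimes I_n) = \sum_{i=1}^g (V^\ast A_i V) \otimes X_i$. Conjugation by a fixed matrix preserves the Loewner order, hence applying it to $\sum_i A_i \otimes X_i \leq I_{nd}$ gives
\begin{equation*}
\sum_{i=1}^g (V^\ast A_i V) \otimes X_i \leq (V \otimes I_n)^\ast I_{nd} (V \otimes I_n) = (V^\ast V) \otimes I_n = I_k \otimes I_n = I_{nk},
\end{equation*}
where the crucial use of the isometry hypothesis is $V^\ast V = I_k$. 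This is exactly the claimed inequality, so $X \in \mathcal{D}_{V^\ast A V}(n)$, and since $X$ and $n$ were arbitrary the inclusion $\mathcal{D}_A \subseteq \mathcal{D}_{V^\ast A V}$ follows.

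I do not expect any real obstacle here; the only thing requiring minor care is the bookkeeping of tensor factors — making sure the isometry $V \otimes I_n$ acts on the correct leg and that $(V \otimes I_n)^\ast(A_i \otimes X_i)(V \otimes I_n)$ collapses to $(V^\ast A_i V)\otimes X_i$ rather than something with $X_i$ also conjugated. Everything else is the elementary fact that congruence by a single operator is order preserving together with $V^\ast V = I_k$.
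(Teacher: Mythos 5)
Your proposal is correct and follows exactly the paper's argument: conjugate the defining inequality by $V \otimes I_n$, use that congruence preserves the Loewner order, and invoke $V^\ast V = I_k$. No differences worth noting.
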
 
\begin{proof}
Let $X \in \mathcal D_A(n)$. Then from the definition
\begin{equation*}
\sum_{i = 1}^g A_i \otimes X_i \leq I_{dn}.
\end{equation*}
Multiplying the equation by $V \otimes I_n$ from the right and by its adjoint from the left, it follows that
\begin{equation*}
\sum_{i=1}^g V^\ast A_i V \otimes X_i \leq I_{kn}.
\end{equation*}
Here, we have used that $V$ is an isometry and that the map $Y \mapsto W^\ast Y W$ for matrices $Y$, $W$ of appropriate dimensions is completely positive.
\end{proof}

\begin{lem} \label{lem:intersection}
Let $k \in \mathbb N$, $1 \leq k \leq d$ and let $V: \mathbb C^k \hookrightarrow \mathbb C^d$ be an isometry. For $A \in \left(\mathcal M^{sa}_d \right)^g$, it holds that
\begin{equation} \label{eq:isometrysets}
\mathcal D_A(k) = \bigcap_{V: \mathbb C^k \hookrightarrow \mathbb C^d~\mathrm{isometry}} \mathcal D_{V^\ast A V}(k).
\end{equation}
\end{lem}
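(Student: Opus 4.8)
The plan is to prove the two inclusions in \eqref{eq:isometrysets} separately. The inclusion ``$\subseteq$'' is immediate from the previous lemma: by Lemma \ref{lem:isometrylarger}, for every isometry $V: \mathbb C^k \hookrightarrow \mathbb C^d$ we have $\mathcal D_A \subseteq \mathcal D_{V^\ast A V}$, and in particular $\mathcal D_A(k) \subseteq \mathcal D_{V^\ast A V}(k)$; taking the intersection over all such $V$ preserves the inclusion, so $\mathcal D_A(k)$ is contained in the right-hand side.

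For the reverse inclusion ``$\supseteq$'', suppose $X \in \mathcal M_k^{sa}{}^g$ lies in the right-hand side; the goal is to show $\sum_{i=1}^g A_i \otimes X_i \leq I_{dk}$, i.e.\ $X \in \mathcal D_A(k)$. The key geometric observation is that any vector $\psi \in \mathbb C^d \otimes \mathbb C^k$ has Schmidt rank at most $\min(d,k) \leq k$, hence lies in $W \otimes \mathbb C^k$ for some subspace $W \subseteq \mathbb C^d$ with $\dim W \leq k$. Since $k \leq d$ by hypothesis, $W$ can be enlarged to a subspace of dimension exactly $k$, which is the range of some isometry $V: \mathbb C^k \hookrightarrow \mathbb C^d$. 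Writing $\psi = (V \otimes I_k)\phi$ for the unique $\phi \in \mathbb C^k \otimes \mathbb C^k$ with $\|\phi\| = \|\psi\|$, and using $X \in \mathcal D_{V^\ast A V}(k)$, i.e.\ $\sum_i (V^\ast A_i V) \otimes X_i \leq I_{k^2}$, one computes
\[
\Big\langle \psi, \Big(\textstyle\sum_{i=1}^g A_i \otimes X_i\Big)\psi\Big\rangle = \Big\langle \phi, \Big(\textstyle\sum_{i=1}^g (V^\ast A_i V) \otimes X_i\Big)\phi\Big\rangle \leq \|\phi\|^2 = \|\psi\|^2.
\]
As $\psi$ was arbitrary, this yields $\sum_{i=1}^g A_i \otimes X_i \leq I_{dk}$, which is the claim.

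The only real subtlety, rather than an obstacle, is making the Schmidt-rank reduction precise: one must observe that the Schmidt rank of a vector in $\mathbb C^d \otimes \mathbb C^k$ is bounded by $k$, and that the hypothesis $k \leq d$ is exactly what allows the completion of $W$ to a $k$-dimensional subspace, so that the required isometry $V$ genuinely exists. Everything else is a routine manipulation with the conjugation $Y \mapsto (V \otimes I_k)^\ast Y (V \otimes I_k)$, just as in the proof of Lemma \ref{lem:isometrylarger}.
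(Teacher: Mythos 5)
Your proof is correct and follows essentially the same route as the paper: the forward inclusion via Lemma \ref{lem:isometrylarger}, and the reverse inclusion by using the Schmidt decomposition to observe that any vector in $\mathbb C^d \otimes \mathbb C^k$ is supported on $W \otimes \mathbb C^k$ for a $k$-dimensional subspace $W \subseteq \mathbb C^d$, then testing positivity against the compression $V^\ast A V$ onto $W$. The only cosmetic difference is that the paper routes the computation through an unnormalized maximally entangled state and an auxiliary operator carrying the Schmidt coefficients, whereas you write $\psi = (V \otimes I_k)\phi$ directly, which is a slightly cleaner packaging of the same idea.
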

\begin{proof}
Let us denote the right hand side of Eq. \eqref{eq:isometrysets} by $\mathcal C$. From Lemma \ref{lem:isometrylarger}, it follows that $\mathcal D_A(k) \subseteq\mathcal C$. For the reverse inclusion, let $X \in \mathcal C$. This implies especially $X \in \left( \mathcal M_k^{sa} \right)^g$. We write
\begin{equation*}
Y := \sum_{i = 1}^g A_i \otimes X_i - I_{dk}.
\end{equation*}
To prove the assertion, we need to show that $Y \geq 0$. Let $y \in \mathbb C^d \otimes \mathbb C^k$ be a unit vector with Schmidt decomposition
\begin{equation*}
y = \sum_{i = 1}^k \sqrt{\lambda_i} e_i \otimes f_i,
\end{equation*}
where $\Set{e_i}_{i = 1}^k$ and $\Set{f_j}_{j = 1}^k$ are orthonormal families in $\mathbb C^d$ and $\mathbb C^k$, respectively. Further, $\lambda_i \geq 0$ and such that $\sum_{i = 1}^k \lambda_i = 1$. We then have 
\begin{equation*}
\langle y, Y y \rangle = \sum_{i,j = 1}^k \sqrt{\lambda_i \lambda_j} \langle (e_i \otimes f_i), Y (e_j \otimes f_j)\rangle.
\end{equation*}
Let $\Omega = \sum_{i = 1}^k g_i \otimes g_i$ be an unnormalized maximally entangled state with an orthonormal basis $\Set{g_i}_{i = 1}^k$ of $\mathbb C^k$. Moreover, let $V: \mathbb C^k \to \mathbb C^d$ and $Q: \mathbb C^k \to \mathbb C^k$ be defined as
\begin{equation*}
V = \sum_{i = 1}^k e_i g_i^\ast, \qquad Q = \sum_{j = 1}^k \sqrt{\lambda_{j}}f_j g_j^\ast.
\end{equation*}
Then, $V: \mathbb C^k \hookrightarrow \mathbb C^d$ is an isometry. Therefore, $(V^\ast \otimes I_k)Y(V \otimes I_k) \geq 0$ by assumption, as $X \in \mathcal C$. Hence, 
\begin{equation*}
\langle y, Y y \rangle = \langle \Omega , (I_k \otimes Q^\ast)(V^\ast \otimes I_k)Y(V \otimes I_k)(I_k \otimes Q)\Omega \rangle \geq 0.
\end{equation*}
Thus, $Y$ is positive semidefinite, because $y$ was arbitrary. 
\end{proof}

\begin{thm}\label{thm:matrix-diamond-vs-effects}
	Let $E \in \left(\mathcal{M}^{sa}_d\right)^g$ and let $2E -I := (2 E_1 - I_d, \ldots, 2 E_g - I_d)$. We have
	\begin{enumerate}
		\item $\mathcal{D}_{\diamond, g}(1) \subseteq\mathcal{D}_{2 E-I}(1)$ if and only if $E_1, \ldots, E_g$ are quantum effects.
		\item 	$\mathcal{D}_{\diamond, g} \subseteq\mathcal{D}_{2 E-I}$ if and only if $E_1, \ldots, E_g$ are jointly measurable quantum effects.
		\item $\mathcal{D}_{\diamond, g}(k) \subseteq\mathcal{D}_{2 E-I}(k)$ for $k \in [d]$ if an only if for any isometry $V: \mathbb C^k \hookrightarrow \mathbb C^d$, the induced compressions $V^\ast E_1 V, \ldots V^\ast E_g V$ are jointly measurable quantum effects. 
	\end{enumerate}
\end{thm}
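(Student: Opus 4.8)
The plan is to prove the three statements in order, with the key connection being the explicit description of the matrix diamond at level $1$ and a direct translation between joint POVMs and positivity of the unital map $\Phi\colon \mathcal{OS}_{\diamond,g}\to\mathcal M_d$ sending the diagonal generators of $\mathcal D_{\diamond,g}$ to $2E_i-I_d$. For item (1), I would unfold the definitions: $\mathcal D_{\diamond,g}(1)$ is the $\ell_1$-ball $\{x\in\mathbb R^g:\sum|x_i|\le 1\}$, whose extreme points are $\pm e_i$. Since $\mathcal D_{2E-I}(1)=\{x\in\mathbb R^g:\sum x_i(2E_i-I_d)\le I_d\}$ is defined by a linear matrix inequality and hence is convex, the inclusion $\mathcal D_{\diamond,g}(1)\subseteq\mathcal D_{2E-I}(1)$ holds iff $\pm e_i$ lie in $\mathcal D_{2E-I}(1)$ for each $i$, i.e.\ iff $\pm(2E_i-I_d)\le I_d$, which is exactly $0\le E_i\le I_d$. (One should check the sign conventions carefully — whether the generating matrices of $\mathcal D_{\diamond,g}$ should be taken as $\mathrm{diag}(\epsilon)$ over $\epsilon\in\{\pm1\}^g$ — but the upshot is the equivalence with $E_i\in\mathrm{Eff}_d$.)

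For item (2), I would combine Lemma~\ref{lem:cpspectra} with the Choi-type correspondence between jointly measuring POVMs and positive maps. Boundedness of $\mathcal D_{\diamond,g}(1)$ (it is the $\ell_1$-ball) lets us apply Lemma~\ref{lem:cpspectra}: $\mathcal D_{\diamond,g}\subseteq\mathcal D_{2E-I}$ iff the unital map $\Phi$ sending the $i$-th diagonal generator to $2E_i-I_d$ is completely positive. The operator system $\mathcal{OS}_{\diamond,g}$ is the span of $I$ and $g$ commuting self-adjoint generators, so it can be identified with functions on $\{\pm1\}^g$ (or a suitable subspace), and a unital completely positive map out of it corresponds, by a standard argument, to a positive-operator-valued "measure" indexed by $\{\pm1\}^g$, i.e.\ by bit-strings — precisely a candidate joint POVM $\{R_{i_1,\dots,i_g}\}$. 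The defining relations $\Phi(\text{generator}_i)=2E_i-I_d$ then become, after the affine change $E\leftrightarrow (2E-I+I)/2$ and summing over all but the $i$-th index, exactly the marginal conditions $E_i=\sum_{\text{others}}R_{\cdots,1,\cdots}$ (with the opposite outcome giving $I-E_i$). So complete positivity of $\Phi$ is equivalent to existence of a joint POVM; combined with item (1) this gives the claim. I would present this via the explicit formula: given a joint POVM $\{R_b\}_{b\in\{0,1\}^g}$, set $\Phi(\mathrm{diag}(\epsilon)) = \sum_b \epsilon_1^{b_1}\cdots$ — more cleanly, define the completely positive map by declaring its value on the minimal projections of the (commutative) generated algebra to be the $R_b$, and check unitality and the generator relations; conversely a CP unital $\Phi$ produces the $R_b$ as images of these projections.

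For item (3), I would use Corollary-style compression arguments already set up in the excerpt: by Lemma~\ref{lem:intersection}, $\mathcal D_{2E-I}(k)=\bigcap_V \mathcal D_{V^*(2E-I)V}(k)$, and $V^*(2E_i-I_d)V = 2(V^*E_iV)-I_k$ since $V$ is an isometry. Hence $\mathcal D_{\diamond,g}(k)\subseteq\mathcal D_{2E-I}(k)$ iff $\mathcal D_{\diamond,g}(k)\subseteq\mathcal D_{2(V^*EV)-I}(k)$ for every isometry $V\colon\mathbb C^k\hookrightarrow\mathbb C^d$. Now I would invoke Corollary~\ref{cor:disenough} applied with $D$ playing the role of the matrix-diamond dimension: since $\mathcal D_{\diamond,g}(1)$ is bounded and the target matrices $2(V^*EV)-I_k$ live in $\mathcal M_k$, the level-$k$ inclusion $\mathcal D_{\diamond,g}(k)\subseteq\mathcal D_{2(V^*EV)-I}(k)$ is equivalent to the full inclusion $\mathcal D_{\diamond,g}\subseteq\mathcal D_{2(V^*EV)-I}$, which by item (2) is equivalent to $V^*E_1V,\dots,V^*E_gV$ being jointly measurable quantum effects. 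Intersecting over $V$ finishes the proof. The main obstacle I anticipate is item (2): getting the dictionary between unital completely positive maps on the diamond's operator system and joint POVMs exactly right — in particular justifying that it suffices to specify $\Phi$ on $I$ and the $g$ generators and that the induced "measure" over $\{\pm1\}^g$ really is positive and normalized — and being careful with the affine reparametrization and sign conventions so the marginal equations come out as stated.
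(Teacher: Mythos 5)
Your overall strategy coincides with the paper's: item (1) via the extreme points $\pm e_i$ of the $\ell_1$-ball, item (2) via Lemma \ref{lem:cpspectra} and the dictionary between unital (completely) positive maps on $\mathcal{OS}_{\diamond,g}$ and joint POVMs indexed by $\{\pm 1\}^g$, and item (3) via Lemma \ref{lem:intersection}, the identity $V^\ast(2E_i-I_d)V = 2V^\ast E_i V - I_k$, Corollary \ref{cor:disenough}, and item (2). Items (1) and (3) are complete as written.

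The one genuine gap is the step in item (2) that you yourself flag as the anticipated obstacle, namely the direction ``$\Phi$ completely positive $\implies$ a joint POVM exists.'' You propose to obtain the operators $R_b$ ``as images of these projections'' under $\Phi$, but the minimal projections $g_\eta$ of $\mathbb{C}^{2^g}$ do \emph{not} lie in the operator system $\mathcal{OS}_{\diamond,g}$, which is only $(g+1)$-dimensional inside the $2^g$-dimensional algebra; so $\Phi$ cannot be evaluated on them. The missing ingredient is Arveson's extension theorem: a completely positive $\Phi$ on the operator system extends to a completely positive unital $\tilde\Phi$ on all of $\mathbb{C}^{2^g}$, and it is $\tilde\Phi(g_\eta) =: G_\eta$ that furnishes the joint POVM (positivity of each $G_\eta$ from positivity of $\tilde\Phi$ on the positive elements $g_\eta$, normalization from unitality, and the marginal conditions from expanding $v_0$ and $v_i$ in the basis $(g_\eta)$). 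The converse direction is the easy one and needs no extension: given a joint POVM $(R_\eta)$, the map $g_\eta \mapsto R_\eta$ is positive on the commutative algebra $\mathbb{C}^{2^g}$, hence completely positive, and restricts to $\Phi$ on $\mathcal{OS}_{\diamond,g}$. With Arveson's theorem inserted at that point, your argument matches the paper's proof.
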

\begin{proof}
	Let us start with the first point. Since $\mathcal{D}_{\diamond, g}(1)$ is a convex polytope, we need to check inclusion only at extreme points. That means that the first assertion holds if and only if $\pm e_i \in \mathcal D_{2E-I}$ for all $i \in [g]$, where $\Set{e_i}_{i = 1}^g$ is the standard basis in $\mathbb R^g$. We have 
	\begin{align*}
		e_i \in \mathcal D_E(1) &\iff 2E_i - I \leq I \iff E_i \leq I\\
		-e_i \in \mathcal D_E(1) &\iff -(2E_i-I)\leq I \iff E_i \geq 0,
	\end{align*} 
	proving the first claim.

	We now characterize the free spectrahedral inclusion from the second point. In the following, we will identify diagonal matrices with vectors and the subalgebra of diagonal $2^g \times 2^g$-matrices with $\mathbb C^{2^g}$. The operator system associated with $\mathcal D_{\diamond, g}$ 
	 is 
$$\mathbb C^{2^g} \supseteq \mathcal{OS}_{\diamond, g} := \mathrm{span}\{v_0, v_1, \ldots, v_g\},$$
where, indexing the $2^g$ coordinates by sign vectors $\varepsilon \in \{\pm 1\}^g$,
\begin{align*}
v_0(\varepsilon) &= 1\\
v_i(\varepsilon) &= \varepsilon(i) \qquad \forall i \in [g].
\end{align*} 
Here, $\varepsilon(i)$ is the $i$-th entry of the vector $\varepsilon$. The dimension of this operator system is $g+1$. We define a map $\Phi: \mathcal{O}\mathcal{S}_{\diamond, g} \to \mathcal{M}_d$ by
	\begin{align*}
	v_0  &\mapsto I \\
	v_i & \mapsto 2 E_i - I \qquad \forall i \in [g].
	\end{align*}
The spectrahedral inclusion $\mathcal{D}_{\diamond, g} \subseteq\mathcal{D}_{2 E-I}$ holds if and only if the map $\Phi$ is completely positive (Lemma \ref{lem:cpspectra}). If this is the case, Arveson's extension theorem (see \cite[Theorem 6.2]{Paulsen2002} for a finite-dimensional version) guarantees the existence of a completely positive extension $\tilde{\Phi}$ of this map to the whole algebra $\mathbb C^{2^g}$, because $\mathcal{O}{S}_{\diamond, g}$ is an operator system. As $\mathbb C^{2^g}$ is a commutative matrix algebra, it is enough to show that the extension is positive \cite[Theorem 3.11]{Paulsen2002}.  To find such an extension $\tilde\Phi: \mathbb C^{2^g} \to  \mathcal M_d$, we consider the basis $(g_\eta)_{\eta \in \{\pm 1\}^g}$ of the vector space $\mathbb C^{2^g}$ which is defined as follows:
$$g_\eta(\varepsilon) = \mathbf 1_{\varepsilon = \eta} \geq 0.$$
Here, $\mathbf 1_{\varepsilon = \eta} = 1$ if $\varepsilon = \eta$ and zero otherwise. Let us write $G_\eta := \tilde \Phi(g_\eta)$; since the $g_\eta$ are positive, the (complete) positivity of $\tilde \Phi$ is equivalent to $G_\eta \geq 0$, for all $\eta$. 

We have, for all $\varepsilon \in \{\pm 1\}^g$,
$$1=v_0(\varepsilon) = \sum_{\eta} \mathbf 1_{\varepsilon = \eta} = \sum_\eta g_\eta(\epsilon),$$
and thus we can rewrite
$$\tilde \Phi(v_0) = I \iff \sum_\eta G_\eta = I.$$

We also have
$$v_i(\varepsilon) = \varepsilon(i) = 2\mathbf 1_{\varepsilon(i) = +1} - 1 = 2\sum_{\eta \, : \, \eta(i) = +1} \mathbf 1_{\varepsilon = \eta} - 1 = 2 \sum_{\eta \, : \, \eta(i) = +1} g_\eta(\epsilon) - \sum_\eta g_\eta(\epsilon),$$
and thus we have, for all $i \in [g]$, 
$$\tilde \Phi(v_i) = 2E_i -I \iff \sum_{\eta \, : \, \eta(i) = +1} G_\eta = E_i.$$

Collecting all these facts, we have shown that the map $\Phi$ extends to a (completely) positive map on the whole $\mathbb C^{2^g}$ if and only if there exist operators $(G_\eta)_{\eta \in \{\pm 1\}^g}$ such that
\begin{align*}
\forall \eta \in \{\pm 1\}^g, \quad G_\eta &\geq 0\\
\sum_\eta G_\eta &= I\\
\forall i \in [g], \quad \sum_{\eta \, : \, \eta(i) = +1} G_\eta &= E_i;
\end{align*}
but these are precisely the conditions for the joint measurability of the effects $E_1, \ldots, E_g$ and we are done with the second point.
For the third assertion, it follows from Lemma \ref{lem:intersection} that $\mathcal{D}_{\diamond, g}(k) \subseteq\mathcal{D}_{2 E-I}(k)$ holds if and only if $\mathcal{D}_{\diamond, g}(k) \subseteq\mathcal{D}_{2 V^\ast E V-I}(k)$ for any isometry $V: \mathbb C^k \hookrightarrow \mathbb C^d$. Further, Corollary \ref{cor:disenough} asserts that this is equivalent to $\mathcal{D}_{\diamond, g} \subseteq\mathcal{D}_{2 V^\ast E V-I}$ for all isometries $V$ as above. The claim then follows from the second assertion of this theorem.
\end{proof}

\begin{remark}\label{rem:compression-jm-is-jm}
	The fact that the second point of the theorem above implies the third point, read on the quantum effects side of the equivalence, is a well known fact\footnote{We thank Teiko Heinosaari for pointing this out to us.}: compressions of jointly measurable effects are jointly measurable. 
\end{remark}

\begin{remark}If $E \in (\mathcal M_d^{sa})^g$ is a $g$-tuple of pairwise commuting matrices, then $\mathcal D_{\diamond, g}(1) \subseteq \mathcal D_{2E-I}(1)$ implies $\mathcal D_{\diamond, g} \subseteq \mathcal D_{2E-I}$. This is true, because the effects on the right hand side generate a commutative matrix subalgebra. Inclusion at level one then implies that the corresponding map $\Phi$ is positive. As its range is contained in a commutative matrix algebra, this also implies that $\Phi$ is completely positive \cite[Theorem 3.9]{Paulsen2002}, which then yields the inclusion at the level of free spectrahedra. This recovers the well-known result from quantum information theory that pairwise commuting effects are jointly measurable.
\end{remark}

\begin{remark}We can recover another result from quantum information theory, namely that effects of the form $a I$, $a \in [0,1]$, are trivially compatible with any effect. This corresponds to the fact that 
\begin{equation}\label{eq:trivial_extension}
\mathcal D_{\diamond, g} \subseteq \mathcal D_{2E-I} \iff \mathcal D_{\diamond, g+1} \subseteq \mathcal D_{(2E-I, \alpha I)}
\end{equation}
for any $E \in (\mathcal M_d^{sa})^g$. Here, we write $\alpha = 2a-1$, i.e.\ $\alpha \in [-1,1]$. This can best be seen at the level of maps. It is easy to see that the $v_i$ defining $\mathcal D_{\diamond, g}$ (c.f. proof of Theorem \ref{thm:matrix-diamond-vs-effects}) can be written as 
\begin{equation*}
v_i = \prod_{j=1}^{i-1} I_2 \otimes \mathrm{diag}[+1,-1] \otimes \prod_{j=i+1}^{g} I_2. 
\end{equation*}
Let $\Phi_g$ be the map corresponding to the left hand side and $\Phi_{g+1}$ be the one corresponding to the right hand side of Equation \eqref{eq:trivial_extension}. For the `` only if''-direction, we can simply define $\Phi_{g}(A) = \Phi_{g+1}(A \otimes I_2)$, where $A \in \mathbb C^{2^g}$. For the ``if''-direction, we define the linear map $\Psi: \mathbb C^2 \to \mathbb C$ as
\begin{equation*}
\Psi: (1,0) \mapsto \frac{\alpha+1}{2}, \qquad \Psi: (0,1) \mapsto \frac{1-\alpha}{2}.
\end{equation*}
This map is unital, positive and therefore also completely positive. We can then set $\Phi_{g+1} = \Phi_g \otimes \Psi$. It can then be checked using the above expression for the $v_i$ that this map has indeed the desired properties.
\end{remark}

\begin{thm} \label{thm:jm=inclusion}
It holds that $\Gamma(g,d) = \Delta(g,d)$ and that $\Gamma^0(g,d) = \Delta^0(g,d)$.
\end{thm}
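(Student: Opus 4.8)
The plan is to derive both identities from Theorem~\ref{thm:matrix-diamond-vs-effects} by the affine substitution $B_i = 2E_i - I_d$, i.e.\ $E_i = (B_i + I_d)/2$. The one computation I would record first is the \emph{scaling identity}: for any $s \in \mathbb{R}_+^g$ and any $B \in (\mathcal{M}_d^{sa})^g$,
$$
s \cdot \mathcal{D}_{\diamond, g} \subseteq \mathcal{D}_B \quad \Longleftrightarrow \quad \mathcal{D}_{\diamond, g} \subseteq \mathcal{D}_{(s_1 B_1, \ldots, s_g B_g)},
$$
which holds level by level because $\sum_i B_i \otimes (s_i X_i) = \sum_i (s_i B_i) \otimes X_i$. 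Setting $E_i' := s_i E_i + \tfrac{1-s_i}{2} I_d$, a one-line check gives $s_i B_i = 2 E_i' - I_d$. Hence $s \cdot \mathcal{D}_{\diamond, g} \subseteq \mathcal{D}_B$ is literally the inclusion $\mathcal{D}_{\diamond, g} \subseteq \mathcal{D}_{2E'-I}$, the level-one inclusion $\mathcal{D}_{\diamond, g}(1) \subseteq \mathcal{D}_B(1)$ is $\mathcal{D}_{\diamond, g}(1) \subseteq \mathcal{D}_{2E-I}(1)$, and by parts (1) and (2) of Theorem~\ref{thm:matrix-diamond-vs-effects} these translate respectively into ``the $E_i'$ are jointly measurable effects'' and ``the $E_i$ are effects''. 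Recall also that $E_i \in \mathrm{Eff}_d$ forces $E_i' \in \mathrm{Eff}_d$, since $E_i'$ is a convex combination of $E_i$ and $I_d/2$.

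Given this dictionary, both inclusions defining $\Gamma(g,d) = \Delta(g,d)$ become immediate. For $\Gamma(g,d) \subseteq \Delta(g,d)$ I would take $s \in \Gamma(g,d)$ and $B$ with $\mathcal{D}_{\diamond,g}(1) \subseteq \mathcal{D}_B(1)$; by part (1) the operators $E_i := (B_i+I_d)/2$ lie in $\mathrm{Eff}_d$, so by definition of $\Gamma(g,d)$ the effects $E_i'$ are compatible, whence by part (2) together with the scaling identity $s \cdot \mathcal{D}_{\diamond,g} \subseteq \mathcal{D}_B$; thus $s \in \Delta(g,d)$. For $\Delta(g,d) \subseteq \Gamma(g,d)$ I would take $s \in \Delta(g,d)$ and arbitrary $E \in (\mathrm{Eff}_d)^g$, set $B_i := 2E_i - I_d$, note that part (1) gives the level-one inclusion, and conclude $s \cdot \mathcal{D}_{\diamond,g} \subseteq \mathcal{D}_B$; by the scaling identity and part (2) this says exactly that the $E_i'$ are jointly measurable. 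Applying this in particular to $E = (0,\ldots,0)$ makes each $E_i' = \tfrac{1-s_i}{2} I_d$ a genuine effect, so $s_i \le 1$; hence $s \in [0,1]^g$ and therefore $s \in \Gamma(g,d)$.

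The traceless-restricted identity $\Gamma^0(g,d) = \Delta^0(g,d)$ is proved by the identical argument, now using that the substitution $C_i = 2E_i - I_d$ matches traceless $g$-tuples with effect $g$-tuples satisfying $\tr[E_i] = d/2$ (because $\tr[C_i] = 2\tr[E_i] - d$), and that $(s_1 C_1, \ldots, s_g C_g)$ stays traceless, so one never leaves the restricted class on either side. The only point requiring separate attention is again the domain inclusion $\Delta^0(g,d) \subseteq [0,1]^g$: for $d \geq 2$ I would test the defining implication against the fixed traceless tuple $C_i = \mathrm{diag}(1,-1,0,\ldots,0)$, for which $\mathcal{D}_C(1) = \mathcal{D}_{\diamond,g}(1)$, so that $s \cdot \mathcal{D}_{\diamond,g}(1) \subseteq \mathcal{D}_C(1)$ evaluated at the extreme point $e_i$ yields $s_i \le 1$. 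I do not expect any genuine obstacle in this theorem: essentially all the work has been done in Theorem~\ref{thm:matrix-diamond-vs-effects}, and what remains is the bookkeeping of the change of variables, together with the minor care needed because $\Delta$, $\Delta^0$ are a priori defined on $\mathbb{R}_+^g$ whereas $\Gamma$, $\Gamma^0$ are defined on $[0,1]^g$.
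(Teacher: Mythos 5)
Your proposal is correct and follows essentially the same route as the paper: the scaling identity $s\cdot\mathcal D_{\diamond,g}\subseteq\mathcal D_B\iff\mathcal D_{\diamond,g}\subseteq\mathcal D_{s\cdot B}$, the observation $s_iB_i=2E_i'-I_d$, and the bijections $X\mapsto 2X-I$ (matching traceless matrices with effects of trace $d/2$ in the restricted case) combined with Theorem \ref{thm:matrix-diamond-vs-effects}; you are in fact slightly more careful than the paper about the $[0,1]^g$ versus $\mathbb R_+^g$ domain issue. One small inaccuracy: for the constant tuple $C_i=\mathrm{diag}(1,-1,0,\ldots,0)$ one has only $\mathcal D_{\diamond,g}(1)\subseteq\mathcal D_C(1)$ (the latter is the slab $\lvert\sum_i x_i\rvert\le 1$, not the $\ell_1$-ball), but that inclusion is all your argument actually uses, so the conclusion $s_i\le 1$ stands.
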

\begin{proof}
Let $s \in \mathbb R^g$. Then $s \in \Gamma(g,d)$ if and only if $s_{1}E_1 + (1-s_{1})I/2, \ldots, s_{g}E_g + (1-s_{g})I/2$ are jointly measurable for all effects $E_1, \ldots E_g \in \mathrm{Eff}_d$. It can easily be seen that  
\begin{equation*}
 s\cdot\mathcal D_{\diamond,g} \subseteq\mathcal D_{2E-I} \iff \mathcal D_{\diamond,g} \subseteq\mathcal D_{s\cdot(2E-I)} \iff \mathcal D_{\diamond,g} \subseteq\mathcal D_{2E^\prime-I},
\end{equation*}
where $E_i^\prime = s_i E_i + (1-s_{i})I/2$ for all $i \in [g]$. Therefore, Theorem \ref{thm:matrix-diamond-vs-effects} implies that $s \in \Gamma(g,d)$ is equivalent to the implication  $\mathcal D_{\diamond,g}(1) \subseteq \mathcal D_{2E-I}(1) \Rightarrow s\cdot\mathcal D_{\diamond,g} \subseteq\mathcal D_{2E-I}$ for all $E \in (\mathcal M_d^{sa})^g$. This is equivalent to $s \in \Delta(g,d)$, because $X \mapsto 2X-I$ is a bijection on $\mathcal M_d^{sa}$. The second assertion follows since $X \mapsto 2X-I$ is a bijection between $\Set{X \in \mathcal M_d^{sa}:\tr[X] = d/2}$ and $\Set{A \in \mathcal M_d^{sa}:\tr[A] = 0}$.
\end{proof}
\section{Lower bounds from cloning} \label{sec:cloning}

In this section we provide, using known facts about the set $\Gamma^{clone}$, lower bounds for $\Gamma^{lin}$ and $\Gamma$. We start by recalling the main results from the theory of \emph{symmetric} cloning. 

\begin{thm}[{\cite[Theorem 7.2.1]{Keyl2002}}]
	For a quantum channel $\mathcal T: \mathcal M_d^{\otimes N} \to \mathcal M_d^{\otimes g}$, consider the quantities
	\begin{equation*}
		\mathcal F_{c,1}(\mathcal T) = \inf_{j \in [g]} \inf_{\sigma \mathrm{~pure}} \tr[\sigma^{(j)}\mathcal T(\sigma^{\otimes N})]
	\end{equation*}
	where $\sigma^{(j)} = I^{\otimes (j-1)} \otimes \sigma \otimes I^{\otimes (g - j)} \in \mathcal M_d^{\otimes g}$ and
	\begin{equation*}
		\mathcal F_{c, all}= \inf_{\sigma \mathrm{~pure}} \tr[\sigma^{\otimes g}\mathcal T(\sigma^{\otimes N})].
	\end{equation*}
	These quantities are both maximized by the \emph{optimal quantum cloner}
	\begin{equation*}
		\mathcal T(\rho) = \frac{d[N]}{d[g]} S_g(\rho \otimes I)S_g.
	\end{equation*}
	Here, $d[g] = \binom{d + g - 1}{g}$ is the dimension of the symmetric subspace $\vee^g \mathbb C^d \subseteq (\mathbb C^d)^{\otimes g}$ and $S_g$ is the corresponding orthogonal projection.
\end{thm}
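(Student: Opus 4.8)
The plan is to follow the classical \emph{twirl, then solve a small linear programme} route for optimal cloning, which collapses the optimization over all channels to a finite‑dimensional, affine problem. Throughout I assume $N\le g$, so that $\rho\otimes I$ with $I$ acting on the extra $g-N$ systems makes sense. The starting observation is that both $\mathcal F_{c,1}$ and $\mathcal F_{c,all}$ are invariant under conjugating the input by $U^{\otimes N}$ and the output by $U^{\otimes g}$ for $U\in\mathcal U_d$, and under the permutation unitaries $W_\pi$ acting on the $g$ output tensor factors. Hence, given any channel $\mathcal T$, its twirl
\begin{equation*}
\bar{\mathcal T}(\rho)=\int_{\mathcal U_d}\frac1{g!}\sum_{\pi}(U^\ast)^{\otimes g}\,W_\pi^\ast\,\mathcal T\!\bigl(U^{\otimes N}\rho\,(U^\ast)^{\otimes N}\bigr)\,W_\pi\,U^{\otimes g}\,dU
\end{equation*}
(the sum running over permutations of $[g]$) is again a channel, and for each pure $\sigma$ and each index $j$ the number $\tr[\sigma^{(j)}\bar{\mathcal T}(\sigma^{\otimes N})]$ is an average of numbers of the form $\tr[\tilde\sigma^{(\pi(j))}\mathcal T(\tilde\sigma^{\otimes N})]$ with $\tilde\sigma=U\sigma U^\ast$, each at least the corresponding figure of merit of $\mathcal T$. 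Thus twirling cannot decrease either quantity, so it suffices to optimize over $\mathcal U_d$‑ and permutation‑covariant channels; for those, both figures of merit no longer depend on $\sigma$ or $j$, and one may fix $\sigma=|0\rangle\!\langle 0|$.

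Next I would reformulate each objective. For a covariant $\mathcal T$ the single‑copy reduction $\rho\mapsto\tr_{2,\ldots,g}[\mathcal T(\rho)]$ is $\mathcal U_d$‑covariant, so $\mathcal T(\sigma^{\otimes N})$ has single‑copy marginal $\eta\,\sigma+(1-\eta)\,I/d$ for some shrinking factor $\eta=\eta(\mathcal T)\in[0,1]$, whence $\mathcal F_{c,1}(\mathcal T)=\eta+(1-\eta)/d$ and maximizing $\mathcal F_{c,1}$ amounts to maximizing $\eta$. For $\mathcal F_{c,all}$, since $\sigma^{\otimes g}=S_g\sigma^{\otimes g}S_g$ only the compression of the output onto $\vee^g\mathbb C^d$ enters, and post‑composing $\mathcal T$ with the channel $\rho\mapsto S_g\rho S_g+\tr[(I-S_g)\rho]\,\tau_g$, where $\tau_g$ is the maximally mixed state on $\vee^g\mathbb C^d$, does not decrease $\mathcal F_{c,all}$; so for that objective the output may be taken to lie in $\vee^g\mathbb C^d$, and likewise the input in $\vee^N\mathbb C^d$.

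The core step is then the linear programme. A covariant channel is determined by its Choi matrix, which lies in the commutant of a $\mathcal U_d$‑representation on a tensor product of symmetric powers of $\mathbb C^d$; decomposing that tensor product into irreducibles via Pieri's rule (adding $g-N$ boxes, no two in the same column, to a one‑row diagram) shows the commutant is commutative and low‑dimensional, so the covariant channels form a simplex whose extreme points are the elementary covariant maps indexed by the irreducibles that occur. Both objectives — $\eta$ for $\mathcal F_{c,1}$ and the overlap $\tr[\sigma^{\otimes g}\mathcal T(\sigma^{\otimes N})]$ for $\mathcal F_{c,all}$ — are affine in the Choi matrix, hence maximized at extreme points; evaluating them on the extreme points shows that the \emph{same} one, attached to the top (one‑row) diagram, maximizes both, and that this map is exactly $\rho\mapsto\frac{d[N]}{d[g]}S_g(\rho\otimes I)S_g$. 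That this map sends $\vee^N\mathbb C^d$ into $\vee^g\mathbb C^d$ is clear, and the constant is forced by trace preservation: the partial trace of $S_g$ over the last $g-N$ systems equals $\frac{d[g]}{d[N]}$ times the projection onto $\vee^N\mathbb C^d$ (Schur--Weyl together with a trace count), so $\frac{d[N]}{d[g]}\tr[S_g(\rho\otimes I)S_g]=\tr[\rho]$ on $\vee^N\mathbb C^d$. The optimal values of the two fidelities are then read off by evaluating the two traces for this $\mathcal T$.

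The hard part will be this last step, namely proving that one and the same covariant channel maximizes $\mathcal F_{c,1}$ and $\mathcal F_{c,all}$ simultaneously. Concretely one must verify that the single‑copy marginal of the symmetric cloner is already the extremal covariant $N$‑to‑$1$ map $\sigma\mapsto\eta_{N,g}\,\sigma+(1-\eta_{N,g})\,I/d$ with the largest admissible factor $\eta_{N,g}$; this is exactly where the combinatorics of the Pieri decomposition and the partial‑trace identities for the projector $S_g$ genuinely enter, and it is the only place where a naive symmetry argument is not enough.
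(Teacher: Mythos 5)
The paper does not prove this statement: it is imported verbatim from \cite[Theorem 7.2.1]{Keyl2002} (the underlying arguments being those of Werner and Keyl--Werner), so there is no in-paper proof to compare yours against. Judged on its own, your outline follows exactly the standard route for this result, and the reduction steps you describe are all sound: the twirling argument (each term $\tr[\tilde\sigma^{(\pi(j))}\mathcal T(\tilde\sigma^{\otimes N})]$ dominates the infimum, so twirling cannot decrease either figure of merit), the observation that for covariant channels the objectives are independent of $\sigma$ and $j$, the restriction of input and output to symmetric subspaces, and the normalization $\tr_{g-N}[S_g]=\frac{d[g]}{d[N]}S_N$ forced by Schur's lemma and a trace count.

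The genuine gap is the one you yourself flag, and it is not a small one: it is the entire mathematical content of the optimality claim. Two things are asserted but not established. First, that the covariant channels form a simplex: this requires showing that the relevant representation (the conjugate of $\vee^N\mathbb C^d$ tensored with $\vee^g\mathbb C^d$, on which the Choi matrix lives) is multiplicity-free, so that its commutant is abelian; the Pieri rule as you invoke it decomposes $\vee^N\otimes\vee^{g-N}$, which is related but not literally the space in question, so an extra identification is needed. Second, and decisively, the evaluation of the two affine functionals on the extreme points of that simplex, and the verification that the extreme point attached to the one-row diagram wins for \emph{both} of them simultaneously --- this is where the value $\eta=(g+d)/(g(1+d))$ (for $N=1$) actually comes from, and without it your argument shows only that the optimum is attained at some covariant channel and that the symmetric cloner is a legitimate trace-preserving competitor, not that it is the maximizer. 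As it stands the proposal is a correct reduction to a finite linear programme together with a promissory note for its solution.
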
 
From \cite[Equation 3.7]{Werner1998}, we know further that
\begin{equation*}
	\tilde{\rho} := \mathrm{tr}_{i^c}[\mathcal T(\rho)] = \gamma \rho + (1 - \gamma)I/d,
\end{equation*}
where $\gamma = (g+d)/(g(1 + d))$ (for $N = 1$). Here, $\mathrm{tr}_{i^c}$ means tracing out all systems but the $i$-th one. Going to the dual picture, we can compute that for $E \in \mathcal M_d$
\begin{equation*}
	\tr[\tilde{\rho}E] = \tr[\rho\left(\gamma E + (1 - \gamma)\frac{\tr E }{d}I\right)].
\end{equation*}
We can therefore identify $E^\prime = \gamma E + (1 -\gamma) I_E$, where $I_E$ is the trivial effect $\tr{E}/d I$ depending on $E$. Therefore, $\gamma$ is a lower bound on the joint measurability of a family of effects, 
\begin{equation*}
	\mathcal T^\ast\left(\Set{E^{(i)}_1}_i \otimes \ldots \otimes \Set{E^{(j)}_g}_j\right)
\end{equation*}
being the joint observables.

Inserting the expression for $\gamma$ from the symmetric cloning bounds and using Proposition \ref{prop:inclusions-Gamma}, we obtain the following result; note that below, the second quantity is always larger than the third one. 
\begin{prop}\label{prop:Gamma-clone-vs-Gamma-and-hat-Gamma}
	For all $g,d \geq 2$, 
	\begin{align*}
		\frac{g+d}{g(1+d)} (\underbrace{1, 1, \ldots, 1}_{g \text{ times}}) &\in \Gamma^{clone}(g,d) \subseteq \Gamma^{lin}(g,d)\subseteq \Gamma^{all}(g,d)\\
\frac{g+2d}{g(1+2d)} (\underbrace{1, 1, \ldots, 1}_{g \text{ times}}) &\in \Gamma^{clone}(g,2d) \subseteq \Gamma^{0}(g,2d)\subseteq \Gamma(g,d)\subseteq \Gamma^{all}(g,d)\\
		\frac{g+d}{g + d(2g-1)} (\underbrace{1, 1, \ldots, 1}_{g \text{ times}}) &\in F(\Gamma^{clone}(g,d)) \subseteq F(\Gamma^{all}(g,d)) \subseteq \Gamma(g,d).
	\end{align*}
\end{prop}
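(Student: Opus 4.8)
The plan is to assemble the membership statements from the symmetric cloning bounds recalled just above and then propagate them through the inclusion diagram of Proposition~\ref{prop:inclusions-Gamma}; no new ideas are required beyond these two ingredients.

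First I would pin down the base point. Specializing the optimal quantum cloner to $N=1$, where $d[1]=\binom{d}{1}=d$, the map $\mathcal T(\rho)=\frac{d}{d[g]}S_g(\rho\otimes I)S_g$ is a quantum channel, so its adjoint $\mathcal T^\ast\colon\mathcal M_d^{\otimes g}\to\mathcal M_d$ is unital and completely positive. The identity $\tr[\tilde\rho\,E]=\tr[\rho(\gamma E+(1-\gamma)\frac{\tr E}{d}I)]$ recalled above, valid for all states $\rho$ and all $E$, says precisely that $\mathcal T^\ast(I^{\otimes(i-1)}\otimes X\otimes I^{\otimes(g-i)})=\gamma X+(1-\gamma)\frac{\tr X}{d}I$ for every $i\in[g]$ and every $X\in\mathcal M_d$ (first for self-adjoint $X$, then by linearity). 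With $\gamma=(g+d)/(g(1+d))$ this shows $\gamma(1,\ldots,1)\in\Gamma^{clone}(g,d)$, and, running the same argument with $d$ replaced by $2d$, $\frac{g+2d}{g(1+2d)}(1,\ldots,1)\in\Gamma^{clone}(g,2d)$.

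Next I would chase the inclusions line by line. For the first line, $\Gamma^{clone}(g,d)\subseteq\Gamma^{lin}(g,d)$ is Proposition~\ref{prop:inclusions-Gamma}(5) and $\Gamma^{lin}(g,d)\subseteq\Gamma^{all}(g,d)$ is part (2). For the second line, parts (5) and (4) give $\Gamma^{clone}(g,2d)\subseteq\Gamma^{lin}(g,2d)\subseteq\Gamma^{0}(g,2d)$, part (6) gives $\Gamma^{0}(g,2d)\subseteq\Gamma(g,d)$, and part (1) gives $\Gamma(g,d)\subseteq\Gamma^{all}(g,d)$. For the third line, since taking the image of a set under a map preserves inclusions, $\Gamma^{clone}(g,d)\subseteq\Gamma^{all}(g,d)$ yields $F(\Gamma^{clone}(g,d))\subseteq F(\Gamma^{all}(g,d))$, and $F(\Gamma^{all}(g,d))\subseteq\Gamma(g,d)$ is part (7). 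The displayed point in the third line is $F$ applied to $\gamma(1,\ldots,1)$: from $F(t,\ldots,t)=(t/(2-t),\ldots,t/(2-t))$ and $2-\gamma=(g+d(2g-1))/(g(1+d))$, one computes $\gamma/(2-\gamma)=(g+d)/(g+d(2g-1))$, as claimed. For the parenthetical remark that the second displayed constant dominates the third, I would cross-multiply the two positive fractions and note that $(g+2d)(g+d(2g-1))-(g+d)\,g(1+2d)=2d^2(g-1)\ge 0$.

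There is no genuine obstacle here: the substance lies entirely in the cloning computation (already carried out above) and in Proposition~\ref{prop:inclusions-Gamma}. The only points that need a moment's care are (i) that $\Gamma^{clone}$ is phrased in the Heisenberg picture, so it is the \emph{adjoint} of the cloning channel --- automatically unital and completely positive --- that witnesses membership, and (ii) the elementary algebra identifying $\gamma/(2-\gamma)$ with $(g+d)/(g+d(2g-1))$ and checking the dominance of the two lower bounds.
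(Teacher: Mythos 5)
Your proposal is correct and follows essentially the same route as the paper, which likewise obtains the base point from the Werner/Keyl symmetric-cloning shrinking factor $\gamma=(g+d)/(g(1+d))$ (passed to the Heisenberg picture) and then propagates it through the inclusions of Proposition~\ref{prop:inclusions-Gamma}. Your explicit verification of $\gamma/(2-\gamma)=(g+d)/(g+d(2g-1))$ and of the dominance $2d^2(g-1)\ge 0$ just fills in arithmetic the paper leaves implicit.
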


In the general, non-symmetric case, the exact form of the set $\Gamma^{clone}(g,d)$ has been computed, by different methods, in \cite{kay2016optimal} and \cite{studzinski2014group}; the following restatement of the optimal cloning probabilities is taken from the former reference. 

\begin{thm}{\cite[Theorem 1, Section 2.3]{kay2016optimal}}\label{thm:Gamma-clone-asymmetric}
	For any $g,d \geq 2$, 
	\begin{align}\label{eq:optimal-asymmetric-cloning}
	\Gamma^{clone}(g,d)  =& \Bigg\{s \in [0,1]^g \, : \, (g+d-1)\left[g-d^2+d + (d^2-1)\sum_{i=1}^g s_i\right] \\&\leq \left(\sum_{i=1}^g \sqrt{s_i(d^2-1)+1} \right)^2\Bigg\}. \nonumber
	\end{align}
	Using the variables $t_i := s_i(d^2-1)+1 \in [1,d^2]$, we have the simpler expression
	$$\Gamma^{clone}(g,d)  = \left\{s \in [0,1]^g \, : \, \|t\|_1 - \frac{\|t\|_{1/2}}{g+d-1} \leq d(d-1)\right\},$$
	where $\|\cdot\|_p$ denotes the $\ell_p$-quantity on $\mathbb R^g$: $\|t\|_p = (\sum_{i=1}^g |t_i|^p)^{1/p}$.
\end{thm}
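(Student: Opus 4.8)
The plan is to treat the two displayed equalities separately. The first equality—the description of $\Gamma^{clone}(g,d)$ by the inequality \eqref{eq:optimal-asymmetric-cloning}—is not something one reproves from scratch; it is a translation of the optimal asymmetric $1\to g$ cloning trade-off in dimension $d$, established in \cite{kay2016optimal} (and independently in \cite{studzinski2014group}), into the language of Definition \eqref{eq:def-Gamma-clone}. So the first task is to set up a dictionary: a unital completely positive map $\mathcal T\colon \mathcal M_d^{\otimes g}\to \mathcal M_d$ satisfying the marginal conditions in \eqref{eq:def-Gamma-clone} is, in the Heisenberg picture, exactly a $1\to g$ quantum cloner whose $i$-th output clone is the depolarizing channel with shrinking parameter $s_i$, precisely as we used in the symmetric case via Werner's formula. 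Kay parametrizes the attainable region by the single-clone fidelities, which for a depolarizing channel with parameter $s_i$ on $\mathcal M_d$ is an affine function of $s_i$; under this affine change of variables the optimal-trade-off region becomes the set in \eqref{eq:optimal-asymmetric-cloning}, the constraint $s_i\in[0,1]$ corresponding to $t_i := s_i(d^2-1)+1 \in [1,d^2]$. I would spell out this correspondence, being careful that the region one obtains is the full set of achievable shrinking vectors and not merely its Pareto frontier.

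Granting the first equality, the second is a routine substitution, which I would carry out explicitly. Put $t_i = s_i(d^2-1)+1$, equivalently $s_i = (t_i-1)/(d^2-1)$. Then $\sum_{i=1}^g s_i = (\|t\|_1 - g)/(d^2-1)$ and $\sqrt{s_i(d^2-1)+1} = \sqrt{t_i}$, so the right-hand side of \eqref{eq:optimal-asymmetric-cloning} equals $\big(\sum_{i=1}^g \sqrt{t_i}\big)^2 = \|t\|_{1/2}$. Substituting, the bracket on the left becomes
\[
 g - d^2 + d + (d^2-1)\cdot\frac{\|t\|_1-g}{d^2-1} = \|t\|_1 - d(d-1),
\]
so \eqref{eq:optimal-asymmetric-cloning} reads $(g+d-1)\big(\|t\|_1 - d(d-1)\big) \le \|t\|_{1/2}$. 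Dividing by the positive quantity $g+d-1$ and rearranging yields $\|t\|_1 - \|t\|_{1/2}/(g+d-1) \le d(d-1)$, which is the claimed simpler form.

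The main obstacle lies entirely in the first step: reconciling conventions. Different cloning references use different normalizations for the fidelities and different placements of the ``$d$ versus $d^2$'' factors, and one has to verify that the affine substitution matching Kay's fidelity variables to our $s_i$ is the correct one and that the resulting boundary inequality comes out with exactly the stated constants. Once the dictionary is fixed, there is essentially no analysis left—the first equality is a citation and the second is the elementary computation above.
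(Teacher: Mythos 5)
Your proposal matches the paper's proof: the paper likewise treats the first equality as a direct citation of Kay's Equation (5) under the affine change of variables $F_i = s_i + (1-s_i)/d$ (the single-clone fidelity of a depolarizing output), and the second equality as the elementary substitution $t_i = s_i(d^2-1)+1$, which you carry out correctly. The only difference is that you spell out the routine algebra that the paper leaves implicit.
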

\begin{proof}
	The formula is exactly \cite[Equation (5)]{kay2016optimal}, after the change of variables $F_i = s_i + (1-s_i)/d$, for all $i \in [g]$.
\end{proof}

\begin{remark}
	Note that the symmetric cloning optimal probability is recovered by setting $s_1=s_2=\cdots = s_g$ in the result above, yielding the the maximal value
	$$s_{\max} = \frac{g+d}{g(d+1)}.$$
\end{remark}

\begin{remark}\label{rem:Gamma-clone-d-infty}
	In the regime $d \to \infty$, the left hand side of \eqref{eq:optimal-asymmetric-cloning} behaves like $d^3(\|s\|_1 - 1)$, whereas the right hand side behaves like $d^2 \|s\|_{1/2}$.
	Hence, asymptotically, the achievable cloning probabilities should satisfy $\sum_i s_i \leq 1$; the set such values is the probability simplex, i.e.~the convex hull of the points $\{e_i\}_{i=1}^g$, where $e_i$ is the basis vector having a 1 in position $i$ and zeros elsewhere. 
\end{remark}

We discuss next the special cases of pairs and triplets, i.e.~$g=2,3$. The most studied for (asymmetric) cloning is the $g=2$ case (see, e.g. \cite{cerf1998asymmetric} or the more recent \cite[Theorem 3]{hashagen2016universal}). We plot in the left panel of Figure \ref{fig:clone-2-3-d} the sets $\Gamma^{clone}(2,d)$ for various values of $d$, as subsets of $\Gamma(2,d) = \mathrm{QC}_2$. 
\begin{prop}
	For all $d \geq 2$, we have 
	$$\Gamma^{clone}(2,d) = \{(s,t) \in [0,1]^2 \, : \, s+t-\frac 2 d \sqrt{(1-s)(1-t)} \leq 1\}.$$
\end{prop}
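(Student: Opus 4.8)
The proof is the specialization of Theorem \ref{thm:Gamma-clone-asymmetric} to $g=2$, followed by an elementary but slightly delicate algebraic reduction; here is how I would organize it.

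First I would use the compact form of Theorem \ref{thm:Gamma-clone-asymmetric}. With $u := s(d^2-1)+1$ and $v := t(d^2-1)+1$ (both in $[1,d^2]$), a pair $(s,t)\in[0,1]^2$ lies in $\Gamma^{clone}(2,d)$ if and only if $\|(u,v)\|_1 - \|(u,v)\|_{1/2}/(d+1)\leq d(d-1)$. Since for two entries $\|(u,v)\|_{1/2}=(\sqrt u+\sqrt v)^2=u+v+2\sqrt{uv}$, multiplying through by $d+1>0$ and collecting terms rewrites this as
\[
d(u+v)-2\sqrt{uv}\leq d(d^2-1),\qquad\text{i.e.}\qquad L\leq R,
\]
where $L:=d(u+v)-d(d^2-1)$ and $R:=2\sqrt{uv}$. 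The goal is to show $L\leq R$ is equivalent to the claimed inequality $\ell\leq r$, where $\ell:=d(s+t-1)$ and $r:=2\sqrt{(1-s)(1-t)}$.

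Next I would substitute $u=d^2-(1-s)(d^2-1)$, $v=d^2-(1-t)(d^2-1)$, write $a:=1-s$, $b:=1-t\in[0,1]$, and expand. Using $a+b=1-\ell/d$ and $ab=r^2/4$ one obtains the two identities
\[
L=(d^2-1)\,\ell+2d,\qquad uv=d^2+d(d^2-1)\,\ell+\tfrac{1}{4}(d^2-1)^2\,r^2,
\]
hence $R^2=4d^2+4d(d^2-1)\ell+(d^2-1)^2r^2$, and comparing with $L^2=(d^2-1)^2\ell^2+4d(d^2-1)\ell+4d^2$ gives the crucial relation
\[
R^2-L^2=(d^2-1)^2\,(r^2-\ell^2).
\]

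Finally I would conclude by a case split on the sign of $s+t-1$ (i.e.\ of $\ell$). If $s+t>1$, then $\ell>0$, so $L=(d^2-1)\ell+2d>0$, and since also $r\geq0$ we get $L\leq R\iff L^2\leq R^2\iff \ell^2\leq r^2\iff \ell\leq r$: the two memberships coincide. If $s+t\leq1$, then $\ell\leq0\leq r$, so $\ell\leq r$ holds automatically and it remains to check $L\leq R$; this is clear when $L\leq0$, while if $L>0$ then $L\leq R\iff \ell^2\leq r^2$, and $L>0$ forces $1-(s+t)<2/(d^2-1)$, whereas minimizing $ab$ over $a+b=2-(s+t)\in[1,2]$, $a,b\in[0,1]$ (a vertex of the feasible segment) gives $ab\geq 1-(s+t)$, so $r^2=4ab\geq 4(1-(s+t))\geq d^2(1-(s+t))^2=\ell^2$, where the last step uses $1-(s+t)<2/(d^2-1)\leq 4/d^2$, valid for every $d\geq2$. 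Since both sets sit inside $[0,1]^2$, this pointwise equivalence finishes the proof. The main obstacle is precisely this last case: when $s+t\leq1$ the quantity $(1-s)(1-t)$ can be arbitrarily small, so one cannot simply square $\ell\leq r$ and must instead verify that $L\leq R$ still holds "for free", which is where the numerical fact $2/(d^2-1)\leq 4/d^2$ (true exactly because $d\geq 2$) is used; everything else is routine once the identity $R^2-L^2=(d^2-1)^2(r^2-\ell^2)$ is available.
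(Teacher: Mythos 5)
Your proof is correct, and I verified the key computations: with $u=d^2-(1-s)(d^2-1)$, $v=d^2-(1-t)(d^2-1)$ one indeed gets $L=(d^2-1)\ell+2d$ and $R^2-L^2=(d^2-1)^2(r^2-\ell^2)$, and the endpoint minimization giving $ab\geq 1-(s+t)$ together with $2/(d^2-1)\leq 4/d^2$ closes the only delicate case. Your route differs from the paper's in the algebraic strategy: the paper also starts from Theorem \ref{thm:Gamma-clone-asymmetric} at $g=2$, but then simply solves both quadratic-in-$t$ inequalities for $t$ and observes that they produce the identical upper bound $t \leq 1-s-\frac{2(1-s)}{d^2}+\frac{2\sqrt{1-s+(d^2-1)s(1-s)}}{d^2}$, leaving all sign considerations implicit. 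Your version instead isolates the exact identity $R^2-L^2=(d^2-1)^2(r^2-\ell^2)$ and then handles the squaring step honestly via a case split on the sign of $s+t-1$; this is longer but makes explicit why the region $\{s+t\leq 1\}$ (the probability simplex, consistent with Remark \ref{rem:Gamma-clone-d-infty}) sits inside both sets automatically, a point the paper's one-line argument glosses over. Both approaches are valid; yours buys rigor at the squaring steps, the paper's buys brevity.
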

\begin{proof}
	To see that the condition above is equivalent to equation \eqref{eq:optimal-asymmetric-cloning} for $g=2$, one can solve both for $t$ and show that the answer is the following:
	$$t \leq 1-s-\frac{2(1-s)}{d^2}+\frac{2 \sqrt{1-s+(d^2-1)s(1-s)}}{d^2}.$$
\end{proof}

The case $g=3$ is also worth mentioning, since one can obtain manageable expressions for the set $\Gamma^{clone}(3,d)$. In the right panel of Figure \ref{fig:clone-2-3-d}, we plot the slice $\Gamma^{clone}(3,d) \cap \{(s,t,t)\}$ for various values of $d$ (this corresponds to asking that the ``quality'' of the second and third clones are identical), against the Euclidean ball (see Section \ref{sec:LB-QC} for the relevance of the quarter-circle). 

\begin{prop}
	For all $d \geq 2$, we have, either in explicit or in parametric form \cite{iblisdir2005multipartite}
	\begin{align*}
		\Gamma^{clone}(3,d) &= \left\{(s,t,u) \in [0,1]^3 \, : \, (d+2)\left[3-d^2+d+(d^2-1)(s+t+u)\right]\leq \right.\\
		&\left. \qquad\qquad\qquad \left(\sqrt{(d^2-1)s+1} + \sqrt{(d^2-1)t+1} + \sqrt{(d^2-1)u+1}\right)^2\right\}\\
		&= \left\{ \left(1-b^2-c^2-\frac{2 b c}{d+1},1-c^2-a^2-\frac{2 c a}{d+1},1-a^2-b^2-\frac{2 a b}{d+1} \right) \, : \right. \\
		&\left. \qquad\qquad\qquad a^2+b^2+c^2 + 2(ab+bc+ca)/d \leq 1,~a,b,c\geq 0\right\}.
	\end{align*}
\end{prop}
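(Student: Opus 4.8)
The first of the two displayed equalities is immediate: it is Theorem~\ref{thm:Gamma-clone-asymmetric} specialised to $g=3$. For $g=3$ one has $g+d-1=d+2$, the quantity $\left(\sum_{i=1}^3\sqrt{s_i(d^2-1)+1}\right)^2$ is exactly the right-hand side appearing in the claim, and $(g+d-1)\big[g-d^2+d+(d^2-1)\sum_i s_i\big]=(d+2)\big[3-d^2+d+(d^2-1)(s+t+u)\big]$. So nothing beyond substitution is needed there, and the real content of the proposition is that the explicit set coincides with the parametric one.

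For that I would pass to the variables $x_i:=\sqrt{s_i(d^2-1)+1}$, which range over $[1,d]$ as $s_i$ ranges over $[0,1]$. Using $\|t\|_1=\sum_i x_i^2$ and $\|t\|_{1/2}=\left(\sum_i x_i\right)^2$ in the ``simpler expression'' of Theorem~\ref{thm:Gamma-clone-asymmetric}, the membership condition becomes the single quadratic inequality
\begin{equation*}
(d+2)\big(x_1^2+x_2^2+x_3^2\big)-(x_1+x_2+x_3)^2\;\le\;d(d-1)(d+2),\qquad x\in[1,d]^3 .
\end{equation*}
On the parametric side one computes directly $x_1^2=d^2-(d^2-1)(b^2+c^2)-2(d-1)bc$ together with its two cyclic images, and the parameter constraint is governed by the positive-definite quadratic form on $(a,b,c)$ with matrix $\tfrac{d-1}{d}I+\tfrac1d J$. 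Substituting the cyclic formulas for $x_i^2$ and simplifying, using $\sum_i x_i^2=3d^2-2(d^2-1)(a^2+b^2+c^2)-2(d-1)(ab+bc+ca)$, reduces the displayed inequality to a relation of the shape
\begin{equation*}
x_1x_2+x_2x_3+x_3x_1\;\ge\;Q(a,b,c)
\end{equation*}
with $Q$ an explicit quadratic, and one shows that this holds exactly for $(a,b,c)$ in the prescribed admissibility region; together with a surjectivity check (solving the three equations $x_i^2=\cdots$ for $(a,b,c)\ge 0$, which is possible because the forms involved are positive definite, with $(s,t,u)\in[0,1]^3$ pinning down the branch) this gives both inclusions. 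A shorter alternative, if one imports it, is to invoke \cite{iblisdir2005multipartite}: the displayed parametric family is precisely their parametrisation of the optimal asymmetric $1\to 3$ cloners, which by Theorem~\ref{thm:Gamma-clone-asymmetric} describes $\Gamma^{clone}(3,d)$; the only bookkeeping is then the change of variables $F_i=s_i+(1-s_i)/d$ already used to prove that theorem.

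The step I expect to be the genuine obstacle is handling the cross-terms $x_ix_j=\sqrt{t_it_j}$ inside $(x_1+x_2+x_3)^2$. For $g=2$ one can literally solve the boundary equation for the remaining variable in closed form---this is how the preceding proposition for $g=2$ is verified---but for $g=3$ the $t_i$ are not perfect squares, so $\sqrt{t_it_j}$ is irrational in $(a,b,c)$ and does not visibly match $Q$. The cleanest route is to verify the identity first on the common boundary: the parametric boundary $a^2+b^2+c^2+\tfrac2d(ab+bc+ca)=1$, $a,b,c\ge 0$, and the explicit boundary $(d+2)\sum_i x_i^2=\left(\sum_i x_i\right)^2+d(d-1)(d+2)$, $x\in[1,d]^3$, each cut out a two-dimensional surface, and one checks that the parametric map lands on and is onto the explicit surface; the full regions then match by the monotone rescaling $(a,b,c)\mapsto\lambda(a,b,c)$, $\lambda\in[0,1]$, combined with convexity of $\Gamma^{clone}(3,d)$ from Proposition~\ref{prop:Gamma_convex}. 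If the boundary computation becomes unwieldy, leaning on \cite{iblisdir2005multipartite} for the parametrisation is the pragmatic fallback.
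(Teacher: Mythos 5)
Your handling of the first equality is exactly right, and it is all the content there is on that side: Theorem \ref{thm:Gamma-clone-asymmetric} with $g=3$ gives $g+d-1=d+2$ and the displayed inequality by pure substitution. For the second equality the paper in fact offers no proof at all --- the parametric form is simply imported from \cite{iblisdir2005multipartite}, so your ``pragmatic fallback'' coincides with the paper's actual argument; and your change of variables $x_i=\sqrt{t_i}\in[1,d]$ together with the formula $x_1^2=d^2-(d^2-1)(b^2+c^2)-2(d-1)bc$ is correct as far as it goes.

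The direct verification you sketch, however, has a genuine gap, and the device you propose for closing it would fail. The rescaling $(a,b,c)\mapsto\lambda(a,b,c)$, $\lambda\in[0,1]$, drives the image point to $(1,1,1)$ at $\lambda=0$; but $(1,1,1)\notin\Gamma^{clone}(3,d)$, since perfect cloning is impossible --- plugging $s=t=u=1$ into the explicit inequality reduces it to $2(d-1)^2\le 0$. So ``match the two boundary surfaces, then fill in by rescaling toward the origin'' would certify membership of points that are not in the set. The same computation shows that the parametric region must be read with care: the surface $a^2+b^2+c^2+2(ab+bc+ca)/d=1$ is the optimal trade-off surface (on the symmetric axis it reproduces $s_{\max}=(3+d)/(3(d+1))$ from the Remark after Theorem \ref{thm:Gamma-clone-asymmetric}), and the achievable region lies on the far side of it, not between it and the origin where your rescaling lives; as literally printed with the constraint ``$\le 1$'', the parametric set contains $(1,1,1)$ and so cannot coincide with the explicit one, which is a sign that the intended content is the equality case (the Pareto surface) together with a downward-closure statement. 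Any honest direct proof must therefore establish that the explicit region is exactly that downward closure inside $[0,1]^3$, and the cross terms $x_ix_j$ you flag are precisely where the work sits; neither convexity (Proposition \ref{prop:Gamma_convex}) nor the rescaling substitutes for it. Invoking \cite{iblisdir2005multipartite} for the parametrisation, plus the change of variables $F_i=s_i+(1-s_i)/d$ already used in the proof of Theorem \ref{thm:Gamma-clone-asymmetric}, is not a fallback here --- it is the proof the paper intends.
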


\begin{figure}[htbp]
	\begin{center}
		\includegraphics[scale=.6]{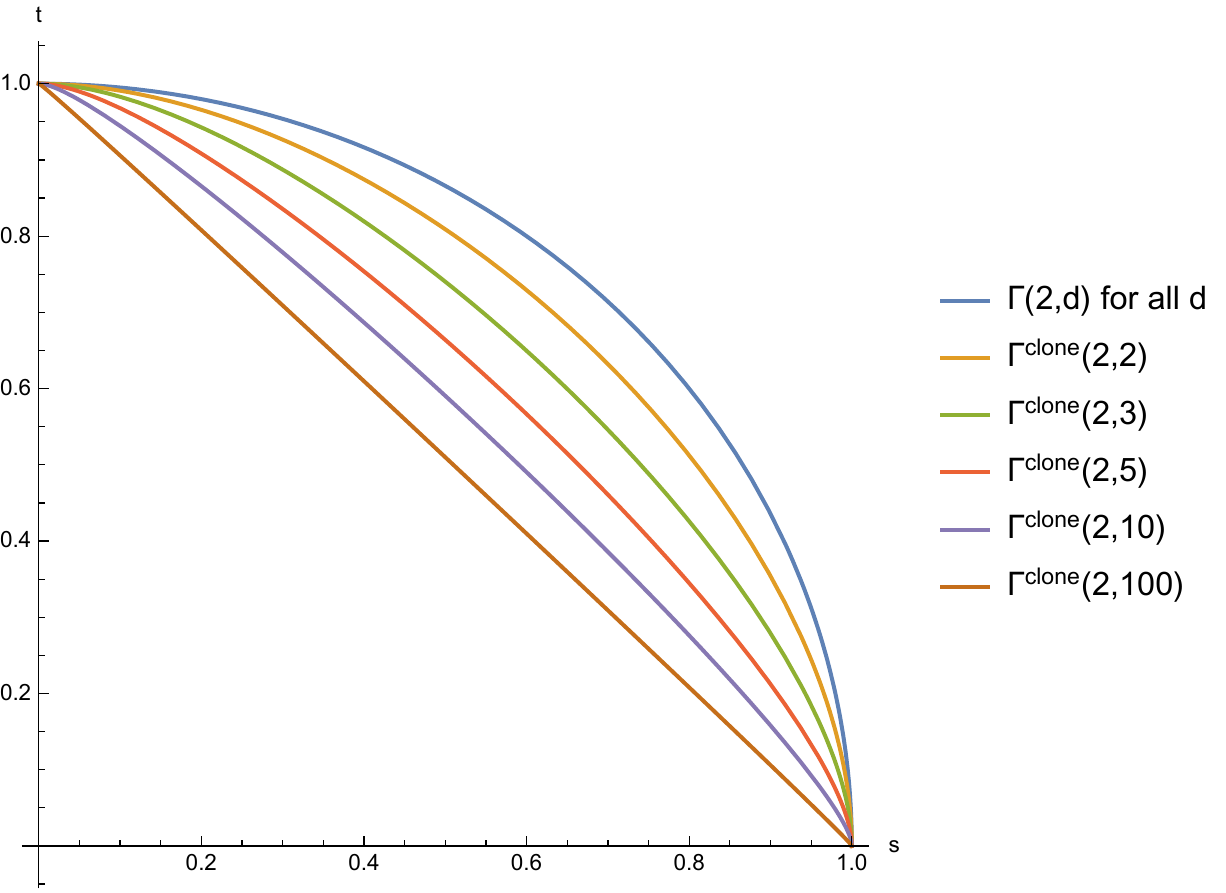}  \includegraphics[scale=.6]{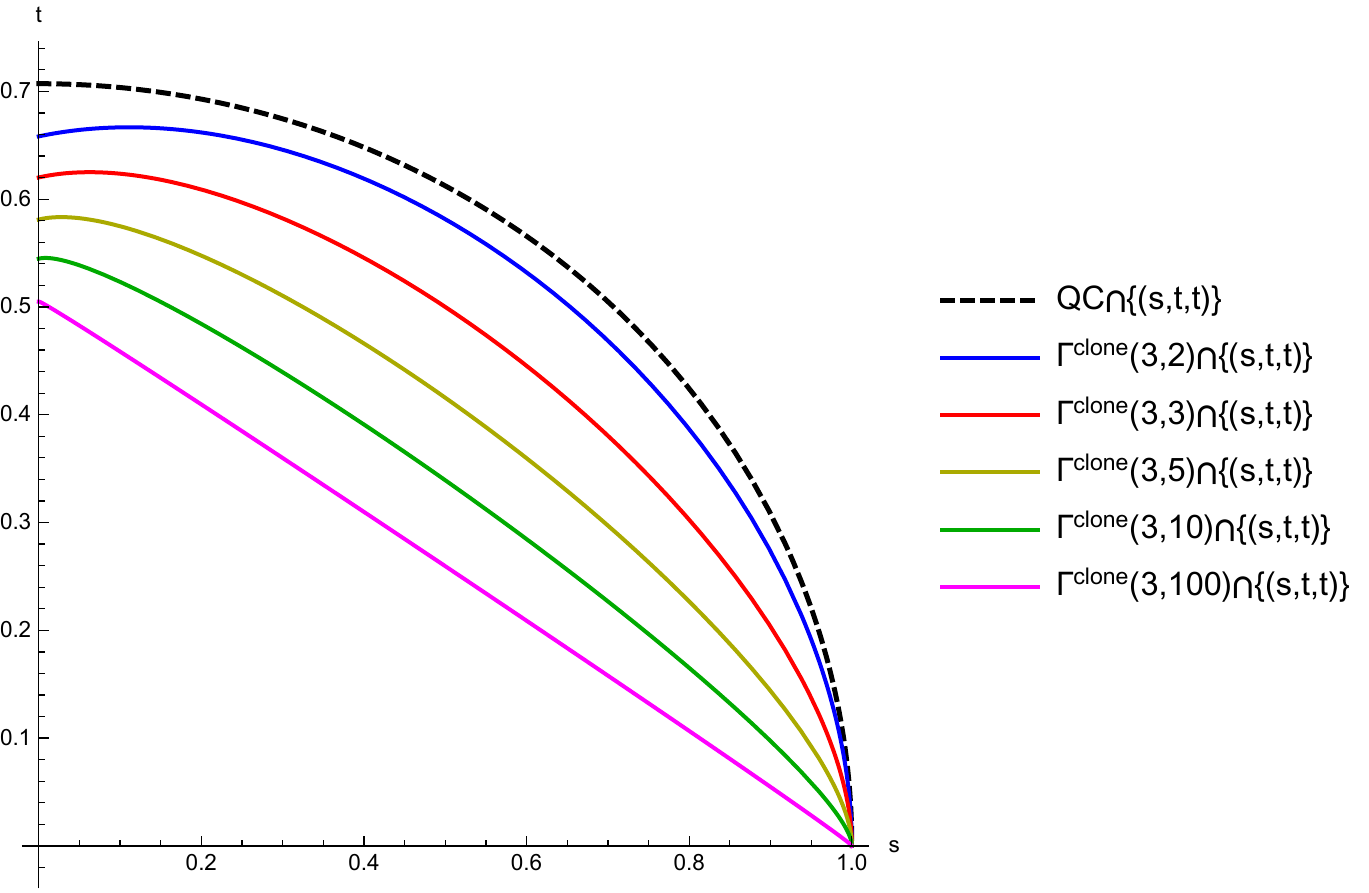}
		\caption{Left: the sets $\Gamma^{clone}(2,d)$ for $d=2,3,5,10,100$ as subsets of $\Gamma(2,d)$, which is a quarter-circle for all $d \geq 2$ (see Corollary \ref{cor:large-d-Gamma-is-QC}). Right: the cuts $\Gamma^{clone}(3,d) \cap \{(s,t,t)\}$ for $d=2,3,5,10,100$.}
		\label{fig:clone-2-3-d}
	\end{center}
\end{figure}

\section{Lower bounds from free spectrahedra}
\subsection{Dimension dependent and symmetric lower bounds}

This part basically reproduces the proof of Theorem 1.4 in \cite{helton2014dilations} and shows that making minor changes, the proof also works in the case where the spectrahedra are given in terms of complex instead of real matrices. Note that in this case, we obtain an inclusion constant of $2d$ instead of merely $d$. Let us first recall a lemma from \cite{helton2014dilations}, which was proved there for real matrices but carries over without change.

\begin{lem}[Lemma 8.2 from {\cite{helton2014dilations}}]\label{lem:sums}
Suppose $T = (T_{j,l})$ is a $k \times k$ block matrix with blocks of equal size. If $\left\|{T_{j,l}}\right\|_\infty \leq 1$ for every $j$, $l \in [k]$, then $\|{T}\|_\infty \leq k$.
\end{lem}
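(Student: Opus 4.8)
The statement to prove is: if $T = (T_{j,l})_{j,l \in [k]}$ is a $k \times k$ block matrix with blocks of equal size and $\|T_{j,l}\|_\infty \leq 1$ for all $j,l \in [k]$, then $\|T\|_\infty \leq k$.

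\medskip

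The plan is to bound the operator norm by testing against a unit vector $x \in \mathbb{C}^{kn}$ (where $n$ is the common block size), written in block form as $x = (x_1, \ldots, x_k)$ with $x_j \in \mathbb{C}^n$ and $\sum_j \|x_j\|^2 = 1$. Computing the $j$-th block of $Tx$ gives $(Tx)_j = \sum_{l=1}^k T_{j,l} x_l$, so by the triangle inequality and $\|T_{j,l}\|_\infty \leq 1$,
\[
\|(Tx)_j\| \leq \sum_{l=1}^k \|T_{j,l} x_l\| \leq \sum_{l=1}^k \|x_l\|.
\]
Write $a_l := \|x_l\|$, so that $\sum_l a_l^2 = 1$. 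Then $\|(Tx)_j\| \leq \sum_l a_l = \|a\|_1$ for every $j$, and hence
\[
\|Tx\|^2 = \sum_{j=1}^k \|(Tx)_j\|^2 \leq k \left(\sum_{l=1}^k a_l\right)^2 = k \, \|a\|_1^2.
\]
Finally, Cauchy--Schwarz gives $\|a\|_1 \leq \sqrt{k}\,\|a\|_2 = \sqrt{k}$, so $\|Tx\|^2 \leq k \cdot k = k^2$, i.e.\ $\|Tx\| \leq k$. Taking the supremum over unit vectors $x$ yields $\|T\|_\infty \leq k$.

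\medskip

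There is essentially no hard obstacle here; the only points requiring a little care are (i) making sure the two uses of Cauchy--Schwarz are applied in the right direction — once to pass from $\sum_j \|(Tx)_j\|^2 \le k \max_j \|(Tx)_j\|^2$ type reasoning (here it is just that each term is at most $\|a\|_1^2$ and there are $k$ of them), and once to bound $\|a\|_1$ by $\sqrt k$ — and (ii) noting the bound is sharp (e.g.\ $T_{j,l} = I_n$ for all $j,l$ gives $\|T\|_\infty = k$), which explains why no better constant is available in this generality. One could alternatively phrase the argument via the triangle inequality $\|T\|_\infty \le \sum_{j,l}\|E_{j} T_{j,l} E_l^\ast\|$ over the block-coordinate projections, but the direct vector computation above is the cleanest route.
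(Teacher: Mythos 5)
Your argument is correct: the block-row estimate $\|(Tx)_j\| \le \sum_l \|x_l\| = \|a\|_1$, the summation over the $k$ block rows, and the Cauchy--Schwarz bound $\|a\|_1 \le \sqrt{k}\,\|a\|_2$ combine exactly as you say to give $\|Tx\| \le k$ for unit $x$, and the example $T_{j,l} = I_n$ shows sharpness. The paper itself does not reprove this lemma --- it simply recalls it from Helton et al.\ and notes the real-matrix proof carries over unchanged --- so there is nothing to compare against in the source; your direct vector computation is a perfectly standard and complete way to establish it (an equally common alternative is to decompose $T$ into its $k$ cyclic block diagonals, each of norm at most $1$, and apply the triangle inequality).
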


\begin{prop} \label{prop:heltonsymmetric}
Let $A \in \left (\mathcal M_D^{sa} \right)^g$ and $B \in \left (\mathcal M_d^{sa} \right)^g$. Suppose further that $-\mathcal D_A \subseteq\mathcal D_A$ and that $\mathcal D_A(1)$ is bounded. If $\mathcal D_A(1) \subseteq\mathcal D_B(1)$, then $\mathcal D_A\subseteq2d\mathcal D_B$.
\end{prop}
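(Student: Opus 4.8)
The plan is to adapt the proof of Theorem 1.4 in \cite{helton2014dilations} to the complex setting, using the dilation-theoretic approach encoded in Lemma \ref{lem:cpspectra}. By that lemma, since $\mathcal D_A(1)$ is bounded, the hypothesis $\mathcal D_A(1) \subseteq \mathcal D_B(1)$ says exactly that the unital map $\Phi: \mathcal{OS}_A \to \mathcal M_d$, $\Phi(A_i) = B_i$, is positive, and the desired conclusion $\mathcal D_A \subseteq 2d\,\mathcal D_B$ is equivalent to complete positivity of the rescaled map $\Phi': A_i \mapsto B_i/(2d)$ (equivalently, $\mathcal D_A \subseteq 2d\,\mathcal D_B$ iff $\tfrac{1}{2d}\Phi$ is completely positive). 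So the whole game is to upgrade positivity of $\Phi$ to complete positivity after paying a factor $2d$.

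The key steps, in order: First, fix $n$ and take $Y = I_D \otimes X_0 - \sum_{i=1}^g A_i \otimes X_i \geq 0$ in $\mathcal M_n^{sa}(\mathcal{OS}_A)$, which as in the proof of Lemma \ref{lem:cpspectra} we may assume has $X_0 > 0$ and hence, after conjugating by $X_0^{-1/2}$, reduces to the case $X_0 = I_n$, i.e. $X = (X_1,\dots,X_g) \in \mathcal D_A(n)$; we must show $\sum_i B_i \otimes X_i \leq 2d\, I_{nd}$. Second, I would use the symmetry hypothesis $-\mathcal D_A \subseteq \mathcal D_A$ together with boundedness to control the individual blocks: this is where one shows that positivity of $\Phi$ forces $\|B_i\|_\infty \leq \|A_i\|$-type bounds, and more importantly, following Helton--Klep--McCullough, one writes $\Phi$ via a Wittstock-type decomposition or a direct estimate so that, when tested on the matrix tuple $X \in \mathcal D_A(n)$, the operator $\sum_i B_i \otimes X_i$ appears as a $d \times d$ block matrix (indexing by a basis of $\mathbb C^d$) each of whose blocks has operator norm at most $1$. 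Third, apply Lemma \ref{lem:sums} to this $d \times d$ block matrix to conclude its norm is at most $d$; the extra factor of $2$ compared to the real case comes from splitting into real and imaginary parts (each self-adjoint piece contributing a block matrix of norm $\leq d$), which is precisely the one place where the complex setting genuinely differs from \cite{helton2014dilations}. Finally, reassemble: this norm bound gives $-2d\, I \leq \sum_i B_i \otimes X_i \leq 2d\, I$, i.e. $X \in 2d\,\mathcal D_B(n)$, and since $n$ was arbitrary we get $\mathcal D_A \subseteq 2d\,\mathcal D_B$.

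I expect the main obstacle to be the second step: producing the representation of $\sum_i B_i \otimes X_i$ as a $d\times d$ block matrix with unit-norm blocks. In \cite{helton2014dilations} this is done by exploiting that $\mathcal D_A$ is symmetric to write points of $\mathcal D_A$ in a normalized dilation form and then tracking how $\Phi$ acts; one has to be careful that the positivity (not complete positivity) of $\Phi$ is all that is used at this stage, and that the bound on each block is genuinely the scalar-level estimate $\mathcal D_A(1) \subseteq \mathcal D_B(1)$ applied to rank-one tested vectors. The real-versus-complex bookkeeping — ensuring the decomposition into real and imaginary (self-adjoint) parts does not lose more than a factor $2$, and that both parts still live in $\mathcal D_A$ after the symmetrization — is the delicate accounting I would need to carry out carefully, but it is routine once the block-matrix picture is set up.
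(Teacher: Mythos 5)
Your outline has the right ingredients (Lemma \ref{lem:sums}, the real/imaginary splitting responsible for the extra factor $2$, the translation to positivity of maps), but the crux --- the choice of block decomposition and the bound on its blocks --- is where the proposal goes wrong, and it is exactly the step you defer as the ``main obstacle''. You propose to view $\sum_i B_i \otimes X_i$ as a $d\times d$ block matrix indexed by a basis of $\mathbb C^d$, i.e.\ with blocks $\sum_i (B_i)_{s,t}X_i \in \mathcal M_n$, and to bound each block by the scalar inclusion $\mathcal D_A(1)\subseteq\mathcal D_B(1)$ ``applied to rank-one tested vectors''. But the scalar inclusion controls $\sum_i x_i B_i$ for $x\in\mathcal D_A(1)$; it says nothing directly about linear combinations of the $X_i$ whose coefficients are entries of $B$. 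The decomposition that the entrywise estimate actually controls is the transposed one: an $n\times n$ block matrix with $d\times d$ blocks $T_{s,t}=\sum_j B_j\,(X_j)_{s,t}$. This is what the paper does: using the symmetry hypothesis $-\mathcal D_A\subseteq\mathcal D_A$ (whose role you never pin down), one compresses $Z=\sum_j A_j\otimes X_j$ with the vectors $(e_s\pm e_t)/\sqrt2$ and $(e_s\pm \mathrm{i}\,e_t)/\sqrt2$ to show $\pm\mathrm{Re}(X)_{s,t},\ \pm\mathrm{Im}(X)_{s,t}\in\mathcal D_A(1)\subseteq\mathcal D_B(1)$, whence $\|T_{s,t}\|_\infty\le 2$ and, by Lemma \ref{lem:sums}, $\|\sum_j B_j\otimes X_j\|_\infty\le 2n$ --- note: $2n$, not $2d$.

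This makes your closing step ``since $n$ was arbitrary'' fail: the level-$n$ bound degrades with $n$, so it does not give $\mathcal D_A(n)\subseteq 2d\,\mathcal D_B(n)$ for $n>d$. The missing idea is to specialize to $n=d$, where the bound reads $\mathcal D_A(d)\subseteq 2d\,\mathcal D_B(d)$, and then invoke Corollary \ref{cor:disenough}: the relevant map lands in $\mathcal M_d$, so $d$-positivity already implies complete positivity and the level-$d$ inclusion propagates to all levels. (Your $d\times d$ decomposition could in fact be salvaged: the real and imaginary parts of $e_s^*B_ie_t$ lie in the polar dual of $\mathcal D_A(1)$ by the scalar inclusion together with the symmetry of $\mathcal D_A(1)$, and any $X\in\mathcal D_A(n)$ satisfies $\pm\sum_i c_iX_i\le I_n$ for such dual vectors $c$ by compressing with rank-one isometries; this would give the bound $2d$ uniformly in $n$ and bypass Corollary \ref{cor:disenough}. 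But that duality argument is precisely the content you left unproved, and it is not the route the paper takes.)
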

\begin{proof}
Fix some level $n$ and consider $\{e_l\}_{l=1}^n$, the standard orthonormal basis for $\mathbb C^n$. Fix $1 \leq s \neq t \leq n$ and set $p_{s,t}^{\pm} = 1/\sqrt{2}(e_s \pm e_t) \in \mathbb C^n$, $\phi_{s,t}^\pm = 1/\sqrt{2}(e_s \pm \mathrm{i} e_t) \in \mathbb C^n$. Further, let
\begin{equation*}
P_{s,t}^\pm = I_D \otimes p_{s,t}^\pm, \qquad \Phi_{s,t}^\pm = I_D \otimes \phi_{s,t}^\pm.
\end{equation*}
Then $\left(P_{s,t}^\pm\right)^\ast P_{s,t}^\pm = I_{D} = \left(\Phi_{s,t}^\pm\right)^\ast \Phi_{s,t}^\pm$. Let $M \in \mathcal M_n^{sa}$, $C \in \mathcal M_D$. Then 
\begin{align*}
\left(P_{s,t}^+\right)^\ast (C \otimes M) P_{s,t}^+ - \left(P_{s,t}^-\right)^\ast (C \otimes M) P_{s,t}^- &= 2 C \otimes \mathrm{Re}(M)_{s,t} \\
\left(\Phi_{s,t}^+\right)^\ast (C \otimes M) \Phi_{s,t}^+ - \left(\Phi_{s,t}^-\right)^\ast (C \otimes M) \Phi_{s,t}^- & = -2 C \otimes \mathrm{Im}(M)_{s,t}.
\end{align*}
Let $X \in \mathcal D_A(n)$ and let $Z = \sum_j A_j \otimes X_j$. By hypothesis, $-X \in \mathcal D_A(n)$ as well, thus $\pm Z \leq I_{Dn}$. By the above calculations, also $\pm \left(P_{s,t}^\pm\right)^\ast Z P_{s,t}^\pm \leq I_D$. Therefore, $\pm \left(p_{s,t}^\pm\right)^\ast X p_{s,t}^\pm \in \mathcal D_A(1)$ for all $s \neq t$. Convexity of $\mathcal D_A(1)$ together with the above implies $\pm \mathrm{Re}(X)_{s,t} = \pm (\mathrm{Re}(X_1)_{s,t}, \ldots, \mathrm{Re}(X_g)_{s,t}) \in \mathcal D_A(1)$. The same holds true for $s = t$ if one chooses $p_{t} = e_t$ and makes the necessary adjustments in the above argument. Considering $\phi_{s,t}^\pm$, we find that $\pm \mathrm{Im}(X)_{s,t} \in \mathcal D_A(1)$ for all $s$, $t \in [n]$ as well. Now set 
\begin{equation*}
T_{s,t} = \sum_j B_j \otimes (X_j)_{s,t}.
\end{equation*}
It holds that
\begin{equation*}
\left\|{T_{s,t}}\right\|_{{\infty}} \leq \left\|{\sum_j B_j \otimes \mathrm{Re}(X_j)_{s,t}}\right\|_{{\infty}} + \left\|{\sum_k B_k \otimes \mathrm{Im}(X_k)_{s,t}}\right\|_{{\infty}}.
\end{equation*}
Moreover, we know that
\begin{equation*}
- I_n \leq \sum_j B_j \otimes \mathrm{Re}(X_j)_{s,t} \leq I_n \qquad - I_n \leq \sum_k B_k \otimes \mathrm{Im}(X_k)_{s,t} \leq I_n
\end{equation*}
as the real and imaginary parts of the entries of $X$ have been found to be in $\mathcal D_A(1)$ and therefore also in $\mathcal D_B(1)$ by hypothesis. Combining the two findings, it follows that $\left\|{T_{s,t}}\right\|_{{\infty}} \leq 2$. An application of Lemma \ref{lem:sums} to $T/2$ allows us to conclude that $\left\|{T}\right\|_{{\infty}} \leq 2n$. Thus, 
\begin{equation*}
- I_{dn} \leq \frac{1}{2n}\sum_j B_j \otimes X_j \leq I_{dn}
\end{equation*}
which implies $\pm 1/(2n) X \in \mathcal D_B(n)$. At level $n=d$, this implies $\mathcal D_A(d) \subseteq2d\mathcal D_B(d)$. 
Since $B \in (\mathcal M_d^{sa})^g$, an application of Corollary \ref{cor:disenough} 
proves the assertion.
\end{proof}
\begin{remark}
The assumption that $\mathcal D_A(1)$ is bounded is not necessary and does not appear in \cite{helton2014dilations}, see Remark \ref{rmk:bounded}. 
\end{remark} 

Exploiting the link between inclusion of free spectrahedra and joint measurability, the previous proposition corresponds to:
\begin{cor} \label{cor:heltonspoint}
Let $s = (1/(2d), \ldots 1/(2d))$. Then $s \in \Gamma(g,d)$.
\end{cor}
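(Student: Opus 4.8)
The plan is to read the corollary off directly from the free-spectrahedral reformulation. By Theorem~\ref{thm:jm=inclusion} we have $\Gamma(g,d) = \Delta(g,d)$, so it suffices to show that the symmetric scaling vector $s = (1/(2d), \ldots, 1/(2d))$ belongs to $\Delta(g,d)$. Unwinding Definition~\ref{def:inclusion-constants}, this amounts to proving that for every $B \in (\mathcal M_d^{sa})^g$ the implication $\mathcal D_{\diamond,g}(1) \subseteq \mathcal D_B(1) \Rightarrow s\cdot \mathcal D_{\diamond,g} \subseteq \mathcal D_B$ holds.

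First I would check that the matrix diamond is an admissible choice of $\mathcal D_A$ in Proposition~\ref{prop:heltonsymmetric}. Writing $\mathcal D_{\diamond,g} = \mathcal D_A$ for the diagonal tuple $A$ obtained by stacking the $2^g$ constraints $\sum_i \epsilon_i X_i \leq I$, the index set $\{-1,+1\}^g$ is invariant under $\epsilon \mapsto -\epsilon$, so in fact $-\mathcal D_A = \mathcal D_A$ and in particular $-\mathcal D_A \subseteq \mathcal D_A$; moreover $\mathcal D_A(1)$ is the $\ell_1$-unit ball of $\mathbb R^g$, hence bounded. Therefore Proposition~\ref{prop:heltonsymmetric} applies and, from $\mathcal D_{\diamond,g}(1) \subseteq \mathcal D_B(1)$, yields $\mathcal D_{\diamond,g} \subseteq 2d\,\mathcal D_B$.

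It then remains to rescale. Since the coordinatewise map $X \mapsto (2d\,X_1, \ldots, 2d\,X_g)$ is a bijection on $(\mathcal M_n^{sa})^g$ at every level $n$, the inclusion $\mathcal D_{\diamond,g} \subseteq 2d\,\mathcal D_B$ is equivalent to $s\cdot \mathcal D_{\diamond,g} \subseteq \mathcal D_B$ for $s = (1/(2d),\ldots,1/(2d))$. As $B$ was arbitrary, this gives $s \in \Delta(g,d)$, and Theorem~\ref{thm:jm=inclusion} turns this back into $s \in \Gamma(g,d)$.

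I do not expect any real obstacle here: all the substance is already contained in Proposition~\ref{prop:heltonsymmetric} and Theorem~\ref{thm:jm=inclusion}. The only points needing (minimal) care are matching the uniform scaling constant $2d$ of the proposition with the scaling vector $s = (1/(2d),\ldots,1/(2d))$, and verifying that the matrix diamond genuinely satisfies the symmetry and boundedness hypotheses of Proposition~\ref{prop:heltonsymmetric}, both of which are immediate from its definition.
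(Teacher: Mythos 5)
Your proof is correct and follows essentially the same route as the paper: verify that the matrix diamond satisfies the symmetry and boundedness hypotheses of Proposition~\ref{prop:heltonsymmetric}, apply it to obtain $\mathcal D_{\diamond,g}\subseteq 2d\,\mathcal D_B$, rescale to conclude $s\in\Delta(g,d)$, and invoke Theorem~\ref{thm:jm=inclusion}. The only difference is that you spell out the rescaling step and the hypothesis checks in slightly more detail than the paper does.
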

\begin{proof}
The matrix diamond is symmetric, i.e.\ it holds that $-\mathcal D_{\diamond, g} = \mathcal D_{\diamond, g}$. Further, $\mathcal D_{\diamond, g}(1)$ is bounded. From Proposition \ref{prop:heltonsymmetric}, it follows that $\mathcal D_{\diamond, g}(1) \subseteq\mathcal D_A(1)$ implies $\mathcal D_{\diamond, g}\subseteq 2d\mathcal D_A$ for any $A \in \left(\mathcal M_d^{sa}\right)$. Thus, $s \in \Delta(g,d)$. The claim then follows from Theorem \ref{thm:jm=inclusion}.
\end{proof}

\begin{remark}
In \cite{helton2014dilations}, Proposition \ref{prop:heltonsymmetric} was proven for spectrahedra defined by real matrices and with $d$ instead of $2d$. We point out that this result cannot hold in the complex case. Consider $d = 2$, $g = 3$, $A_i = \sigma_i$ and $B_i = \sigma_i^\top$ for $i \in [3]$. Here, $\sigma_i$ are the usual Pauli matrices. In this case, the operator system spanned by the $A_i$ is the whole matrix algebra. Let $s_0 = (s, s, s)$ with $s \in [0,1]$. For $s_0 \mathcal D_A \subseteq\mathcal D_B$ to hold, the map
\begin{equation*}
\Phi^\prime(A) = s A^\top + (1-s) \tr[A]/2 I
\end{equation*}
for $A \in \mathcal M_2$ must be completely positive. This follows from Lemma \ref{lem:cpspectra} (see also \cite[Section 1.4]{helton2014dilations}). A short calculation shows that the minimal eigenvalue of the Choi matrix \cite[Section 2.2.2]{watrous2018theory} of this map is $-s/2 + (1-s)/4$. For the Choi matrix to be positive, we thus require $s \leq 1/3$, which is fulfilled by $s = 1/2d$ but not by $s = 1/d$. However, this calculation does not imply that $2d$ is optimal in Proposition \ref{prop:heltonsymmetric}, leaving this question open.
\end{remark}

\subsection{Dimension independent lower bounds}\label{sec:LB-QC}

We restate in this section one implication of \cite[Theorem 6.6]{passer2018minimal}, which, interpreted in terms of inclusion constants, yields Theorem \ref{thm:LB-Passer-et-al}. For the convenience of the reader and for the sake of being self-contained, we reproduce the proof with several simplifications and written in a language more familiar to quantum information specialists. 

Let $\mathrm{MatBall}_g$ be the \emph{matrix ball} (see \cite[Chapter 7]{pisier2003introduction}, \cite[Section 14]{helton2014dilations}, \cite[Section 9]{davidson2016dilations} for the different operator structures one can put on the $\ell_2$-ball)
$$\mathrm{MatBall}_g:=\{(X_1, \ldots, X_g) \in (\mathcal M_d^{sa})^g \, : \, \sum_{i=1}^g X_i^2 \leq I\}.$$

We recall the following result \cite[Lemma 6.5]{passer2018minimal}.
\begin{lem}\label{lem:diamond-in-ball}
For all $g \geq 1$, $\mathcal D_{\diamond, g} \subseteq \mathrm{MatBall}_g$.
\end{lem}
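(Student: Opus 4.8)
The plan is to show the inclusion $\mathcal{D}_{\diamond,g} \subseteq \mathrm{MatBall}_g$ level by level: given a tuple $X = (X_1,\ldots,X_g) \in \mathcal{D}_{\diamond,g}(n)$, i.e.\ self-adjoint matrices satisfying $\sum_i \epsilon_i X_i \leq I_n$ for all sign vectors $\epsilon \in \{-1,+1\}^g$, I want to conclude $\sum_i X_i^2 \leq I_n$. The natural route is to average the defining inequalities of the matrix diamond against a random sign vector. Concretely, for a fixed unit vector $v \in \mathbb C^n$, set $a_i := \langle v, X_i v\rangle \in \mathbb R$; the constraint gives $\sum_i \epsilon_i a_i \leq 1$ for every choice of signs, hence $\sum_i |a_i| \leq 1$, which already recovers $\mathcal{D}_{\diamond,g}(1) \subseteq \mathrm{MatBall}_g(1)$ but is \emph{not} by itself enough at higher levels because $X_i^2$ is not controlled by $\langle v, X_i v\rangle^2$.

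The fix is to exploit the sign constraints more cleverly. Since $\sum_i \epsilon_i X_i \leq I_n$ for all $\epsilon$, squaring is tempting but does not preserve the order. Instead I would use that for any signs $\epsilon$ we also have $-I_n \leq \sum_i \epsilon_i X_i$ (apply the constraint to $-\epsilon$), so $\|\sum_i \epsilon_i X_i\|_\infty \leq 1$ for every $\epsilon$. Now take the expectation over $\epsilon$ chosen uniformly in $\{-1,+1\}^g$: $\mathbb E_\epsilon\big[(\sum_i \epsilon_i X_i)^2\big] = \sum_i X_i^2$ because the cross terms $\mathbb E[\epsilon_i\epsilon_j] = 0$ for $i \neq j$. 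Since each $(\sum_i \epsilon_i X_i)^2 \leq \|\sum_i \epsilon_i X_i\|_\infty^2 I_n \leq I_n$, averaging a family of positive semidefinite matrices each bounded above by $I_n$ yields $\sum_i X_i^2 = \mathbb E_\epsilon[(\sum_i \epsilon_i X_i)^2] \leq I_n$, which is exactly the statement that $X \in \mathrm{MatBall}_g(n)$. As $n$ was arbitrary, the free spectrahedral inclusion follows.

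I expect the only subtlety to be the bookkeeping that $\sum_i \epsilon_i X_i$ is self-adjoint (clear, since the $X_i$ are self-adjoint and $\epsilon_i \in \mathbb R$) so that $\|\sum_i \epsilon_i X_i\|_\infty \leq 1$ is genuinely equivalent to $-I_n \leq \sum_i \epsilon_i X_i \leq I_n$, and that the operator-convex combination (here a uniform average over $2^g$ points) of matrices dominated by $I_n$ is again dominated by $I_n$ — both completely standard. There is no real obstacle; the proof is a one-line second-moment computation once the averaging idea is in place. One could alternatively phrase it without probability: $\sum_i X_i^2 = 2^{-g}\sum_{\epsilon \in \{-1,+1\}^g} \big(\sum_i \epsilon_i X_i\big)^2 \leq 2^{-g} \cdot 2^g \cdot I_n = I_n$, using that each summand is a positive semidefinite contraction.
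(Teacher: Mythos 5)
Your proof is correct and is essentially identical to the paper's: square the diamond constraints (justified by the two-sided bound $-I_n \leq \sum_i \epsilon_i X_i \leq I_n$) and average over all sign vectors so the cross terms cancel. You are in fact slightly more careful than the paper, which says only ``squaring the relation'' without explicitly invoking the lower bound that makes $\bigl(\sum_i \epsilon_i X_i\bigr)^2 \leq I_n$ legitimate.
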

\begin{proof}
Let $(X_1,\ldots, X_g) \in \mathcal D_{\diamond, g}$. By definition, we have that, for all $\varepsilon \in \{\pm 1\}^g$,
$$\sum_{i=1}^g \varepsilon_i X_i \leq I.$$
Squaring the relation above, we get
$$\sum_{i} X_i^2 + \sum_{i \neq j} \varepsilon_i \varepsilon_j X_iX_j \leq I.$$
Averaging the above inequality for all values of $\varepsilon$, we are left with $\sum_i X_i^2 \leq I$, which is the claim we aimed for.
\end{proof}

Define the ``quarter-circle'' 
\begin{equation}\label{eq:def-QC}
\mathrm{QC}_g := \Set{s \in \mathbb R_+^g : s_1^2 + \ldots + s_g^2 \leq 1}
\end{equation}
to be part of the unit disk contained in the positive orthant.

\begin{thm}{\cite[Theorem 6.6]{passer2018minimal}}\label{thm:LB-Passer-et-al}
Let $A \in (\mathcal M^{sa}_d)^{g}$. For any vector $s \in \mathbb R_+^g$ such that $\sum_i s_i^2 \leq 1$ and any spectrahedron $\mathcal D_A$, whenever $\mathcal D_{\diamond, g}(1) \subseteq \mathcal D_A(1)$, we have $s \cdot \mathrm{MatBall}_g \subseteq \mathcal D_A$. In particular, $s \cdot \mathcal D_{\diamond, g} \subseteq \mathcal D_A$. In terms of inclusion constants, we have
$$\forall g,d,\qquad \mathrm{QC}_g \subseteq \Delta(g,d).$$
\end{thm}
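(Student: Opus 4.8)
The plan is to reduce the statement to a concrete positivity fact about a single linear map, exploiting the machinery of Lemma~\ref{lem:cpspectra}. By Lemma~\ref{lem:cpspectra} (and the fact that $\mathcal D_{\diamond,g}(1)$, being the $\ell_1$-ball, is bounded), the hypothesis $\mathcal D_{\diamond,g}(1) \subseteq \mathcal D_A(1)$ is equivalent to the unital map $\Phi \colon \mathcal{OS}_{\diamond,g} \to \mathcal M_d$ sending $v_0 \mapsto I$, $v_i \mapsto A_i$ being positive, and the desired conclusion $s\cdot \mathcal D_{\diamond,g} \subseteq \mathcal D_A$ is equivalent to the rescaled map $\Phi_s \colon v_0 \mapsto I$, $v_i \mapsto s_i A_i$ being completely positive. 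So the whole theorem becomes: \emph{if $\Phi$ is positive and $\|s\|_2 \le 1$, then $\Phi_s$ is completely positive}. In fact I would prove the stronger ``MatBall'' version directly: show that $s \cdot \mathrm{MatBall}_g \subseteq \mathcal D_A$, from which $s\cdot\mathcal D_{\diamond,g} \subseteq \mathcal D_A$ follows by Lemma~\ref{lem:diamond-in-ball}, and then the inclusion-constant statement $\mathrm{QC}_g \subseteq \Delta(g,d)$ is immediate from the definitions once one recalls (as in Theorem~\ref{thm:matrix-diamond-vs-effects}) that positivity of $\Phi$ is just the condition $\mathcal D_{\diamond,g}(1)\subseteq\mathcal D_A(1)$.

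**Key steps.**
First, I would record what positivity of $\Phi$ says concretely: since the level-1 matrix diamond has extreme points $\pm e_i$, positivity of $\Phi$ means exactly $-I \le A_i \le I$ for each $i$, i.e. $\|A_i\|_\infty \le 1$. Second, take $(X_1,\dots,X_g) \in \mathrm{MatBall}_g$ at level $n$, so $\sum_i X_i^2 \le I_n$, and we must show $\sum_i A_i \otimes s_i X_i \le I_{dn}$. The natural tool is a Cauchy--Schwarz / operator-convexity argument: write $\sum_i A_i \otimes s_i X_i$ and bound it using the decomposition into a product. Concretely, for unit vectors $\xi$ one estimates $\langle \xi, (\sum_i s_i A_i \otimes X_i)\xi\rangle$ by splitting $s_i A_i \otimes X_i = (s_i A_i \otimes I)^{1/2}\!\cdots$; more cleanly, one uses the block-matrix trick: the $2\times 2$ operator matrix $\begin{pmatrix} I & \sum_i A_i \otimes s_i X_i \\ \sum_i A_i \otimes s_i X_i & I\end{pmatrix}$ is positive iff $\|\sum_i A_i\otimes s_i X_i\|\le 1$, and one builds this positivity from the positive blocks $\begin{pmatrix} s_i I & \pm A_i \otimes X_i \cdot(\text{something})\end{pmatrix}$ weighted so the weights sum correctly — using $\sum_i s_i^2 \le 1$ to control the diagonal and $\|A_i\|\le 1$, $\sum_i X_i^2 \le I$ to control the off-diagonal. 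Third, once $s\cdot\mathrm{MatBall}_g \subseteq \mathcal D_A$ is established at all levels $n$, Lemma~\ref{lem:diamond-in-ball} gives $s\cdot\mathcal D_{\diamond,g}\subseteq s\cdot\mathrm{MatBall}_g \subseteq \mathcal D_A$. Finally, for the inclusion-constant statement: by point (1) of Theorem~\ref{thm:matrix-diamond-vs-effects} (or directly), $\mathcal D_{\diamond,g}(1)\subseteq\mathcal D_A(1)$ is the hypothesis we used, so for every $A\in(\mathcal M_d^{sa})^g$ and every $s\in\mathrm{QC}_g$ the implication $\mathcal D_{\diamond,g}(1)\subseteq\mathcal D_A(1)\Rightarrow s\cdot\mathcal D_{\diamond,g}\subseteq\mathcal D_A$ holds, which is exactly $s\in\Delta(g,d)$.

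**Main obstacle.**
The crux is the operator inequality in the second step: producing, from $\|A_i\|_\infty\le 1$, $\sum_i X_i^2\le I$, and $\sum_i s_i^2\le 1$, a clean certificate that $\|\sum_i s_i\, A_i\otimes X_i\|_\infty \le 1$. The scalar intuition ($|\sum_i s_i a_i x_i| \le \sqrt{\sum s_i^2}\sqrt{\sum x_i^2} \le 1$ when $|a_i|\le 1$) is transparent, but in the operator setting $A_i$ and $X_i$ do not commute and $X_i$ need not be positive, so one cannot simply square. I expect the right device is to realize $\sum_i s_i A_i\otimes X_i$ as a compression of an operator of the form $(R^\ast\otimes I)(\bigoplus_i A_i \otimes I)(R\otimes I)$ where $R$ is a column isometry-like object built from the $s_i$ and $X_i$ — i.e. a Stinespring/Choi-type dilation — so that the bound on the norm reduces to $\|\bigoplus_i A_i\|\le 1$ times the norm of the ``coefficient'' operator, which is $\le 1$ precisely because of $\sum s_i^2\le 1$ and $\sum X_i^2 \le I$. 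Getting the bookkeeping of this dilation exactly right (and making sure it is valid at every matrix level $n$, which is automatic once stated operatorially) is the one genuinely technical point; everything else is assembling previously-established lemmas.
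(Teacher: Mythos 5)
Your proposal is correct in substance but takes a genuinely different route from the paper's. The paper (following Passer et al.) uses the operator Cauchy--Schwarz inequality only to obtain $\sum_i s_i |X_i| \leq I$, then Jordan-decomposes $X_i = Y_i - Z_i$, applies the Naimark dilation theorem to the partial POVM $\{s_iY_i\}\sqcup\{s_iZ_i\}$ to write $s_iX_i = V^*(P_i-Q_i)V$ with mutually orthogonal projections, and finally integrates the level-one inclusion against the joint spectral measure of the commuting normal tuple $R_i=P_i-Q_i$, whose joint spectrum lies in the $\ell_1$-ball. You instead bound $\bigl\|\sum_i s_i A_i\otimes X_i\bigr\|_\infty$ directly from the three facts $\|A_i\|_\infty\le 1$ (which is exactly what $\mathcal D_{\diamond,g}(1)\subseteq\mathcal D_A(1)$ says, via the extreme points $\pm e_i$), $\sum_i X_i^2\le I$, and $\sum_i s_i^2\le 1$. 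This does work and is shorter: writing $\sum_i s_i A_i\otimes X_i = B\,\bigl(\bigoplus_i A_i\otimes I_n\bigr)\,C$ with the row operator $B=(s_1 I,\ldots,s_g I)$ and the column operator $C=\mathrm{col}(I\otimes X_1,\ldots,I\otimes X_g)$, one has $\|B\|=(\sum_i s_i^2)^{1/2}\le 1$, $\|\bigoplus_i A_i\otimes I_n\|=\max_i\|A_i\|\le 1$, and $\|C\|=\|\sum_i X_i^2\|^{1/2}\le 1$, whence the norm bound. One small correction to your sketch: the symmetric compression $(R^*\otimes I)(\bigoplus_i A_i\otimes I)(R\otimes I)$ with the \emph{same} $R$ on both sides cannot work, because the $X_i$ are not positive; the factorization must be asymmetric as above (this is precisely the row-times-column trick the paper itself quotes for its Cauchy--Schwarz step). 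What the paper's longer dilation argument buys is a stronger structural conclusion -- every point of $s\cdot\mathrm{MatBall}_g$ dilates to a commuting normal tuple with joint spectrum in the $\ell_1$-ball, hence lies in the \emph{minimal} matrix convex set over that ball -- which yields inclusion into arbitrary matrix convex sets over the $\ell_1$-ball and adapts to other $\ell_p$-balls; your direct estimate exploits the special feature of the $\ell_1$-ball that level-one inclusion reduces to separate norm bounds on the $A_i$. The remaining assembly (Lemma~\ref{lem:diamond-in-ball}, symmetry of $\mathrm{MatBall}_g$ to pass from the norm bound to membership, and the translation into $\mathrm{QC}_g\subseteq\Delta(g,d)$) matches the paper.
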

\begin{proof}
Using Lemma \ref{lem:diamond-in-ball}, under the hypotheses, we only need to show the inclusion  $s \mathrm{MatBall}_g \subseteq \mathcal D_A$. To this end, consider a $g$-tuple of $n \times n$ self-adjoint matrices $(X_1, \ldots, X_g)$ such that $\sum_{i=1}^g X_i^2 \leq I$. We claim that this inequality implies that, for all $s$ as in the statement, 
$$\sum_{i=1}^g s_i |X_i| \leq I.$$
Indeed, this follows from the general matrix inequality 
$$\left\| \sum_i B_i  C_i \right\|_{\infty} \leq \left\|\sum_i B_i B_i^*\right\|_\infty^{1/2} \left\|\sum_i C_i^*C_i\right\|_\infty^{1/2}.$$

The above inequality can be seen to hold by writing the $B_i$ in the first row of a larger matrix and the $C_i$ in the first column of another such matrix. Writing now $X_i = Y_i - Z_i$ with positive semidefinite operators $Y_i,Z_i$ in such a way that $|X_i| = Y_i + Z_i$, we also have
$$\sum_{i=1}^g s_i Y_i + \sum_{i=1}^g s_i Z_i \leq I.$$
We interpret the last inequality as $\{s_i Y_i\}_{i=1}^g \sqcup \{s_i Z_i\}_{i=1}^g$ being a partial POVM, and we apply the Naimark dilation theorem (see \cite[Section 2.2.8]{nielsen2010quantum} or \cite[Theorem 2.42]{watrous2018theory}). Hence, there exist an isometry $V: \mathbb C^n \to \mathbb C^n \otimes \mathbb C^{2g+1}$ and $2g$ mutually orthogonal projections $P_i,Q_i \in \mathcal M_{n(2g+1)}$ such that $s_iY_i = V^*P_iV$ and $s_iZ_i = V^*Q_iV$. We thus have $s_i X_i = V^*(P_i-Q_i)V$, and the operators $R_i:=P_i-Q_i$ are commuting, normal and with joint spectrum in $\mathcal D_{\diamond, g}(1)$. Thus, with $E$ the joint spectral measure of $R$,
\begin{align*}
\sum_{i = 1}^g A_i \otimes s_{i} X_i &= \sum_{i = 1}^g A_i \otimes V^\ast R_i V \\
&= \int_{\mathcal D_{A}(1)} \left(\sum_{i = 1}^g A_i y_i\right)\otimes V^\ast dE(y)V\\
&\leq I\otimes I.
\end{align*}
This shows that $s\mathrm{MatBall_g}(k) \subseteq\mathcal D_A(k)$ and the assertion follows as $k$ was arbitrary.
\end{proof}

\begin{remark} Corollary 7.17 of \cite{davidson2016dilations} shows that $(1/g, \ldots 1/g) \in \Delta_{\mathcal D_A}$ for all $A \in  ( \mathcal M_d^{sa})^g$ such that $\mathcal D_A(1)$ is invariant under projection onto some orthonormal basis. This result thus holds in particular for the matrix diamond, but also for more general spectrahedra. It corresponds to the observation of Remark \ref{rmk:trivial_point} that $(1/g, \ldots 1/g) \in \Gamma(g,d)$. In the concrete situation that $\mathcal D_A = \mathcal D_{\diamond,g}$, the statement of Theorem \ref{thm:LB-Passer-et-al} is much stronger, as one might expect.
\end{remark}

\section{Upper bounds}
We present in this section two upper bounds (i.e.~containing sets) for the $\Gamma$ and $\Gamma^0$ sets, one coming from quantum information theory \cite{zhu2015information} and another one coming from matrix convex set theory \cite{passer2018minimal}. These two upper bounds are interesting in two different regimes: the first one applies when the number of POVMs is larger than the dimension of the quantum system, while the second one applies in the complementary regime, where the dimension is large with respect to the number of POVMs. Another important difference between the two results below is that the first one (Theorem \ref{thm:UB-Zhu}) deals with the set $\Gamma^{lin}$, while the second one (Theorem \ref{thm:UB-anti-commuting-F}) deals with the set $\Gamma^0$.

\subsection{Zhu's necessary condition for joint measurability}\label{sec:zhu}

We start by recalling Zhu's incompatibility criterion from \cite{zhu2015information}; see also \cite{zhu2016universal} for the mathematical details. To do so, define for a non-zero operator $A \in \mathcal M_d$, 
$$\mathcal G(A) := \frac{|A \rangle\langle A|}{\tr[ A]} \in \mathcal M_{d^2}^{sa},$$
where $|A \rangle \in \mathbb C^{d^2}$ is the vectorization of the matrix $A$. In the same vein, if $A^\circ := A - \tr[A]/d I$ denotes the traceless version of $A$, let 
$$\overline{\mathcal G}(A) :=  \frac{|A^\circ \rangle\langle A^\circ|}{\tr[A]} \in \mathcal M_{d^2}^{sa}.$$
We also extend additively the definitions above to POVMs
$$\mathcal G^\#(\{E_i\}) := \sum_i \mathcal G^\#(E_i),$$
where $\mathcal G^\#$ denotes either $\mathcal G$ or $\overline{\mathcal G}$. Using the remarkable fact that the functions $\mathcal G^\#$ are subadditive, Zhu has showed the following result in \cite[equations (10,11)]{zhu2015information}. 

\begin{prop}\label{prop:Zhu-criterion}
	If a set of $g$ POVMs $\{E^{(1)}\}, \ldots, \{E^{(g)}\}$ on $\mathcal M_d$ are compatible, then 
	$$\min\{\tr[H] \, : \, H \geq \mathcal G(\{E^{(i)}\}) ,\, \forall i\in[g]\} = 1+\min\{\tr[H] \, : \, H \geq \overline{\mathcal G}(\{E^{(i)}\}) ,\, \forall i\in[g]\} \leq d.$$
\end{prop}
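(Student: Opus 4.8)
The plan is to run the same averaging trick that underlies the subadditivity of $\mathcal{G}$ and $\overline{\mathcal{G}}$, and combine it with the basic fact that any POVM $\{E_i\}$ on $\mathcal{M}_d$ satisfies $\sum_i |E_i\rangle\langle E_i|/\tr[E_i] \leq$ (something controllable). First I would establish the key subadditivity statement: for a single POVM $\{E_i\}$, one has $\mathcal{G}(\{E_i\}) = \sum_i \mathcal{G}(E_i) \leq \mathcal{G}(I_d) = |I_d\rangle\langle I_d|/d$. The natural route is to write, for fixed unit vectors, $\langle \psi | \mathcal{G}(E_i) | \psi\rangle = |\tr[E_i X_\psi]|^2/\tr[E_i]$ where $X_\psi$ is the matrix whose vectorization is $|\psi\rangle$, and then apply Cauchy--Schwarz in the form $|\tr[E_i X]|^2 \leq \tr[E_i]\tr[E_i X X^\ast]$ (using $E_i \geq 0$), so that summing over $i$ and using $\sum_i E_i = I$ yields $\sum_i \langle\psi|\mathcal{G}(E_i)|\psi\rangle \leq \tr[X_\psi X_\psi^\ast] = \langle\psi|\mathcal{G}(I_d)|\psi\rangle$ after identifying $\tr[XX^\ast]$ with the appropriate quadratic form in $|\psi\rangle$. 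The same computation with $E_i$ replaced by its traceless part $E_i^\circ$ (and noting $\sum_i E_i^\circ = 0$, $\tr[E_i^\circ {}^2] \leq \tr[E_i E_i^\ast]$ type bounds) gives $\overline{\mathcal{G}}(\{E_i\}) \leq \mathcal{G}(I_d) - \mathcal{G}(I_d/\sqrt{d}\cdot\text{trace part})$, i.e. the ``$-1$'' separating the two minimizations.

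Next, suppose the $g$ POVMs $\{E^{(1)}\}, \ldots, \{E^{(g)}\}$ are compatible, with joint POVM $\{R_{j_1,\ldots,j_g}\}$. The crucial step is to apply subadditivity to the joint POVM: $\mathcal{G}(\{R_{j_1,\ldots,j_g}\}) \leq \mathcal{G}(I_d)$, and then to show that each marginal contributes $\mathcal{G}(\{E^{(i)}\}) \leq \mathcal{G}(\{R\}) + (\text{slack})$. More precisely, since $E^{(i)}_v = \sum_{j: j_i = v} R_j$, subadditivity of $\mathcal{G}$ applied within each fixed value of $j_i$ gives $\mathcal{G}(E^{(i)}_v) \leq \sum_{j: j_i = v}\mathcal{G}(R_j)$, hence $\mathcal{G}(\{E^{(i)}\}) \leq \mathcal{G}(\{R\}) \leq \mathcal{G}(I_d)$ for every $i$. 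Therefore $H := \mathcal{G}(I_d)$ is a feasible point for the minimization $\min\{\tr[H]: H \geq \mathcal{G}(\{E^{(i)}\})\ \forall i\}$, and $\tr[\mathcal{G}(I_d)] = \tr[|I_d\rangle\langle I_d|]/d = d$, giving the upper bound $\leq d$. The equality between the two minimization problems (the ``$1 + \cdots$'' relation) should follow from the identity $\mathcal{G}(E) = \overline{\mathcal{G}}(E) + |\,\tfrac{\tr[E]}{d}I\rangle\langle \tfrac{\tr[E]}{d}I|/\tr[E]$, i.e. $\mathcal{G}(E) - \overline{\mathcal{G}}(E)$ is a rank-one operator proportional to $|I_d\rangle\langle I_d|$ with total trace $\sum_i \tr[E_i]/d^2 \cdot d = 1$ when summed over a POVM; one checks that shifting $H$ by this fixed rank-one operator is a bijection between the feasible sets of the two programs that decreases the trace by exactly $1$.

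The main obstacle I expect is getting the subadditivity lemma exactly right, including the passage from the operator inequality on $\mathcal{M}_{d^2}$ to the scalar Cauchy--Schwarz estimate and tracking which trace normalization ($\tr[E_i]$ versus $\tr[E_i E_i^\ast] = \tr[E_i^2]$) appears — the operators are self-adjoint so these coincide, but being careful that the inequality $|\tr[E_i X]|^2 \leq \tr[E_i]\,\tr[E_i^2 X X^\ast /\ldots]$ is applied with the right weight is where sign/normalization errors creep in. The compatibility reduction and the trace bookkeeping for the $1+\cdots$ equality are comparatively routine once the lemma is in place; in fact, since the statement cites \cite[equations (10,11)]{zhu2015information} directly, in the paper itself this proposition is presumably just quoted, so a full self-contained proof would mainly consist of reproducing Zhu's subadditivity argument with the normalizations spelled out.
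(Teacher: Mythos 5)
Your reconstruction has the central inequality backwards, and this breaks the argument. The function $\mathcal G$ is \emph{subadditive}, meaning $\mathcal G(A+B) \leq \mathcal G(A) + \mathcal G(B)$ (this is the Cauchy--Schwarz estimate $|x+y|^2/(a+b) \leq |x|^2/a + |y|^2/b$ applied to the quadratic forms $x = \langle A|\psi\rangle$, $y = \langle B|\psi\rangle$). Applied to a POVM this gives $\mathcal G(I_d) \leq \sum_i \mathcal G(E_i) = \mathcal G(\{E_i\})$, i.e.\ exactly the \emph{opposite} of your claimed key statement $\mathcal G(\{E_i\}) \leq \mathcal G(I_d)$. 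A two-outcome example already refutes your version: for $E_1 = \mathrm{diag}(1,0)$, $E_2 = \mathrm{diag}(0,1)$ in $d=2$ one has $\tr[\mathcal G(\{E_i\})] = 2$ while $\tr[\mathcal G(I_2)] = 1$ (note also that $\tr[\mathcal G(I_d)] = \langle I_d|I_d\rangle/d = 1$, not $d$ as you write). Your Cauchy--Schwarz computation, carried out correctly, proves $\mathcal G(\{E_i\}) \leq I_{d^2}$, since $\sum_i |\tr[E_i X]|^2/\tr[E_i] \leq \sum_i \tr[X^* E_i X] = \tr[X^*X] = \|\psi\|^2$; but $\tr[X^*X]$ is the quadratic form of the identity on $\mathbb C^{d^2}$, not of $\mathcal G(I_d) = |I_d\rangle\langle I_d|/d$. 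Consequently $H = \mathcal G(I_d)$ is not a feasible point for the semidefinite program, and the weaker feasible point $I_{d^2}$ only yields the useless bound $d^2$.

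The repair is that the feasible point witnessing $\leq d$ is $H := \mathcal G(\{R\})$ itself, where $\{R_j\}$ is the joint POVM. Your marginal step is the correct use of subadditivity and survives: $\mathcal G(\{E^{(i)}\}) = \sum_v \mathcal G\bigl(\sum_{j \,:\, j_i = v} R_j\bigr) \leq \sum_j \mathcal G(R_j) = \mathcal G(\{R\})$ for every $i$ (note that this uses $\mathcal G(\sum) \leq \sum \mathcal G$, contradicting the direction you asserted in your first step). Then $\tr[\mathcal G(\{R\})] = \sum_j \tr[R_j^2]/\tr[R_j] \leq \sum_j \tr[R_j] = \tr[I_d] = d$, using $\tr[R^2] \leq (\tr R)^2$ for $R \geq 0$. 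Your treatment of the ``$1+\cdots$'' identity is essentially right once stated at the level of whole POVMs: the cross terms $|E_i^\circ\rangle\langle I_d| + |I_d\rangle\langle E_i^\circ|$, which you omit (so your single-effect identity for $\mathcal G(E) - \overline{\mathcal G}(E)$ is false as written), cancel upon summing because $\sum_i E_i^\circ = 0$, giving $\mathcal G(\{E\}) = \overline{\mathcal G}(\{E\}) + \frac1d |I_d\rangle\langle I_d|$; since $|I_d\rangle$ is orthogonal to every $|E^\circ\rangle$, this rank-one shift changes the optimal value by exactly $\tr\bigl[\frac 1d|I_d\rangle\langle I_d|\bigr] = 1$. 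Finally, you are correct that the paper itself does not prove this proposition but quotes it from Zhu, so the comparison here is only against what a correct self-contained argument would require.
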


It turns out that the semidefinite program appearing in the result above is particularly easy in the case where the $d^2 \times d^2$ matrices $\overline{\mathcal G}(\{E^{(1)}\}), \ldots, \overline{\mathcal G}(\{E^{(g)}\})$ have orthogonal supports; if that is the case, then the optimal $H$ is the sum of the matrices $\overline{\mathcal G}(\{E^{(i)}\})$, and the condition above reads 
$$\sum_{i=1}^g \tr[\overline{\mathcal G}(\{E^{(i)}\})] \leq d-1.$$
In order to exploit this phenomenon, let $G_{max}(d)$ be the maximal integer $g$ such that there exist $E_1,\ldots,E_g$ non-trivial orthogonal projections in $\mathcal M_d$ with the property $d\tr[E_iE_j] = (\tr[E_i])(\tr[E_j])$ for all $1\leq  i <j \leq g$.

\begin{thm}\label{thm:UB-Zhu}
	For all dimensions $d$ and all $1 \leq g \leq G_{max}(d)$, we have
	$$\Gamma^{lin}(g,d) \subseteq \sqrt{d-1}\mathrm{QC}_g.$$
\end{thm}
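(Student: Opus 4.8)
The plan is to combine Zhu's necessary condition (Proposition~\ref{prop:Zhu-criterion}) with the orthogonality phenomenon it enables, applied to a well-chosen family of projections whose existence is guaranteed by the definition of $G_{max}(d)$. Fix $g \leq G_{max}(d)$ and let $P_1, \ldots, P_g$ be non-trivial orthogonal projections in $\mathcal M_d$ satisfying $d\tr[P_iP_j] = (\tr[P_i])(\tr[P_j])$ for all $i < j$. The idea is that this ``balancedness'' condition is exactly what forces the traceless parts $P_i^\circ = P_i - (\tr[P_i]/d)I$ to be mutually orthogonal in Hilbert--Schmidt inner product: indeed $\tr[P_i^\circ P_j^\circ] = \tr[P_iP_j] - \tr[P_i]\tr[P_j]/d = 0$. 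Consequently the rank-one operators $|P_i^\circ\rangle\langle P_i^\circ|$ have mutually orthogonal supports in $\mathcal M_{d^2}$, so $\overline{\mathcal G}(P_i)$ and $\overline{\mathcal G}(P_j)$ have orthogonal supports for $i \neq j$.

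First I would take an arbitrary $s \in \Gamma^{lin}(g,d)$ and apply it to the binary POVMs $\{P_i, I - P_i\}$, producing the noisy effects $E_i' = s_i P_i + (1-s_i)\frac{\tr[P_i]}{d}I$, which by definition of $\Gamma^{lin}$ are compatible. Next I would compute $\overline{\mathcal G}$ of the corresponding binary POVMs. Since adding a multiple of the identity does not change the traceless part, $(E_i')^\circ = s_i P_i^\circ$, and the traceless part of $I - E_i'$ is $-s_i P_i^\circ$; also $\tr[E_i'] = \tr[P_i]$ and $\tr[I - E_i'] = d - \tr[P_i]$. Hence
\begin{equation*}
\overline{\mathcal G}(\{E_i', I-E_i'\}) = \frac{s_i^2 |P_i^\circ\rangle\langle P_i^\circ|}{\tr[P_i]} + \frac{s_i^2 |P_i^\circ\rangle\langle P_i^\circ|}{d - \tr[P_i]} = \frac{d\, s_i^2}{\tr[P_i](d-\tr[P_i])}\, |P_i^\circ\rangle\langle P_i^\circ|.
\end{equation*}
Because these have orthogonal supports across different $i$, the optimal $H$ in the SDP of Proposition~\ref{prop:Zhu-criterion} is simply their sum, so the compatibility condition $\min\{\tr H\} \leq d-1$ becomes $\sum_i \tr[\overline{\mathcal G}(\{E_i', I-E_i'\})] \leq d-1$, i.e. $\sum_i \frac{d\, s_i^2 \|P_i^\circ\|_2^2}{\tr[P_i](d-\tr[P_i])} \leq d-1$. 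Since $P_i$ is a projection, $\|P_i^\circ\|_2^2 = \tr[P_i] - \tr[P_i]^2/d = \tr[P_i](d-\tr[P_i])/d$, so each coefficient collapses to exactly $s_i^2$, leaving $\sum_i s_i^2 \leq d-1$, which is precisely the statement $s \in \sqrt{d-1}\,\mathrm{QC}_g$.

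The main obstacle, and the only place where real care is needed, is justifying that the SDP optimum equals the sum of the $\overline{\mathcal G}$ terms when the supports are orthogonal: one must check that $H := \sum_i \overline{\mathcal G}(\{E_i', I - E_i'\})$ is feasible (clear, since $H \geq \overline{\mathcal G}(\{E_i', I - E_i'\})$ for each $i$ as the other summands are positive) and optimal (here one uses that for positive operators with pairwise orthogonal supports, any $H$ dominating each of them must dominate their sum, so the trace cannot be lowered). A secondary subtlety is making sure $P_i$ non-trivial guarantees $0 < \tr[P_i] < d$ so no denominator vanishes; this is exactly what ``non-trivial projection'' buys us. Everything else is the bookkeeping above, which I would present compactly rather than in full detail.
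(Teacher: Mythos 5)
Your proof is correct and follows essentially the same route as the paper: the same family of projections from the definition of $G_{max}(d)$, the same observation that the balancedness condition is Hilbert--Schmidt orthogonality of the traceless parts, and the same computation showing each term contributes exactly $s_i^2$ to Zhu's bound. The one flaw is in the step you singled out as needing care: the claim that any $H$ dominating each of a family of positive operators with pairwise orthogonal supports must dominate their sum is false (e.g.\ $H = \bigl(\begin{smallmatrix} 3/2 & 4/5 \\ 4/5 & 3/2\end{smallmatrix}\bigr)$ dominates $|e_1\rangle\langle e_1|$ and $|e_2\rangle\langle e_2|$ but not their sum $I_2$); what is true, and all you need, is the trace inequality $\tr[H] \geq \sum_i \tr[\Pi_i H] \geq \sum_i \tr[\Pi_i A_i] = \sum_i \tr[A_i]$, where $\Pi_i$ are the mutually orthogonal support projections, so the optimal value of the SDP is still $\sum_i \tr[A_i]$ and is attained by the sum. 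With that repair the argument is complete and matches the paper's proof.
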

\begin{proof}
	Let $g,d$ be as in the statement, and consider the 2-outcome POVMs $\{E_i, I_d - E_i\}$, $i \in [g]$, where $E_i$ are such that $d\tr[E_iE_j] = (\tr[ E_i])(\tr[E_j])$ for all $1\leq  i <j \leq g$. Since the effects $E_i$ are non-trivial orthogonal projections, the previous condition is equivalent to $\tr[E_i^\circ E_j^\circ] = 0$, for all $i \neq j$. Fix $s \in \Gamma^{lin}(g,d)$; from the definition of the set $\Gamma^{lin}(g,d)$, it follows that the effects $s_i E_i + (1-s_i)d^{-1}\tr[ E_i] I$ are compatible, and thus, by Proposition \ref{prop:Zhu-criterion}, 
	\begin{equation}\label{eq:Zhu-upper-bound-lin}
	\sum_{i=1}^g \tr[\overline{\mathcal G}(\{s_iE_i + (1-s_i)d^{-1}\tr[ E_i] I, s_i(I_d-E_i) + (1-s_i)d^{-1}(d-\tr[E_i]) I\})] \leq d-1.
	\end{equation}
	Let us compute, for fixed $i$, the general term in the sum above. Start by computing
	$$\overline{\mathcal G}(\{E_i, I_d - E_i\}) = \frac{|E_i^\circ \rangle\langle E_i^\circ|}{\tr[E_i]} + \frac{|(I_d - E_i)^\circ \rangle\langle (I_d - E_i)^\circ|}{d-\tr[ E_i]} = \frac{d|E_i^\circ \rangle\langle E_i^\circ|}{\tr[E_i](d-\tr[ E_i])}.$$
	Using that $E_i$ is a (non-trivial) projection, we get
	$$\tr[\overline{\mathcal G}(\{E_i, I_d - E_i\})]  = \frac{d\tr[(E_i^\circ)^2]}{\tr[ E_i](d-\tr[E_i])} = \frac{d[\tr[E_i^2] - \tr[E_i]^2/d]}{\tr[E_i](d-\tr[E_i])}=1.$$
	For the noisy version, mixing with the identity does not change the trace, hence
\begin{align*}
\overline{\mathcal G}(\{s_iE_i + (1-s_i)d^{-1}\tr[E_i] I,& s_i(I_d-E_i) + (1-s_i)d^{-1}(d-\tr[E_i]) I\}) \\
&=\frac{d s_i^2 |E_i^\circ \rangle\langle E_i^\circ|}{(\tr[E_i])(d-\tr[E_i])}= s_i^2 \overline{\mathcal G}(\{E_i, I_d - E_i\}).
\end{align*}
and thus, taking the trace
$$\tr[\overline{\mathcal G}(\{s_iE_i + (1-s_i)d^{-1}\tr[E_i] I, s_i(I_d-E_i) + (1-s_i)d^{-1}(d-\tr[E_i]) I\})]= s_i^2.$$
	Zhu's condition \eqref{eq:Zhu-upper-bound-lin} implies thus
	$$\sum_{i=1}^g s_i^2 \leq d-1,$$
	proving the claim. 
\end{proof}

\begin{remark}
An analysis of the proof above shows that the same result holds for the set $\Gamma^0$ instead of $\Gamma^{lin}$, but with an extra restriction on the operators $E_i$: we must ask that $\tr[E_i] = d/2$ for all $i \in [g]$. We leave the existence of large tuples of such operators as an open problem.
\end{remark}

Let us now discuss the function $G_{max}(d)$. First, note that in order for the upper bound in the result above to be non-trivial, we must have $G_{max}(d) \geq d$. Below, we give two lower bounds on the function $G_{max}$, conditional on the existence of \emph{mutually unbiased bases} (MUBs) and \emph{symmetric informationally complete POVMs} (SIC-POVMs).

Recall that $k$ orthonormal bases $\{x^{(i)}_j\}_{j=1}^d$, $i \in [k]$ are called \emph{mutually unbiased} if and only if for all $i_1 \neq i_2$ and all $j_1, j_2$, $|\langle x^{(i_1)}_{j_1} , x^{(i_2)}_{j_2} \rangle| = 1/\sqrt d$. The maximal number of MUBs in $\mathbb C^d$ is $d+1$, and this bound is attained if $d$ is a prime power; very few other existence results are known, see \cite{durt2010mutually} for a review. If there exist $k$ MUBs in dimension $d$, then $G_{max}(d) \geq k$. This follows by setting $E_i = \sum_{j \in J_i} |x^{(i)}_j \rangle \langle x^{(i)}_j|$, where $\{x^{(i)}_j\}_{j=1}^d$ is the $i$-th MUB and $J_i$ is some non-trivial subset of $[d]$. We record next the following consequence of Theorem \ref{thm:UB-Zhu}.
\begin{cor}\label{cor:Zhu-UB-Gamma-lin-MUB}
	If $d=p^n$ is a prime power, then, for all $g \leq d+1$,
$$\Gamma^{lin}(g,d) \subseteq \sqrt{d-1}\mathrm{QC}_g.$$
\end{cor}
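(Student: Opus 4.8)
The plan is to reduce the statement to Theorem~\ref{thm:UB-Zhu} by establishing the bound $G_{max}(d) \geq d+1$ whenever $d$ is a prime power. Since Theorem~\ref{thm:UB-Zhu} already gives $\Gamma^{lin}(g,d) \subseteq \sqrt{d-1}\,\mathrm{QC}_g$ for all $1 \leq g \leq G_{max}(d)$, once we know $G_{max}(d) \geq d+1$ the claimed inclusion follows for every $g \leq d+1$ with no further work.

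To bound $G_{max}(d)$ from below, I would invoke the classical fact that in a prime-power dimension $d$ there exists a complete set of $d+1$ mutually unbiased bases $\{x^{(i)}_j\}_{j=1}^d$, $i \in [d+1]$ (see \cite{durt2010mutually}). For each $i$ fix a proper nonempty subset $J_i \subsetneq [d]$ (for instance $J_i = \{1\}$) and set $E_i := \sum_{j \in J_i} |x^{(i)}_j\rangle\langle x^{(i)}_j|$. Each $E_i$ is an orthogonal projection which is non-trivial because $\emptyset \neq J_i \neq [d]$, and $\tr[E_i] = |J_i|$. For $i \neq j$, mutual unbiasedness gives $|\langle x^{(i)}_a, x^{(j)}_b\rangle|^2 = 1/d$ for all $a,b$, whence
\[
\tr[E_i E_j] = \sum_{a \in J_i}\sum_{b \in J_j} |\langle x^{(i)}_a, x^{(j)}_b\rangle|^2 = \frac{|J_i|\,|J_j|}{d} = \frac{(\tr[E_i])(\tr[E_j])}{d}.
\]
Thus the projections $E_1, \ldots, E_{d+1}$ satisfy $d\tr[E_iE_j] = (\tr[E_i])(\tr[E_j])$ for all $i < j$, which by definition of $G_{max}$ yields $G_{max}(d) \geq d+1$.

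Finally, for any $g \leq d+1$ we then have $g \leq G_{max}(d)$, so Theorem~\ref{thm:UB-Zhu} applies directly and delivers $\Gamma^{lin}(g,d) \subseteq \sqrt{d-1}\,\mathrm{QC}_g$, as desired. There is essentially no serious obstacle: the statement is a direct corollary, and the only non-elementary ingredient is the existence of a full set of MUBs in prime-power dimensions, which is standard. The one point worth being careful about is to take the subsets $J_i$ proper and nonempty, so that the $E_i$ are genuine non-trivial projections and the quantities $\overline{\mathcal G}(\{E_i, I_d - E_i\})$ entering Theorem~\ref{thm:UB-Zhu} are well defined.
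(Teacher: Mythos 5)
Your proposal is correct and follows exactly the paper's route: it uses the existence of $d+1$ mutually unbiased bases in prime-power dimension to build non-trivial projections $E_i = \sum_{j\in J_i}|x^{(i)}_j\rangle\langle x^{(i)}_j|$ satisfying $d\tr[E_iE_j]=(\tr[E_i])(\tr[E_j])$, concluding $G_{max}(d)\geq d+1$ and then invoking Theorem \ref{thm:UB-Zhu}. The explicit computation of $\tr[E_iE_j]$, which the paper leaves implicit, is a welcome addition but does not change the argument.
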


Recall that a unit rank POVM $\{d^{-1} |x_i \rangle \langle x_i|\}_{i=1}^{d^2}$ is called \emph{symmetric and informationally complete} if and only if for all $i \neq j$, $|\langle x_i , x_j \rangle|^2 = 1/(d+1)$. Whether a SIC-POVM exists in dimension $d$ is a challenging question and an ongoing research subject. Analytic examples of SIC-POVMs have been constructed for $d = 1, \ldots, 21, 24, 28, 30, 31, 35, 37, 39, 43, 48, 124, 323$, and numerical constructions exist for much larger values of $d$, see \cite{scott2010symmetric} for a review and \cite{appleby2018constructing} for some recent progress. The use of SIC-POVMs yields the following corollary of Theorem \ref{thm:UB-Zhu}.
\begin{cor}\label{cor:Zhu-UB-Gamma-lin-SIC-POVM}
	If there exists a SIC-POVM $\{x_i\}_{i=1}^{d^2}$ in dimension $d$, then $G_{max}(d+1) \geq d^2$ and thus, for all $g \leq d^2$, we have
$$\Gamma^{lin}(g,d+1) \subseteq \sqrt{d}\mathrm{QC}_g.$$
\end{cor}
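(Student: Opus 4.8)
The plan is to reduce the statement to Theorem \ref{thm:UB-Zhu} applied in dimension $d+1$; the only work is to check that the existence of a SIC-POVM in dimension $d$ forces $G_{max}(d+1) \geq d^2$. The \emph{key observation} is that the defining inner-product identity of a SIC-POVM, $|\langle x_i, x_j\rangle|^2 = 1/(d+1)$, matches the relation $(d+1)\tr[E_iE_j] = \tr[E_i]\tr[E_j]$ occurring in the definition of $G_{max}(d+1)$ as soon as the $E_i$ are taken to be the rank-one projections onto the $x_i$, viewed inside $\mathbb C^{d+1}$.

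Concretely, let $\{x_i\}_{i=1}^{d^2}$ be the (unit) vectors of a SIC-POVM in $\mathbb C^d$, so that $|\langle x_i, x_j\rangle|^2 = 1/(d+1)$ for $i \neq j$. Fix an isometric embedding $\mathbb C^d \hookrightarrow \mathbb C^{d+1}$, regard each $x_i$ as a unit vector in $\mathbb C^{d+1}$, and set $E_i := |x_i \rangle\langle x_i| \in \mathcal M_{d+1}$. Each $E_i$ is a rank-one orthogonal projection, hence non-trivial in $\mathcal M_{d+1}$ (it is neither $0$ nor $I_{d+1}$, as $d+1 \geq 2$). For $i \neq j$ one has $\tr[E_i] = \tr[E_j] = 1$ and $\tr[E_iE_j] = |\langle x_i, x_j\rangle|^2 = 1/(d+1)$, so $(d+1)\tr[E_iE_j] = 1 = \tr[E_i]\tr[E_j]$. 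Since this exhibits $d^2$ non-trivial orthogonal projections in $\mathcal M_{d+1}$ satisfying the required relation, $G_{max}(d+1) \geq d^2$. It then suffices to invoke Theorem \ref{thm:UB-Zhu} with the dimension equal to $d+1$: for every $g$ with $1 \leq g \leq G_{max}(d+1)$, in particular for every $g \leq d^2$, we obtain $\Gamma^{lin}(g, d+1) \subseteq \sqrt{(d+1)-1}\,\mathrm{QC}_g = \sqrt{d}\,\mathrm{QC}_g$.

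There is no real obstacle; the single point that requires care is the bookkeeping of dimensions. Staying in dimension $d$ would not work, since there $d\tr[E_iE_j] = d/(d+1) \neq 1$, which is exactly why the SIC-POVM data must be lifted to dimension $d+1$. One should also note that the definition of $G_{max}$ imposes no completeness or spanning requirement on the projections, so discarding the POVM normalization $\sum_i |x_i \rangle\langle x_i| = d\, I_d$ and retaining only the pairwise overlap data is legitimate.
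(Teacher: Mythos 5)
Your proof is correct, and it follows the same overall strategy as the paper --- build $d^2$ rank-one projections in $\mathcal M_{d+1}$ out of the SIC vectors and invoke Theorem \ref{thm:UB-Zhu} in dimension $d+1$ --- but the lifting step is genuinely different, and yours is the cleaner one. You embed the unit vectors $x_i$ isometrically into $\mathbb C^{d+1}$ and take $E_i = |x_i\rangle\langle x_i|$, so that $\tr[E_i]=1$ and $\tr[E_iE_j]=|\langle x_i,x_j\rangle|^2=1/(d+1)$, which verifies the defining relation $(d+1)\tr[E_iE_j]=\tr[E_i]\tr[E_j]$ of $G_{max}(d+1)$ exactly. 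The paper instead sets $y_i=(\sqrt{t},\sqrt{1-t}\,x_i)$ and tunes $t=1/d^2$ to reach the overlap value $|\langle y_i,y_j\rangle|^2=1/d$. That route is more delicate on two counts: the quantity $|\langle y_i,y_j\rangle|^2=|t+(1-t)\langle x_i,x_j\rangle|^2$ depends on the phases of the SIC overlaps and not only on their moduli, so the identity $|\langle y_i,y_j\rangle|^2=t+(1-t)/(d+1)$ requires justification; and an overlap of $1/d$ for unit-trace projections in $\mathcal M_{d+1}$ matches the relation with prefactor $d$ rather than the prefactor $d+1$ demanded by the definition of $G_{max}(d+1)$. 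Your direct embedding avoids both issues, and your side remarks --- that $G_{max}$ imposes no completeness requirement on the projection family, and that the bookkeeping must be done in dimension $d+1$ rather than $d$ --- are exactly the right points to flag.
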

\begin{proof}
Let $y_i = (\sqrt{t},\sqrt{1-t}x_i)$ for $i \in [d^2]$, where $t\in [0,1]$ is a parameter. Using the fact that the $x_i$ are a SIC-POVM, we have that $|\langle y_i , y_j\rangle|^2 = t+ (1-t)/(d+1)$, so for $t= 1/d^2$, $|\langle y_i , y_j\rangle|^2 = 1/d$. Setting $E_i = |y_i \rangle \langle y_i|$ then proves the claim.
\end{proof}

\subsection{Pairwise anti-commuting unitary operators and spectrahedral inclusion constants}

We present now a different type of upper bound, this time on the set $\Delta^0 = \Gamma^0$. What follows is based on \cite[Theorem 6.6]{passer2018minimal}. We adapt the proof there to our setting by taking into account the system dimension $d$. The main ingredient of the construction in \cite{passer2018minimal} is the following Hurwitz-Radon like result.

\begin{lem}{\cite{newman1932note} or \cite[Theorem 1]{hrubes2016families}}\label{lem:Hurwitz-Radon}
	For $d = 2^k$, $k \in \mathbb N_0$, there exist $2k+1$ anti-commuting, self-adjoint, unitary matrices $F_1,\ldots, F_{2k+1} \in \mathcal U_d$. Moreover, $2^k$ is the smallest dimension where such a $(2k+1)$-tuple exists. 
\end{lem}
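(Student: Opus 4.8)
The plan is to treat the two assertions — existence of a $(2k+1)$-tuple in dimension $2^k$, and minimality of that dimension — separately.

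For \textbf{existence}, I would write down an explicit Clifford-type construction from Pauli matrices. Let $\sigma_x,\sigma_y,\sigma_z \in \mathcal U_2$ be the Pauli matrices; they are self-adjoint, unitary, square to $I_2$, and pairwise anti-commute. Identifying $\mathcal M_{2^k}$ with $\mathcal M_2^{\otimes k}$, set for $j \in [k]$
\[
F_{2j-1} := \sigma_z^{\otimes (j-1)} \otimes \sigma_x \otimes I_2^{\otimes(k-j)}, \qquad F_{2j} := \sigma_z^{\otimes (j-1)} \otimes \sigma_y \otimes I_2^{\otimes(k-j)},
\]
together with $F_{2k+1} := \sigma_z^{\otimes k}$. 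Each $F_\ell$ is a tensor product of self-adjoint unitaries, hence self-adjoint, unitary, and squares to $I$. For the anti-commutation relations I would compare two of the $F$'s slot by slot: either they agree in every tensor factor except one, where they carry an $\{\sigma_x,\sigma_y\}$ pair (same block), or the lower-indexed one carries $\sigma_x$ or $\sigma_y$ in a slot where the higher-indexed one carries $\sigma_z$, with commuting factors everywhere else; in both cases there is exactly one anti-commuting pair of Pauli factors, so $F_\ell F_{\ell'} = -F_{\ell'}F_\ell$. This produces the required $(2k+1)$-tuple.

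For \textbf{minimality}, the cleanest route I see is an induction on $k$ via a dimension-halving argument (essentially Newman's original idea). Suppose $F_1,\dots,F_{2k+1} \in \mathcal U_d$ are anti-commuting self-adjoint unitaries with $k\ge 1$. Since $F_1$ has spectrum contained in $\{\pm1\}$ and $F_2$ anti-commutes with $F_1$, conjugation by $F_2$ exchanges the two eigenspaces of $F_1$, so they have equal dimension; in particular $d$ is even, $d = 2d_1$. Now set $G_j := \mathrm{i}\, F_2 F_j$ for $j = 3,\dots,2k+1$. A short computation shows each $G_j$ is self-adjoint and unitary with $G_j^2 = I$, that $G_j G_{j'} = -G_{j'}G_j$ for $j \neq j'$, and that every $G_j$ commutes with $F_1$, hence restricts to a self-adjoint unitary on the $d_1$-dimensional $(+1)$-eigenspace of $F_1$; these restrictions remain invertible and pairwise anti-commuting. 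Thus we obtain $2k-1 = 2(k-1)+1$ anti-commuting self-adjoint unitaries in dimension $d_1$, and the induction hypothesis gives $d_1 \ge 2^{k-1}$, i.e.\ $d \ge 2^k$; the case $k=0$ is trivial. Combined with the construction above, this shows $2^k$ is the smallest admissible dimension. (A non-recursive alternative: discard $F_{2k+1}$ and observe that the $2^{2k}$ ordered products $F_S$, $S \subseteq [2k]$, are linearly independent in $\mathcal M_d$, since $\tr[F_S] = 0$ for $\emptyset \ne S \subseteq [2k]$ and $F_S F_T^{-1} = \pm F_{S \,\triangle\, T}$, whence $d^2 \ge 2^{2k}$.)

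The main subtlety is in the minimality half: one must not apply product-counting to all $2k+1$ matrices at once. Doing so yields only $d^2 \ge 2^{2k+1}$, i.e.\ $d \ge \lceil 2^{k}\sqrt 2\,\rceil$, which is not merely suboptimal but actually false already for $k=1$ (the three Pauli matrices live in $\mathcal M_2$) — the point being that a product of an odd number of the generators is central in the generated algebra and hence not independent of the lower-degree products. The fix is to use only $2k$ of the generators (equivalently, in the recursive version, to note that halving the dimension leaves $2k-1$ anti-commuting unitaries, not $2k$). The remainder is routine sign bookkeeping in the anti-commutation and conjugation identities; a reader preferring a structural statement may instead invoke the classification $\mathbb C\mathrm{l}_{2k+1} \cong \mathcal M_{2^k}(\mathbb C)\oplus\mathcal M_{2^k}(\mathbb C)$, all of whose nonzero representations have dimension divisible by $2^k$.
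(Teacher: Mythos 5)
Your proof is correct, and it supplies more than the paper does: the paper states this lemma with citations to Newman and to Hrube\v{s} and only exhibits a recursive existence construction, deferring the minimality claim entirely to the references. Your existence construction (the Jordan--Wigner strings $\sigma_z^{\otimes(j-1)}\otimes\sigma_x\otimes I^{\otimes(k-j)}$, $\sigma_z^{\otimes(j-1)}\otimes\sigma_y\otimes I^{\otimes(k-j)}$ together with $\sigma_z^{\otimes k}$) differs from the paper's recursion $F^{(k+1)}_i=\sigma_X\otimes F^{(k)}_i$, $F^{(k+1)}_{2k+2}=\sigma_Y\otimes I$, $F^{(k+1)}_{2k+3}=\sigma_Z\otimes I$, but the two families are equivalent for the purpose at hand, and your slot-by-slot check that each pair anti-commutes in exactly one tensor factor is sound. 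The genuinely new content relative to the paper is the minimality half, and both of your routes work. In the induction, $F_2F_1F_2^{-1}=-F_1$ forces the $\pm1$-eigenspaces of $F_1$ to have equal dimension, and the operators $G_j=\mathrm{i}\,F_2F_j$ are indeed self-adjoint unitaries that pairwise anti-commute and commute with $F_1$, so their restrictions to the $(+1)$-eigenspace yield a $(2(k-1)+1)$-tuple in dimension $d/2$, closing the induction. The counting alternative is also right, and you correctly flag the one real trap: the full product $F_1\cdots F_{2k+1}$ acts as a scalar on any irreducible piece (e.g.\ $\sigma_x\sigma_y\sigma_z=\mathrm{i}I$), so the $2^{2k+1}$ ordered products over all of $[2k+1]$ are not linearly independent; restricting to subsets $S\subseteq[2k]$, where $\tr[F_S]=0$ for $S\neq\emptyset$ gives Hilbert--Schmidt orthogonality, yields the correct bound $d^2\geq 2^{2k}$. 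Either argument gives the reader a self-contained proof that the paper itself omits.
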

A $(2k+1)$-tuple as above is sometimes called a \emph{spin system} in operator theory, see, e.g.~\cite{pisier2003introduction}. One can easily construct such matrices recursively, as follows.  
For $k=0$, simply take $F^{(0)}_1 := [1]$. For $k \geq 1$, define 
$$F^{(k+1)}_i = \sigma_X \otimes F^{(k)}_i~\forall i \in [2k+1] \quad \text{ and } \quad F^{(k+1)}_{2k+2} = \sigma_Y \otimes I_{2^{k}}, \quad F^{(k+1)}_{2k+3} = \sigma_Z \otimes I_{2^{k}}, $$
where $\sigma_{X,Y,Z}$ are the Pauli matrices
$$\sigma_X = \begin{bmatrix}
0 & 1 \\ 1 & 0
\end{bmatrix}, \quad \sigma_Y = \begin{bmatrix}
0 & -\text{i} \\ \text{i} & 0
\end{bmatrix} \quad \text{ and } \quad \sigma_Z = \begin{bmatrix}
1 & 0 \\ 0 & -1
\end{bmatrix}.$$
For example, we have $F^{(1)}_1 = \sigma_X$, $F^{(1)}_2 = \sigma_Y$, $F^{(1)}_3 = \sigma_Z$, and 
$$F^{(2)}_1 = \sigma_X \otimes \sigma_X, \qquad F^{(2)}_2 = \sigma_X \otimes \sigma_Y, \qquad F^{(2)}_3 = \sigma_X \otimes \sigma_Z, \qquad F^{(2)}_4 = \sigma_Y \otimes I_2, \qquad F^{(2)}_5 = \sigma_Z \otimes I_2.$$
\begin{remark}Note that our construction differs from the one in \cite{passer2018minimal}, because we aim for the smallest dimension which contains $g$ anti-commuting, self-adjoint and unitary elements. This way, we obtain $d \geq 2^{\lceil (g-1)/2 \rceil}$ instead of $d \geq 2^{g-1}$ in the next theorem.
\end{remark}

\begin{thm}\label{thm:UB-anti-commuting-F}
	Let $g \geq 2$, $d \geq 2^{\lceil(g-1)/2\rceil}$ and consider $s \in \mathbb R_+^g$ such that for any spectrahedron $\mathcal D_A$ defined by traceless matrices $A_i \in \mathcal M_d$,  $\mathcal D_{\diamond,g}(1) \subseteq \mathcal D_A(1)$ implies $s \cdot\mathcal D_{\diamond,g} \subseteq \mathcal D_A$. 
	Then, $\sum_i s_i^2 \leq 1$. In terms of inclusion constants, we have
	$$\forall d\geq 2^{\lceil(g-1)/2 \rceil}, \qquad \Delta(g,d) \subseteq \Delta^0(g,d) \subseteq \mathrm{QC}_g.$$
\end{thm}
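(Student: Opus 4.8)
The plan is to exhibit a specific $g$-tuple of traceless matrices $A = (A_1, \ldots, A_g) \in (\mathcal{M}_d^{sa})^g$ with $\mathcal{D}_{\diamond,g}(1) \subseteq \mathcal{D}_A(1)$ for which the inclusion constant can be pinned down exactly, namely the one built from the pairwise anti-commuting self-adjoint unitaries $F_1, \ldots, F_g$ supplied by Lemma \ref{lem:Hurwitz-Radon} (these exist in dimension $d = 2^{\lceil (g-1)/2 \rceil}$, and hence in any larger $d$ by padding, e.g.\ tensoring with an identity block or taking a direct sum, while keeping tracelessness). First I would check the level-one inclusion $\mathcal{D}_{\diamond,g}(1) \subseteq \mathcal{D}_F(1)$: for a scalar point $x \in \mathcal{D}_{\diamond,g}(1)$, i.e.\ $\|x\|_1 \leq 1$, one computes $\left(\sum_i x_i F_i\right)^2 = \sum_i x_i^2 I + \sum_{i \neq j} x_i x_j F_i F_j = \|x\|_2^2 I \leq \|x\|_1^2 I \leq I$ using anti-commutativity, so $\sum_i x_i F_i \leq I$, giving the hypothesis of the statement. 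Each $F_i$ is traceless since $F_i = F_j F_i F_j^{-1} \cdot (-1)$ for $j \neq i$ forces $\tr F_i = -\tr F_i$ (using $g \geq 2$).

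Next I would feed a clever matrix tuple $X \in \mathcal{D}_{\diamond,g}$ into the hypothesis $s \cdot \mathcal{D}_{\diamond,g} \subseteq \mathcal{D}_F$ and read off a constraint on $s$. The natural choice, following \cite{passer2018minimal}, is $X_i = F_i$ itself (or a suitable rescaling): one must first verify $F \in \mathcal{D}_{\diamond,g}$, i.e.\ $\sum_i \varepsilon_i F_i \leq I$ for all sign vectors $\varepsilon$ — but $\sum_i \varepsilon_i F_i$ is self-adjoint with square $\left(\sum_i \varepsilon_i F_i\right)^2 = g I$, so its norm is $\sqrt{g}$, and $F \notin \mathcal{D}_{\diamond,g}$ unless $g \leq 1$. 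So instead I would take $X_i = g^{-1/2} F_i$, which lies in $\mathcal{D}_{\diamond,g}$. Then $s \cdot X = (s_1 g^{-1/2} F_1, \ldots, s_g g^{-1/2} F_g) \in \mathcal{D}_F$ by hypothesis, i.e.\ $g^{-1/2} \sum_i s_i F_i \otimes F_i \leq I_{d^2}$. Now I evaluate this operator inequality against the (unnormalized) maximally entangled vector $\Omega = \sum_k e_k \otimes e_k \in \mathbb{C}^d \otimes \mathbb{C}^d$: using $\langle \Omega, (F_i \otimes F_i)\Omega\rangle = \tr[F_i F_i^\top]$ and the fact that, for each $i$, $F_i$ is unitary so $\tr[F_i F_i^\top]$ can be controlled — more robustly, one uses $\sum_i s_i F_i \otimes F_i$ restricted to the symmetric or antisymmetric subspace. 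The cleanest route: since the $F_i \otimes F_i$ all commute? They don't in general, so instead I would use the operator $\sum_i F_i \otimes \bar F_i$ whose action on $\Omega$ gives $\sum_i \|F_i\|_{HS}^2$-type terms; concretely $\langle \Omega, (\sum_i s_i F_i \otimes \bar F_i)\Omega \rangle = \sum_i s_i \tr[F_i F_i^*] = \sum_i s_i \cdot d$, and $\langle \Omega, \Omega\rangle = d$, so the inequality $g^{-1/2}\sum_i s_i F_i \otimes \bar F_i \leq I$ tested on $\Omega/\sqrt{d}$ gives only $g^{-1/2}\sum_i s_i \leq 1$, the $\ell_1$ bound, which is too weak. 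To get $\|s\|_2 \leq 1$ I must instead test against a family of product vectors $u \otimes v$ and optimize, or better, use the dilation-free argument: $g^{-1/2}\sum_i s_i F_i \otimes F_i \leq I$ together with its sign-flipped versions $g^{-1/2}\sum_i \pm s_i F_i \otimes F_i \leq I$ (which also hold, replacing $X_i$ by $\pm g^{-1/2} F_i \in \mathcal{D}_{\diamond,g}$) implies $g^{-1}\left(\sum_i s_i F_i \otimes F_i\right)^2 \leq I$; expanding the square and averaging over signs kills cross terms, leaving $g^{-1} \sum_i s_i^2 (F_i \otimes F_i)^2 = g^{-1}\sum_i s_i^2 I \leq I$, i.e.\ $\sum_i s_i^2 \leq g$ — still off by $\sqrt g$. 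The correct normalization must therefore be $X_i = F_i$ scaled so that $\mathcal{D}_{\diamond,g}$-membership is tight in the right direction; the precise constant is what \cite{passer2018minimal} extracts, and I would track it carefully: with $X_i = g^{-1/2}F_i$ the above gives $\sum_i s_i^2 \leq g$, whereas testing a single sign vector $\varepsilon$ and the corresponding rank-one eigenprojection of $\sum_i \varepsilon_i F_i$ sharpens this.

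The step I expect to be the genuine obstacle is extracting the sharp constant $1$ (rather than something like $g$ or $1/\sqrt g$) from the matrix tuple $F$. The resolution, as in \cite{passer2018minimal}, is to not plug in $F$ directly but to build from the $F_i$ a cleverer element of $\mathcal{D}_{\diamond,g}$ at a higher level $n$, or to exploit the precise spectral structure: $\sum_i \varepsilon_i F_i$ has eigenvalues $\pm\sqrt g$ each with multiplicity $d/2$, and one picks a unit eigenvector $w_\varepsilon$ for the eigenvalue $+\sqrt g$, sets $X_i = $ the tuple whose entries are rank-one operators tuned so that membership in $\mathcal{D}_{\diamond,g}$ holds with equality, then tests the resulting inclusion against $w_\varepsilon$ and averages/optimizes over $\varepsilon \in \{\pm1\}^g$ (there are $2^g$ of them) to convert an $\ell_\infty$-type bound into the $\ell_2$ bound $\|s\|_2 \leq 1$. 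Once $\Delta^0(g,d) \subseteq \mathrm{QC}_g$ is established for $d = 2^{\lceil(g-1)/2\rceil}$, monotonicity in the dimension — $\Delta^0(g,d') \subseteq \Delta^0(g,d)$ for $d' \geq d$, which follows from compressing via an isometry exactly as in Proposition \ref{prop:higherdimension} / Theorem \ref{thm:jm=inclusion} together with $\Gamma^0 = \Delta^0$ — extends it to all $d \geq 2^{\lceil(g-1)/2\rceil}$, and the inclusion $\Delta(g,d) \subseteq \Delta^0(g,d)$ is immediate from the definitions since traceless $A_i$ form a subclass of all $A_i$.
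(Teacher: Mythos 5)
You have the right family of examples (the anti-commuting self-adjoint unitaries $F_1,\dots,F_g$), the correct verification of the level-one inclusion, the tracelessness argument, the dimension-monotonicity step, and the trivial inclusion $\Delta(g,d)\subseteq\Delta^0(g,d)$. But the core of the theorem --- extracting the bound $\sum_i s_i^2\le 1$ rather than $\sum_i s_i^2\le g$ or $\sum_i s_i\le\sqrt g$ --- is exactly the step you leave open, and the speculative resolutions you sketch (higher-level elements of $\mathcal D_{\diamond,g}$, rank-one eigenprojections of $\sum_i\varepsilon_iF_i$, averaging over the $2^g$ sign vectors) are not how the argument actually closes. So the proposal has a genuine gap.

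The missing idea is an \emph{anisotropic} scaling of the test tuple, adapted to the direction of $s$ itself. Set $\hat s:=s/\|s\|_2$. The same anti-commutation computation you already performed for $g^{-1/2}F$ shows more generally that $(\hat s_1F_1,\dots,\hat s_gF_g)\in\mathcal D_{\diamond,g}$ for \emph{any} unit $\ell_2$-vector $\hat s$, since $\bigl(\sum_i\varepsilon_i\hat s_iF_i\bigr)^2=\sum_i\hat s_i^2\,I=I$. Feeding this tuple (rather than $g^{-1/2}F$) into the hypothesis $s\cdot\mathcal D_{\diamond,g}\subseteq\mathcal D_A$ produces the operator $\sum_i s_i\hat s_i\,A_i\otimes F_i=\|s\|_2^{-1}\sum_i s_i^2\,A_i\otimes F_i\le I$, whose expectation against a suitable unit vector should now give $\|s\|_2\le1$ directly. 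The second ingredient, which you half-identify when you worry about $\tr[F_iF_i^\top]$, is that the target spectrahedron must be defined by $A_i=\overline{F_i}$ (the entry-wise conjugates, still traceless unitaries, so the level-one inclusion is unaffected) rather than by $F_i$: then the normalized maximally entangled state pairing gives $\langle\Omega,(\overline{F_i}\otimes F_i)\Omega\rangle/d=\tr[F_i^\ast F_i]/d=1$ for every $i$, with no sign ambiguity, and the inequality above tested on $\Omega/\sqrt d$ yields $\|s\|_2^{-1}\sum_is_i^2=\|s\|_2\le1$. With $\mathcal D_F$ as the target, terms like $\tr[\overline{F_i}F_i]$ can be negative (e.g.\ $F_i=\sigma_Y$), which is precisely why your computation stalls. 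Everything else in your write-up can then be kept as is.
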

\begin{proof}
	Let us consider $g$ anti-commuting, self-adjoint, unitary matrices $F_1,\ldots, F_g \in \mathcal U_d$ as in the construction following Lemma \ref{lem:Hurwitz-Radon}; these matrices also enjoy the property of being traceless when $g\geq 2$. Let $\mathcal D_{\overline{F}}$ be the spectrahedron defined by the matrices $\overline{F_i}$, where $\overline{F_i}$ is the entry-wise complex conjugate of $F_i$. Since the matrices $\overline{F_i}$ are unitary, it is clear that $\mathcal D_{\diamond,g}(1) \subseteq \mathcal D_{\overline{F}}(1)$. 
	
	Assume now that $s {\cdot} \mathcal D_{\diamond,g} \subseteq \mathcal D_{\overline{F}}$ for some non-negative $g$-tuple $s$. Put $\hat s:=s/\|s\|_2$. We claim that $(\hat s_i F_i)_{i=1}^g \in \mathcal D_{\diamond,g}$. Indeed, for any choice of signs $\varepsilon_i$, we have
	$$\left\|\sum_{i=1}^g \varepsilon_i \hat s_i F_i\right\|_\infty = \left\| \left(\sum_{i=1}^g \varepsilon_i \hat s_i F_i\right)^2\right\|_\infty^{1/2} = \left\| \sum_{i=1}^g \hat s_i^2I  + \sum_{i \neq j} \varepsilon_i \varepsilon_j \hat s_i \hat s_j F_iF_j\right\|_\infty^{1/2} = \left\| \sum_{i=1}^g \hat s_i^2I \right\|_\infty^{1/2} = 1.$$
	In the equality above, we have used the fact that the cross terms in the sum obtained by expanding the square vanish; it is this behavior of the matrices $F_i$ that renders them useful in operator theory. From the hypothesis, it follows that $(s_i\hat s_i F_i)_{i=1}^g \in \mathcal{D}_{\overline{F}}$; in particular, we have
	$$\left\|\sum_{i=1}^g \frac{s_i^2}{\|s\|_2} \overline{F_i} \otimes F_i\right\|_\infty = \frac{\sum_{i=1}^g s_i^2}{\|s\|_2}  = \|s\|_2 \leq 1,$$
	which is the conclusion we aimed for. In the equation above, we have used the following fact (see \cite[equation (5.4)]{passer2018minimal} for the corresponding statement): for non-negative scalars $a_1, \ldots, a_g$, 
	$$\left\|\sum_{i=1}^g a_i \overline{F_i} \otimes F_i \right\|_\infty = \sum_{i=1}^g a_i.$$
	The fact that the left hand side in the equality above is smaller than the right hand side follows from the triangle inequality. The reverse inequality follows from taking the scalar product against the maximally entangled state 
	$$\omega_d := \frac 1 d \sum_{i,j=1}^d (e_i \otimes e_i )(e_j \otimes e_j)^*,$$
	for some orthonormal basis $\{e_i\}_{i=1}^d$ of $\mathbb C^d$.
\end{proof}
Putting together the result above with Theorem \ref{thm:LB-Passer-et-al}, we derive the following equality, one of the main results of this paper. 
\begin{cor}\label{cor:large-d-Gamma-is-QC}
	For any $g \geq 2$ and any $d \geq 2^{\lceil (g-1)/2\rceil}$, we have 
	$$\Delta(g,d) = \Gamma(g,d) = \Delta^0(g,d) = \Gamma^0(g,d) = \mathrm{QC}_g.$$
\end{cor}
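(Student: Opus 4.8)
The plan is to simply assemble the pieces that have already been established, since at this point the work is essentially bookkeeping. The backbone of the argument is the chain of inclusions
\begin{equation*}
\mathrm{QC}_g \subseteq \Delta(g,d) \subseteq \Delta^0(g,d) \subseteq \mathrm{QC}_g,
\end{equation*}
from which $\Delta(g,d) = \Delta^0(g,d) = \mathrm{QC}_g$ follows immediately, and then Theorem \ref{thm:jm=inclusion} transfers these equalities to the quantum-information sets $\Gamma$ and $\Gamma^0$.

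In more detail: first I would note the middle inclusion $\Delta(g,d) \subseteq \Delta^0(g,d)$, which holds for all $g,d$ because the defining condition of $\Delta^0$ is a restriction of that of $\Delta$ to spectrahedra defined by traceless tuples; this is already recorded in the statement of Theorem \ref{thm:UB-anti-commuting-F}. The leftmost inclusion $\mathrm{QC}_g \subseteq \Delta(g,d)$ is exactly Theorem \ref{thm:LB-Passer-et-al} (valid for all $g,d$, in particular here). The rightmost inclusion $\Delta^0(g,d) \subseteq \mathrm{QC}_g$ is where the hypothesis $d \geq 2^{\lceil(g-1)/2\rceil}$ enters and is precisely the content of Theorem \ref{thm:UB-anti-commuting-F}. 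Sandwiching gives $\Delta(g,d) = \Delta^0(g,d) = \mathrm{QC}_g$.

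Finally, Theorem \ref{thm:jm=inclusion} asserts $\Gamma(g,d) = \Delta(g,d)$ and $\Gamma^0(g,d) = \Delta^0(g,d)$, so all five sets coincide and equal $\mathrm{QC}_g$. There is no real obstacle here; the only thing to be careful about is that every cited ingredient is indeed applicable in the stated range of $g$ and $d$ — in particular that Theorem \ref{thm:LB-Passer-et-al} has no dimension restriction while Theorem \ref{thm:UB-anti-commuting-F} requires exactly $d \geq 2^{\lceil(g-1)/2\rceil}$, which is the hypothesis of the corollary, and that $g \geq 2$ is needed so that the anti-commuting unitaries $F_1,\dots,F_g$ of Lemma \ref{lem:Hurwitz-Radon} are traceless, as used in the proof of Theorem \ref{thm:UB-anti-commuting-F}.
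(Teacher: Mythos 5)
Your proposal is correct and follows exactly the paper's argument: the paper likewise obtains the corollary by sandwiching $\mathrm{QC}_g \subseteq \Delta(g,d) \subseteq \Delta^0(g,d) \subseteq \mathrm{QC}_g$ via Theorems \ref{thm:LB-Passer-et-al} and \ref{thm:UB-anti-commuting-F}, and then invoking Theorem \ref{thm:jm=inclusion} to identify $\Gamma$ with $\Delta$ and $\Gamma^0$ with $\Delta^0$. Your remarks on where each hypothesis is used are accurate as well.
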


\begin{remark}
If the dimension bound $d \geq 2^{\lceil (g-1)/2\rceil}$ holds, the matrices $(F_1+I_d)/2 \oplus 0, \ldots, (F_g-I_d)/2 \oplus 0$ considered in this section are \emph{the most incompatible $g$-tuple of $d \times d$ quantum effects}. Indeed, for any direction $\hat s \in \mathrm{QC}_g$, $\|\hat s\| = 1$, it follows from Corollary \ref{cor:large-d-Gamma-is-QC} that the $g$-tuple $(t_1 \hat s_1 (F_1+I_d)/2, \ldots, t_g \hat s_g (F_g+I_d)/2)$ is compatible if and only if $t_i \leq 1$ for all $i \in [g]$. We would also like to point out that, for $d=2$ and $g=3$, $g=2$, the claim above corresponds to the maximal incompatibility of the measurements corresponding to the Pauli observables.
\end{remark}

\section{Discussion} \label{sec:discussion}

In this final section, we would like to put the results obtained in this work in perspective, and compare them with previously known bounds. We also list and discuss some questions left open in this work. 

\subsection{The shape of the different compatibility regions}
We start by listing some previously known results on the different sets $\Gamma^\#$ considered in this work. Let us remind the reader that our primary focus was on the sets $\Gamma(g,d)$, because of their connection to the inclusion problem for free spectrahedra. In the quantum information community, the sets $\Gamma^{all}$ and $\Gamma^{lin}$ play a very important role, because the most general type of trivial noise is allowed in the former case, and because of the linear structure in the latter case. Previously, mainly the cases of small $g,d$ have been considered in the literature. General lower bounds have been shown mostly using tools from symmetric approximate cloning, while upper bounds were rarely considered in the general case (we are considering here only the case of 2-outcome POVMs).

Let us first discuss the results in the literature for small $g,d$. Using an argument which connects joint measurability with a violation of the CHSH inequality \cite{Wolf2009measurements}, it was shown in \cite{busch2013comparing} that $\mathrm{QC}_2 \subseteq \Gamma(2,d)$ for all $d \in \mathbb N$. Further, it was shown that also $\Gamma^{all}(2,2) \subseteq \mathrm{QC}_2$ \cite[Proposition 3]{busch2008approximate} and $\Gamma^{0}(2,2) \subseteq \mathrm{QC}_2$
\cite[Proposition 4]{busch2008approximate}. Therefore, for $d \geq 2$, an application of Proposition \ref{prop:higherdimension} yields
\begin{equation*}
\Gamma(2,d) = \Gamma^{all}(2,d) = \Gamma^{0}(2,d) = \mathrm{QC}_2.
\end{equation*}
Less was known in the $g \geq 3$ case, since the connection to the CHSH inequality no longer holds \cite{Bene2018, Hirsch2018}.
From \cite{busch1986unsharp} (see also \cite[Section 14.4]{busch2016quantum}), it follows that $\mathrm{QC}_3 \subseteq \Gamma^0(3,2)$.
Moreover, \cite{Brougham2007} show that $\Gamma^0(3,2) \subseteq \mathrm{QC}_3$, hence $\Gamma^0(3,2) = \mathrm{QC}_3$. This was improved in \cite[Section XI]{Pal2011} to show also $\Gamma^{all}(3,2) \subseteq \mathrm{QC}_3$. Using the results of this paper and combining them with the findings above, we can prove a stronger statement.
An application of Theorem \ref{thm:LB-Passer-et-al} together with Proposition \ref{prop:higherdimension} yields
\begin{equation*}
\Gamma(3,d) = \Gamma^{all}(3,d)= \Gamma^{0}(3,d) = \mathrm{QC}_3.
\end{equation*}
In the general case, the lower bounds came mainly from symmetric cloning \cite{Heinosaari2014}, see Proposition \ref{prop:Gamma-clone-vs-Gamma-and-hat-Gamma}. 
 
 \bigskip
 
Let us now discuss the contributions of this paper to both the theory of joint measurability and free spectrahedra. As discussed in the introduction, our main insight, the relation between the joint measurability of 2-outcome POVMs and the inclusion problem for the matrix diamond, allows us to translate results from one field to the other. Arguably one of the main results in this work is the lower bound obtained in Theorem \ref{thm:LB-Passer-et-al}. Our theorem is based on results about inclusion of free spectrahedra derived in \cite{passer2018minimal}, which can be transferred to the quantum setting.
Together with the upper bound from \cite{passer2018minimal} and the lower bound from \cite{helton2014dilations},
 we obtain a much better understanding of the sets $\Gamma(g,d)$.
We present in Figure \ref{fig:picture-Gamma} our current picture of the sets $\Gamma(g,d)$, or, equivalently, of the sets of inclusion constants $\Delta(g,d)$. The curves $d = 2^{\lceil (g-1)/2 \rceil}$ and $d=\sqrt g/2$ delimit three regions: above the first curve, we know that the set $\Gamma(g,d)$ is equal to $\mathrm{QC}_g$, the positive part of the unit Euclidean ball, while below the second curve, we know the inclusion $\mathrm{QC}_g \subseteq\Gamma(g,d)$ to be strict.
Below the curve $d = 2^{\lceil (g-1)/2 \rceil}$, the upper bound from Theorem \ref{thm:UB-anti-commuting-F} does not apply, while below the second curve, $d=\sqrt g /2$, the lower bound $1/(2d)$ in the symmetric case is larger than the lower bound $1/\sqrt g$ coming from the quarter-circle $\mathrm{QC}_g$. It is worthwhile to mention that the best lower bound for the sets $\Gamma(g,d)$ coming from symmetric cloning (second line in Proposition \ref{prop:Gamma-clone-vs-Gamma-and-hat-Gamma}) is worse than the best of the two bounds coming from spectrahedron theory: 
$$\frac{g+2d}{g(1+2d)} \leq \max\left\{\frac{1}{\sqrt g}, \frac{1}{2d}\right\}.$$
However, in the asymmetric regime, cloning gives non-trivial lower bounds, since the $1/(2d)$ bound from Proposition \ref{prop:heltonsymmetric} is not applicable for asymmetric tuples. We expect to obtain non-trivial results as soon as $\frac{g+2d}{g(1+2d)}>1 /\sqrt g$, that is as soon as $g > 4d^2$. As an example, we plot in Figure \ref{fig:clone-vs-QC}, for even $g=2g_0$, the set of points of the form $(\underbrace{s, \ldots, s}_{g_0 \text{ times}},\underbrace{t, \ldots, t}_{g_0 \text{ times}})$ belonging to $\Gamma(g,4)$ and to $\mathrm{QC}_4$.

\begin{figure}[htbp]
	\begin{center}
		\includegraphics[scale=.5]{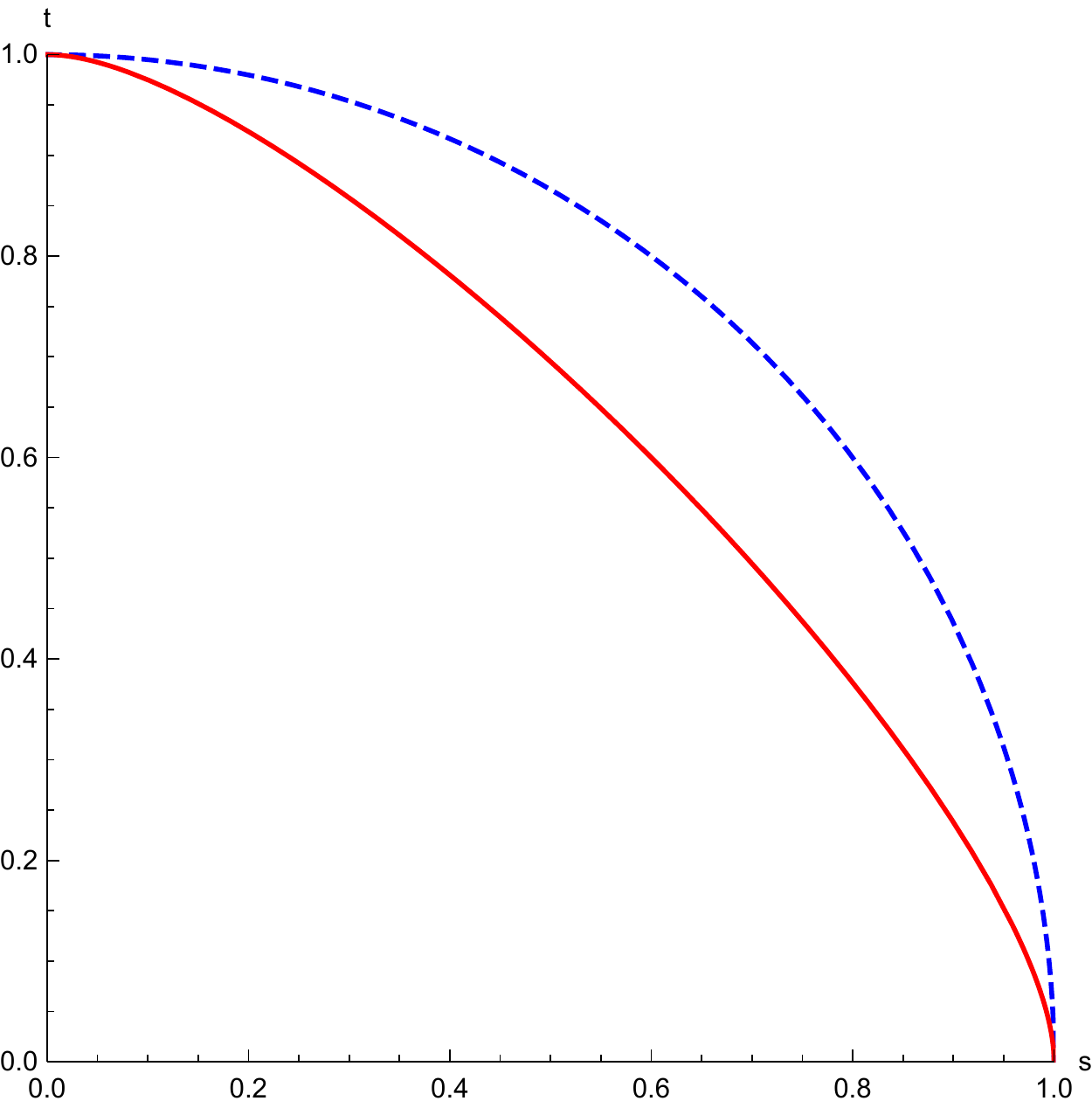} \qquad\qquad \includegraphics[scale=.5]{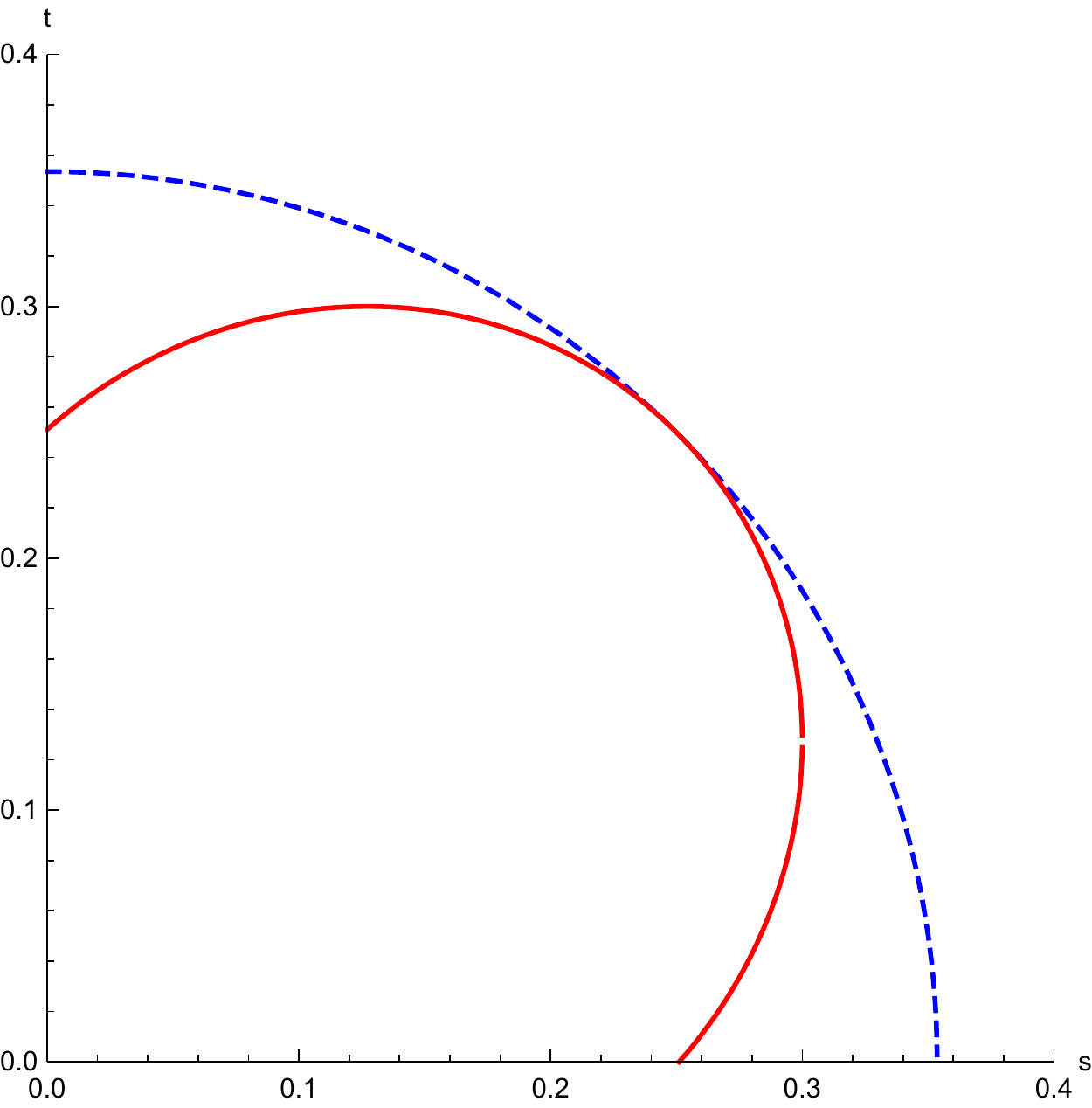} \\

	\includegraphics[scale=.5]{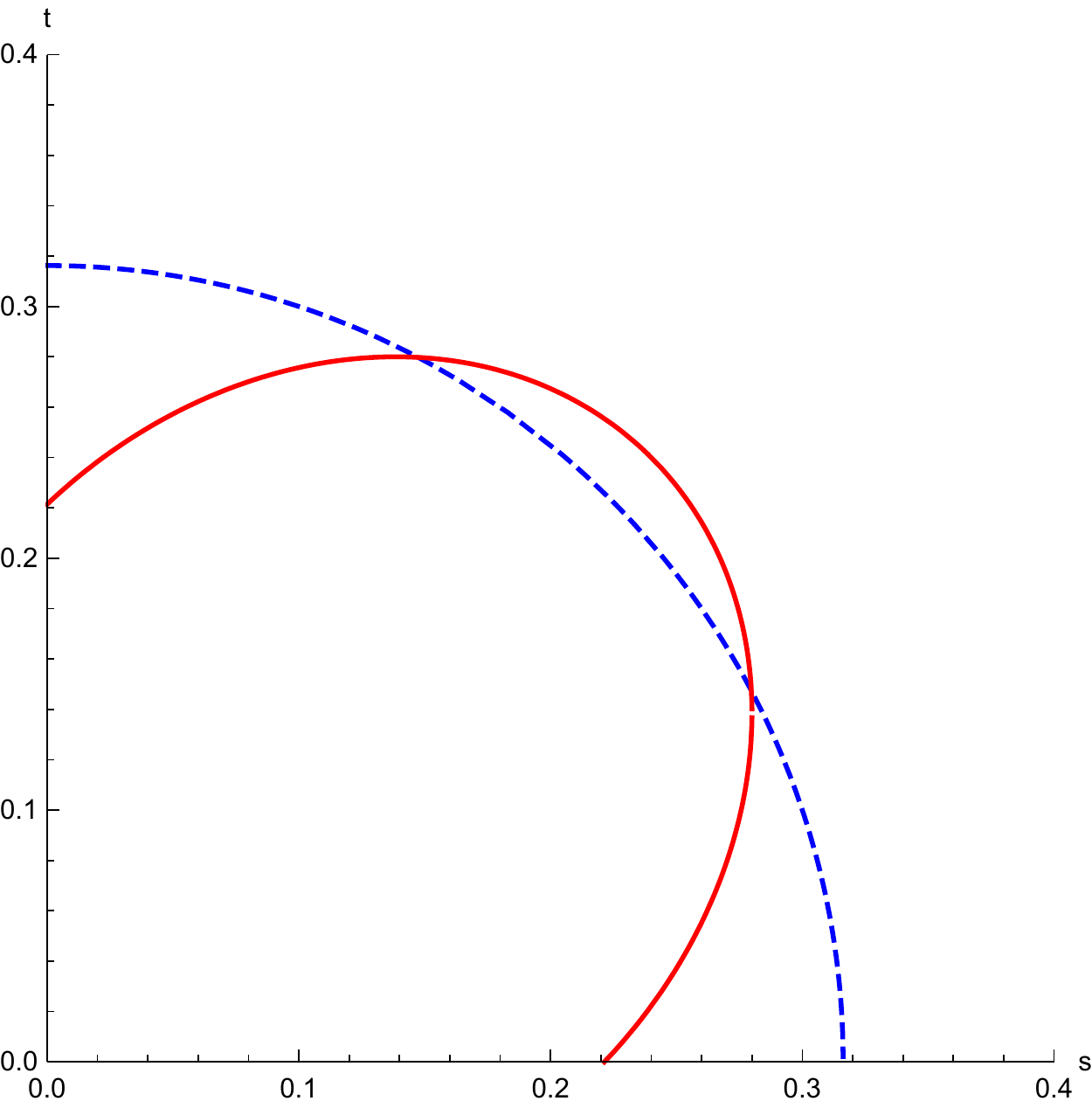} \qquad\qquad \includegraphics[scale=.5]{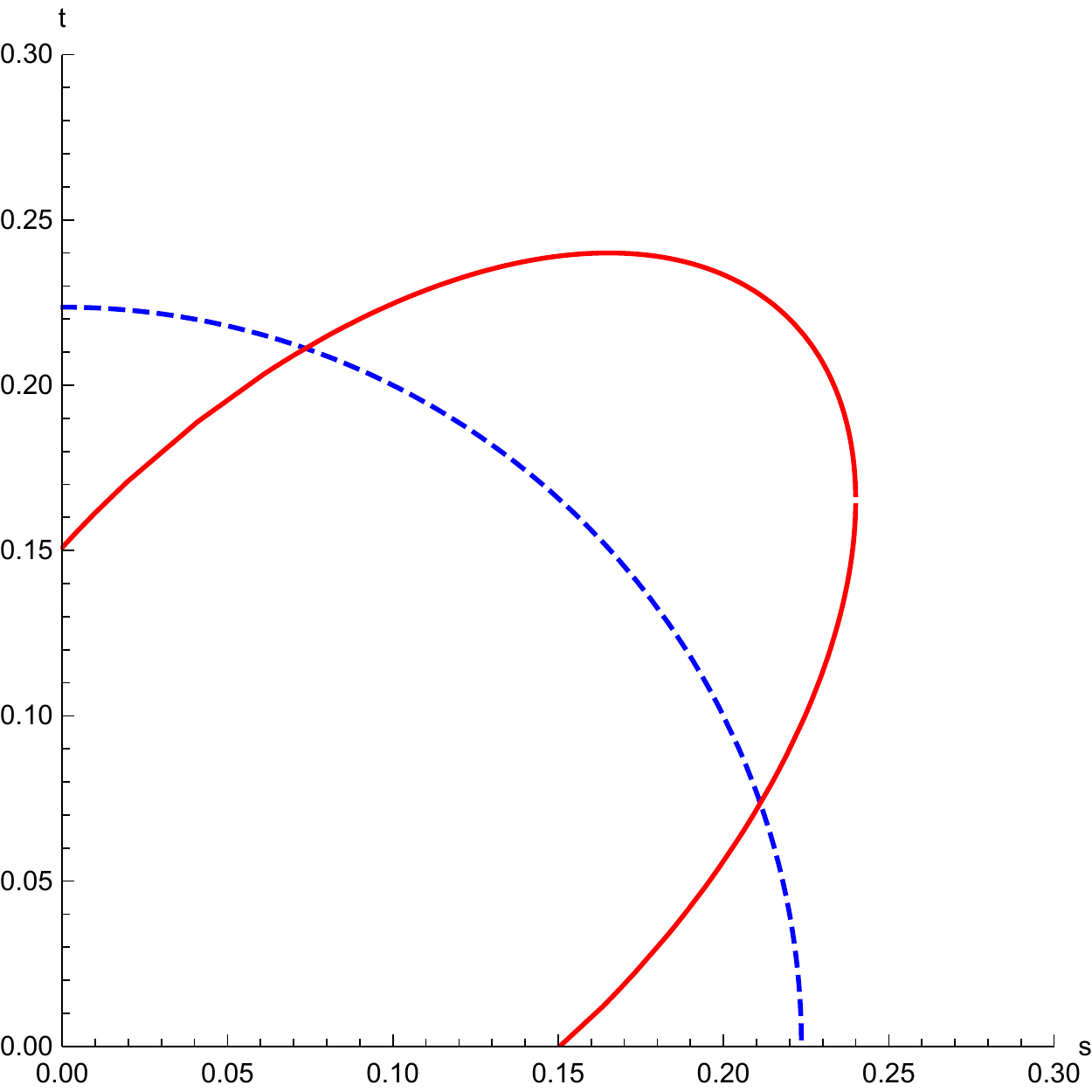} 
			\caption{The range of parameters $(s,t)$ for which the point $(s, \ldots, s, t, \ldots, t) \in [0,1]^g$, having the same number of $s$ and $t$, belongs to $\Gamma^{clone}(g,4)\subseteq \Gamma(g,2)$ (red curve) and $\mathrm{QG}_g$ (blue, dashed curve), for  $g=2,16,20,40$. Note that the bound from asymmetric cloning becomes	better for $g > 4d^2=16$.}
		\label{fig:clone-vs-QC}
	\end{center}
\end{figure}

The main question left open in this work is to compute the sets $\Gamma(g,d) = \Delta(g,d)$ for the range of parameters $g,d$ where the upper and lower bounds from \cite{passer2018minimal} do not agree.

\begin{question}
	Compute, for $d < 2^{\lceil (g-1)/2 \rceil}$, the sets $\Gamma(g,d) = \Delta(g,d)$.
	\end{question}

\begin{figure}[htbp]
	\begin{center}
		\includegraphics[scale=1]{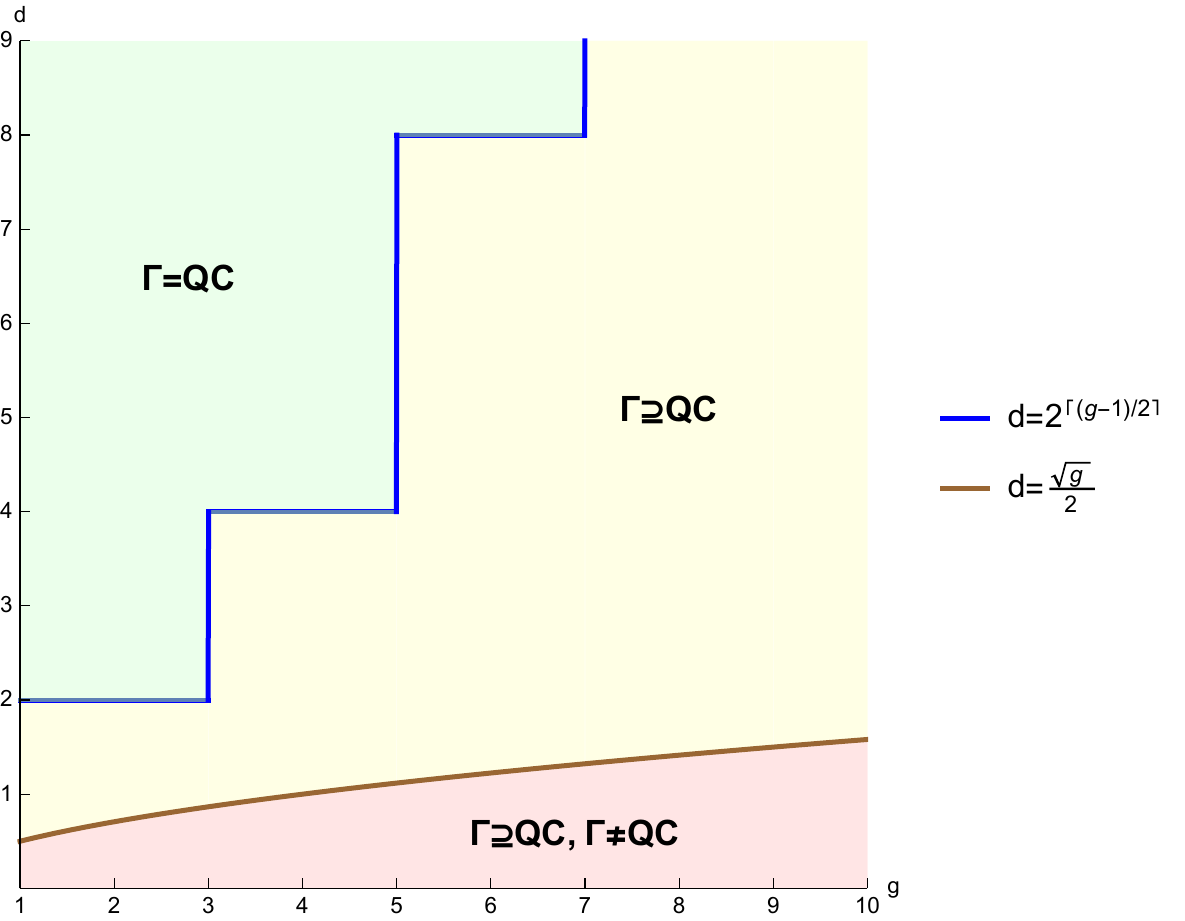}
		\caption{The sets $\Gamma(g,d)$.} 
		\label{fig:picture-Gamma}
	\end{center}
\end{figure}

Regarding the sets $\Gamma^{lin}(g,d)$, the lower bounds coming from cloning (see Section \ref{sec:cloning}) were already known in the literature; in particular, we recall that, in the symmetric case, 
$$\frac{g+d}{g(1+d)} (\underbrace{1, 1, \ldots, 1}_{g \text{ times}}) \subseteq \Gamma^{lin}(g,d).$$
Unfortunately, since there is no known inclusion of the $\Gamma$ sets into the $\Gamma^{lin}$ sets, we cannot use in this setting the very powerful lower bounds for $\Gamma(g,d)$ in Theorem \ref{thm:LB-Passer-et-al}; see also Remark \ref{rem:Gamma-subset-Gamma-lin}.

The upper bounds for the sets $\Gamma^{lin}(g,d)$ are new, and come from Zhu's criterion (Corollaries \ref{cor:Zhu-UB-Gamma-lin-MUB} and \ref{cor:Zhu-UB-Gamma-lin-SIC-POVM})
\begin{align*}
	\forall \text{$d$ prime power and }g \leq d+1, \qquad &\Gamma^{lin}(g,d) \subseteq \sqrt{d-1}\mathrm{QC}_g\\
	\forall \text{$d\in\{2,3,\ldots, 20, 23,27,29,30,\ldots\}$ and }g \leq (d-1)^2, \qquad &\Gamma^{lin}(g,d) \subseteq \sqrt{d-1}\mathrm{QC}_g
\end{align*}	
and from the work of Passer et al. (Theorem \ref{thm:UB-anti-commuting-F} and Proposition \ref{prop:inclusions-Gamma})
$$\forall d \geq 2^{\lceil (g-1)/2 \rceil},  \qquad \Gamma^{lin}(g,d) \subseteq\Gamma^{0}(g,d) \subseteq \mathrm{QC}_g.$$

Finally, regarding the sets $\Gamma^{all}$, allowing the most general type of noise, the new lower bounds obtained in this work are precisely the same as the ones for the sets $\Gamma$. Importantly, for all $g,d$, we have
	$$\mathrm{QC}_g \subseteq \Gamma^{all}(g,d).$$
Note that the bound above was previously known only in the case $g=2$. Moreover, in the symmetric case, we have 
	$$\max\left\{\frac{1}{\sqrt g}, \frac{1}{2d}\right\} (\underbrace{1, 1, \ldots, 1}_{g \text{ times}}) \subseteq \Gamma^{all}(g,d).$$
Upper bounds can be obtained via the map $F$ from Proposition \ref{prop:inclusions-Gamma} from upper bounds for $\Gamma$. For example, in the symmetric case, using Theorem \ref{thm:UB-anti-commuting-F} we get
$$\forall d \geq 2^{\lceil (g-1)/2 \rceil}, \qquad s (\underbrace{1, 1, \ldots, 1}_{g \text{ times}}) \subseteq \Gamma^{all}(g,d) \implies s \leq \frac{2}{1+\sqrt g}, $$
which is roughly two times the lower bounds above.

\subsection{The shape of the inclusion sets}

Hitherto, we have discussed the implications of this work for quantum information theory. However, our results also shed new light on $\Delta(g,d)$ and $\Delta^0(g,d)$, the sets of inclusion constants for the matrix diamond. 

As $\Delta^0(g,d) = \Gamma^0(g,d)$ by Theorem \ref{thm:jm=inclusion}, we have the lower bounds
\begin{align*}
\forall g,d \geq 1 \quad &\mathrm{QC}_g \subseteq \Gamma(g,d) \subseteq \Delta^0(g,d) \\
\forall g,d \geq 1 \quad& \Gamma^{clone}(g,d) \subseteq \Delta^0(g,d).
\end{align*}
Looking at the symmetric case, for which $s(1, \ldots, 1) \in \Gamma^{clone}$ if and only if $s \leq (g+d)/(g(1+d))$, we see that this is larger than $1/\sqrt{g}$ if and only if $d \leq \sqrt{g}$. Therefore, both lower bounds are non-trivial. We remark that for all $d$, $g \geq 1$, 
\begin{equation*}
\frac{1}{2d} \leq \frac{g+d}{g(1+d)}. 
\end{equation*}
Therefore, the result from symmetric cloning is always stronger than the one from \cite{helton2014dilations} (see Corollary \ref{cor:heltonspoint}). In terms of upper bounds, we only have that 
\begin{equation*}
\forall d \geq 2^{\lceil (g-1)/2 \rceil} \qquad \Delta^0(g,d) \subseteq \mathrm{QC}_g
\end{equation*}
from the work of Passer et al. (Theorem \ref{thm:UB-anti-commuting-F}).
 
Regarding the sets $\Delta(g,d)$ one obtains new lower bounds in the asymmetric setting, using cloning and the inclusion $\Gamma^{clone}(g,2d) \subseteq \Delta(g,d)$, see Figure \ref{fig:clone-vs-QC}.
\subsection{Outlook: POVMs with more outcomes}

In this work, we have focused on binary measurements. However, our methods also work for measurements with more outcomes. For example, consider the case of a binary POVM $\Set{E, I -E}$ and a POVM with three outcomes $\Set{F_1, F_2, I - F_1 - F_2}$. Then, it can be shown that joint measurability is equivalent to the inclusion problem of the free spectrahedra defined by
\begin{equation*}
A_1 = \frac{2}{3}\mathrm{diag}[2,-1,-1,2,-1,-1], ~ A_2 = \frac{2}{3}\mathrm{diag}[-1,2,-1,-1,2,-1], ~ A_3 =\mathrm{diag}[1,1,1,-1,-1,-1]
\end{equation*}
and
\begin{equation*}
B_1 = 2 E - I, \qquad B_2 = 2 F_1 - \frac{2}{3} I, \qquad B_3 = 2 F_2 - \frac{2}{3} I.
\end{equation*}
That is, $\mathcal D_A(1) \subseteq\mathcal D_B(1)$ if and only if $E, I-E, F_1, F_2$ and $I-F_1 - F_2$ are quantum effects and $\mathcal D_A \subseteq\mathcal D_B$ if and only if $\Set{E, I -E}$ and $\Set{F_1, F_2, I - F_1 - F_2}$ are jointly measurable POVMs. Inclusion constants then correspond again to mixing with $I/2$ (for the binary POVM) and $I/3$ (for the three-outcome POVM), respectively. This idea is explored in detail in \cite{bluhm2018jewel}.

\bigskip

\noindent {\it Acknowledgments.} A.B. would like to thank Michael M. Wolf for pointing out the connection between joint measurability and the CHSH inequality. A.B. acknowledges support from the ISAM Graduate Center at the Technische Universit\"at M\"unchen. I.N.~would like to thank Teiko Heinosaari for many inspiring discussions about joint measurability; in particular, he pointed out to us the fact in Remark \ref{rem:compression-jm-is-jm}. I.N.'s research has been supported by the ANR projects {StoQ} (grant number ANR-14-CE25-0003-01) and {NEXT} (grant number ANR-10-LABX-0037-NEXT), and by the PHC Sakura program (grant number 38615VA). I.N.~also acknowledges the hospitality of the Technische Universit\"at M\"unchen. A.B. and I.N. would like to thank the Institut Henri Poincar{\'e} in Paris for its hospitality and for hosting the trimester on ``Analysis in Quantum Information Theory'', during which part of this work was undertaken.

\bibliographystyle{alpha}
\bibliography{../spectralit}
\end{document}